\documentclass[%
  format=acmsmall,%
  review=false,%
  anonymous=false,%
  timestamp=false,%
  screen=true
]{acmart}

\usepackage{graphicx}
\usepackage{grffile}
\usepackage{longtable}
\usepackage{wrapfig}
\usepackage{rotating}
\usepackage[normalem]{ulem}
\usepackage{amsmath}
\usepackage{textcomp}
\usepackage{amssymb}
\usepackage{capt-of}
\usepackage{hyperref}

\usepackage[scaled]{beramono}
\usepackage[T1]{fontenc}
\usepackage[utf8]{inputenc}
\usepackage{mathtools}
\usepackage{amsopn}
\usepackage{enumitem}
\usepackage{xcolor}
\usepackage{colortbl}
\usepackage{semantic}
\setpremisesend{0.5em}

\usepackage{framed}
\usepackage{amsthm}
\usepackage{relsize}
\usepackage{pdflscape}
\usepackage{todonotes}
\usepackage{tabu}
\usepackage[capitalize]{cleveref}
\crefformat{evcanonii}{(#2#1#3)}
\crefformat{evcanoniii}{(#2#1#3)}
\crefname{equation}{}{}
\crefname{enumi}{}{}
\crefformat{enumi}{(#2#1#3)}
\crefname{enumii}{Case}{Cases}
\makeatletter
\newcommand{\eqlabel}[1]{\refstepcounter{equation}\label{#1}\textup{\tagform@{\theequation}}}
\makeatother
\newcounter{ProofCase}
\crefname{ProofCase}{Case}{Cases}

\newcommand{\pcase}[1]{Case~\refstepcounter{ProofCase}\label{#1}\arabic{ProofCase}}

\newcommand{\Rthree}{(R3)}

\definecolor{light-gray}{gray}{0.8}
\newcommand{\new}[1]{#1}

\newcommand{\fv}[1]{\textit{fv}\hspace{1pt}(#1)}

\newcommand{\ftv}[1]{\textit{ftv}\hspace{1pt}(#1)}
\newcommand{\fdv}[1]{\textit{fdv}\hspace{1pt}(#1)}
\newcommand{\dom}[1]{\textit{dom}\hspace{1pt}(#1)}
\newcommand{\cev}[1]{\textit{canonEv}\hspace{1pt}(#1)}

\newcommand{\TC}{\textit{TC}}
\newcommand{\QQ}{\mathcal{Q}}

\newcommand{\ty}{\colon{}}
\newcommand{\dict}[1]{#1\texttt{.Dict}}
\newcommand{\dictapp}[2]{\texttt{@\{}#1~\texttt{as}~#2\texttt{\}}}
\newcommand{\tev}[1][]{t_{\textit{ev}#1}}
\newcommand{\doubleplus}{+\kern-1.3ex+\kern0.8ex}

\newcommand{\decelab}[5]{#1\,;\,#2\vdash{}#3\ty{}#4\,\new{\rightsquigarrow{}#5}}
\newcommand{\decelabs}[5]{#1\,;\,#2\vdash{}\hspace*{-1ex}\raisebox{1.3ex}{\scriptsize\textit{spec}}\hspace*{1.2ex}#3\ty{}#4\,\new{\rightsquigarrow{}#5}}
\newcommand{\dec}[4]{#1\,;\,#2\vdash{}#3\ty{}#4}
\newcommand{\dectop}[4]{#1\,;\,#2\vdash{}\hspace*{-1.5ex}\raisebox{1.7ex}{\scriptsize\textit{top}}\,#3\ty{}#4}
\newcommand{\dectopelab}[5]{\dectop{#1}{#2}{#3}{#4}\,\new{\rightsquigarrow{}#5}}
\newcommand{\elab}[3]{#2\rightsquigarrow{}\hspace*{-2.3ex}\raisebox{1.3ex}{\scriptsize{}${#1}$}\hspace*{1.2ex}#3}
\newcommand{\elabqg}[3]{#1\,;\,#2\hspace{1.0ex}\rightsquigarrow{}\hspace*{-3.6ex}\raisebox{1.3ex}{\scriptsize{}$Q\,;\,\Gamma$}\hspace*{0.9ex}#3}
\newcommand{\elabqqg}[3]{#1\,;\,#2\hspace{1.0ex}\rightsquigarrow{}\hspace*{-3.6ex}\raisebox{1.3ex}{\scriptsize{}$\QQ\,;\,\Gamma$}\hspace*{0.9ex}#3}
\newcommand{\entails}[2]{#1\,\Vdash\,#2}
\newcommand{\entailselab}[3]{#1\,\Vdash\,#2\,\new{;\,#3}}
\newcommand{\nentails}[2]{#1\,\nVdash\,#2}
\newcommand{\sftc}[3]{#1\vdash{}#2\ty{}#3}
\newcommand{\tred}[2]{#1 \approx #2}
\newcommand{\taueq}[3]{#2 \simeq\hspace*{-1.8ex}\raisebox{1.3ex}{\scriptsize{}${#1}$}\hspace*{1.2ex} #3}

\newcommand{\mg}[4]{#1\,;\,#2 \geq #3\,;\,#4}
\newcommand{\eqq}{:\!:=}

\newcommand{\introforall}{\forall\textsc{i}}
\newcommand{\elimforall}{\forall\textsc{e}}
\newcommand{\introct}{\Rightarrow\!\!\textsc{i}}
\newcommand{\elimct}{\Rightarrow\!\!\textsc{e}}
\newcommand{\betared}{$\beta$-\textsc{Reduction}}
\newcommand{\etared}{$\eta$-\textsc{Reduction}}

\let\phi\varphi

\clubpenalty = 10000
\widowpenalty = 10000
\displaywidowpenalty = 10000

\setlist[itemize]{noitemsep}

\newtheorem{definition}{Definition}

\newtheorem{example}{Example}[section]

\setcopyright{none}
\acmConference{}{}{}
\acmBooktitle{}
\acmYear{}
\acmVolume{}
\acmNumber{}
\acmMonth{}
\acmISBN{}
\acmDOI{}
\acmPrice{}
\acmISBN{}
\acmBadgeR{}
\startPage{1}
\citestyle{acmnumeric}
\settopmatter{printfolios=false,printccs=false,printacmref=false}
\authorsaddresses{}
\renewcommand\footnotetextcopyrightpermission[1]{} %
\pagestyle{plain}

\newcommand{\authorr}[2]{\author{#1}
\affiliation[obeypunctuation=true]{%
  \institution{imec-DistriNet, KU Leuven},
  \country{Belgium}
}
\email{#2@cs.kuleuven.be}}

\allowdisplaybreaks

\makeatletter
\@ifundefined{lhs2tex.lhs2tex.sty.read}%
  {\@namedef{lhs2tex.lhs2tex.sty.read}{}%
   \newcommand\SkipToFmtEnd{}%
   \newcommand\EndFmtInput{}%
   \long\def\SkipToFmtEnd#1\EndFmtInput{}%
  }\SkipToFmtEnd

\newcommand\ReadOnlyOnce[1]{\@ifundefined{#1}{\@namedef{#1}{}}\SkipToFmtEnd}
\usepackage{amstext}
\usepackage{amssymb}
\DeclareFontFamily{OT1}{cmtex}{}
\DeclareFontShape{OT1}{cmtex}{m}{n}
  {<5><6><7><8>cmtex8
   <9>cmtex9
   <10><10.95><12><14.4><17.28><20.74><24.88>cmtex10}{}
\DeclareFontShape{OT1}{cmtex}{m}{it}
  {<-> ssub * cmtt/m/it}{}

\DeclareFontShape{OT1}{cmtt}{bx}{n}
  {<5><6><7><8>cmtt8
   <9>cmbtt9
   <10><10.95><12><14.4><17.28><20.74><24.88>cmbtt10}{}
\DeclareFontShape{OT1}{cmtex}{bx}{n}
  {<-> ssub * cmtt/bx/n}{}

\newcommand{\Conid}[1]{\mathit{#1}}
\newcommand{\Varid}[1]{\mathit{#1}}
\newcommand{\anonymous}{\kern0.06em \vbox{\hrule\@width.5em}}

\renewcommand{\geq}{\geqslant}
\usepackage{polytable}

\@ifundefined{mathindent}%
  {\newdimen\mathindent\mathindent\leftmargini}%
  {}%

\def\resethooks{%
  \global\let\SaveRestoreHook\empty
  \global\let\ColumnHook\empty}
\newcommand*{\savecolumns}[1][default]%
  {\g@addto@macro\SaveRestoreHook{\savecolumns[#1]}}
\newcommand*{\restorecolumns}[1][default]%
  {\g@addto@macro\SaveRestoreHook{\restorecolumns[#1]}}
\newcommand*{\aligncolumn}[2]%
  {\g@addto@macro\ColumnHook{\column{#1}{#2}}}

\resethooks

\newcommand{\onelinecommentchars}{\quad-{}- }
\newcommand{\commentbeginchars}{\enskip\{-}
\newcommand{\commentendchars}{-\}\enskip}

\newcommand{\visiblecomments}{%
  \let\onelinecomment=\onelinecommentchars
  \let\commentbegin=\commentbeginchars
  \let\commentend=\commentendchars}

\newcommand{\invisiblecomments}{%
  \let\onelinecomment=\empty
  \let\commentbegin=\empty
  \let\commentend=\empty}

\visiblecomments

\newlength{\blanklineskip}
\setlength{\blanklineskip}{0.66084ex}

\newcommand{\hsindent}[1]{\quad}%
\let\hspre\empty
\let\hspost\empty

\EndFmtInput
\makeatother
\ReadOnlyOnce{polycode.fmt}%
\makeatletter

\newcommand{\hsnewpar}[1]%
  {{\parskip=0pt\parindent=0pt\par\vskip #1\noindent}}

\newcommand{\hscodestyle}{}

\newcommand{\sethscode}[1]%
  {\expandafter\let\expandafter\hscode\csname #1\endcsname
   \expandafter\let\expandafter\endhscode\csname end#1\endcsname}

  {\par\noindent
   \advance\leftskip\mathindent
   \hscodestyle
   \let\\=\@normalcr
   \let\hspre\(\let\hspost\)%
   \pboxed}%
  {\endpboxed\)%
   \par\noindent
   \ignorespacesafterend}

  {\hsnewpar\abovedisplayskip
   \advance\leftskip\mathindent
   \hscodestyle
   \let\hspre\(\let\hspost\)%
   \pboxed}%
  {\endpboxed%
   \hsnewpar\belowdisplayskip
   \ignorespacesafterend}

  {\hsnewpar\abovedisplayskip
   \advance\leftskip\mathindent
   \hscodestyle
   \let\\=\@normalcr
   \(\pboxed}%
  {\endpboxed\)%
   \hsnewpar\belowdisplayskip
   \ignorespacesafterend}

\newcommand{\plainhs}{\sethscode{plainhscode}}

\plainhs

  {\hsnewpar\abovedisplayskip
   \advance\leftskip\mathindent
   \hscodestyle
   \let\\=\@normalcr
   \(\parray}%
  {\endparray\)%
   \hsnewpar\belowdisplayskip
   \ignorespacesafterend}

  {\parray}{\endparray}

  {\(\parray}{\endparray\)}

\def\codeframewidth{\arrayrulewidth}
\RequirePackage{calc}

  {\parskip=\abovedisplayskip\par\noindent
   \hscodestyle
   \arrayrulewidth=\codeframewidth
   \tabular{@{}|p{\linewidth-2\arraycolsep-2\arrayrulewidth-2pt}|@{}}%
   \hline\framedhslinecorrect\\{-1.5ex}%
   \let\endoflinesave=\\
   \let\\=\@normalcr
   \(\pboxed}%
  {\endpboxed\)%
   \framedhslinecorrect\endoflinesave{.5ex}\hline
   \endtabular
   \parskip=\belowdisplayskip\par\noindent
   \ignorespacesafterend}

\newcommand{\framedhslinecorrect}[2]%
  {#1[#2]}

  {\(\def\column##1##2{}%
   \let\>\undefined\let\<\undefined\let\\\undefined
   \newcommand\>[1][]{}\newcommand\<[1][]{}\newcommand\\[1][]{}%
   \def\fromto##1##2##3{##3}%
   }{\) }%

  {\let\orighscode=\hscode
   \let\origendhscode=\endhscode
   \def\endhscode{\def\hscode{\endgroup\def\@currenvir{hscode}\\}\begingroup}
   \orighscode\def\hscode{\endgroup\def\@currenvir{hscode}}}%
  {\origendhscode
   \global\let\hscode=\orighscode
   \global\let\endhscode=\origendhscode}%

\makeatother
\EndFmtInput
\ReadOnlyOnce{forall.fmt}%
\makeatletter

\let\HaskellResetHook\empty
\newcommand*{\AtHaskellReset}[1]{%
  \g@addto@macro\HaskellResetHook{#1}}
\newcommand*{\HaskellReset}{\HaskellResetHook}

\newcommand\hsforall{\global\let\hsdot=\hsperiodonce}
\newcommand*\hsperiodonce[2]{#2\global\let\hsdot=\hscompose}
\newcommand*\hscompose[2]{#1}

\AtHaskellReset{\global\let\hsdot=\hscompose}

\HaskellReset

\makeatother
\EndFmtInput
\ReadOnlyOnce{exists.fmt}%
\makeatletter

\newcommand\hsexists{\global\let\hsdot=\hsperiodonce}

\AtHaskellReset{\global\let\hsdot=\hscompose}

\HaskellReset

\makeatother
\EndFmtInput

\begin{document}

\title{Coherent Explicit Dictionary Application for Haskell}
\subtitle{Formalisation and Coherence Proof}
\authorr{Thomas Winant}{thomas.winant}
\authorr{Dominique Devriese}{dominique.devriese}

\maketitle
\thispagestyle{plain}

\section{Introduction}%
\label{sec:introduction}

This report accompanies \cite{ThePaper} and describes the full formalisation and coherence proof of the system described in \cite{ThePaper}.
We refer to \cite{ThePaper} for a more thorough introduction, motivation, and discussion.

The formalisation of Haskell we present here is based on~\cite{OutsideIn,AssociatedTypeSynonyms}.
Besides type-checking rules, we also present \emph{elaboration} rules, which detail the translation from the source language to the target language, System~F\@.
In practice, GHC's target language is GHC Core, but as our formalisation does not include local assumptions, kinding, type families, etc., we restrict ourselves to System~F.
During elaboration things implicit in the source language, e.g.\ the passing of dictionaries (dictionary translation), and type abstraction and application, are made explicit in the target language.
We use the term \emph{evidence} for dictionary instances in the target language.

We first present the simple target language, then the source language along with the elaboration from the latter to the former.

Both languages have the following syntax in common:
\[
\begin{tabu} to\linewidth{lX}
  x, y, f & \textit{Variables}                     \\
  a, b    & \textit{Type variables (skolems)}      \\
  \TC     & \textit{Type classes}
\end{tabu}
\]
A type class (\TC) does not include its type arguments, e.g., \ensuremath{\Conid{Eq}} is a $\TC$, \ensuremath{\Conid{Eq}\;\Varid{a}} not.
For simplicity, we only consider type classes with exactly one argument.

\section{Target Language}%
\label{sec:target-lang}

\begin{figure}[tb]
\begin{framed}
\[
\begin{array}{lrlr}
  \upsilon       & \eqq & a \mid \upsilon_1 \rightarrow \upsilon_2 \mid \forall\,a.\,\upsilon \mid \dict{\TC}~\upsilon         & \textit{Type}                             \\
  t       & \eqq & x \mid \lambda{}(x\ty{}\upsilon).\,t \mid t_1~t_2 \mid \Lambda{}a.\,t \mid t~\upsilon & \textit{Term}                             \\
  \tev    & \eqq & d \mid \tev~~\overline{\upsilon}~~\overline{\tev}         & \textit{Evidence Term}                    \\
  \Gamma_\upsilon     & \eqq & \epsilon \mid (x\ty{}\upsilon), \Gamma_\upsilon \mid a, \Gamma_\upsilon                     & \textit{Typing environment}               \\
  \eta       & \eqq & [\cdot] \mid [d \mapsto \tev, \eta]                             & \textit{Dictionary evidence substitution} \\
                                                                                                               \\
  d       &      &                                                 & \textit{Dictionary evidence variables}    \\
  \fdv{\cdot} &      &                                                 & \textit{Free dictionary variables}        \\
\end{array}
\]
\end{framed}
\caption{Syntax of the target language}%
\label{fig:syntax-target-lang}
\end{figure}

\cref{fig:syntax-target-lang} shows the mostly standard syntax of the target language, System~F.
The type $\dict{\TC}~\upsilon$ is the type of the dictionary record corresponding to a type class $\TC~\upsilon$.
Note that dictionary evidence variables ($d$) are also variables, we simply use a different letter for clarity.
An evidence term $\tev$ should be seen as a regular term of a certain format: either a dictionary variable $d$, e.g., \ensuremath{\Varid{eqInt}} with type \ensuremath{\Conid{Eq}\;\Conid{Int}}, or an application of types and other evidence terms to an evidence term, e.g., \ensuremath{\Varid{eqList}\;\Conid{Int}\;\Varid{eqInt}} where \ensuremath{\Varid{eqList}} has type \ensuremath{\forall \Varid{a}\hsforall .\;\Conid{Eq}\;\Varid{a}\Rightarrow \Conid{Eq}\;[\mskip1.5mu \Varid{a}\mskip1.5mu]}.
Dictionary evidence substitution ($\eta$) will be used in the proof, see \cref{sec:coherence-proof}, to substitute dictionary evidence with evidence terms.
An empty substitution is written as $[\cdot]$.

\begin{figure}[tb]
\begin{framed}
\raggedright\boxed{\sftc{\Gamma_\upsilon}{t}{\upsilon}}
\vspace{-1em}
\[
\inference{(x\ty{}\upsilon) \in \Gamma_\upsilon}{\sftc{\Gamma_\upsilon}{x}{\upsilon}}[\textsc{Var}]
\]
\\
\[
\inference{\sftc{(x\ty{}\upsilon), \Gamma_\upsilon}{t}{\upsilon'}}
{\sftc{\Gamma_\upsilon}{\lambda{}(x\ty{}\upsilon).\,t}{\upsilon \rightarrow \upsilon'}}[\textsc{Abs}]
\hspace{4em}
\inference{\sftc{\Gamma_\upsilon}{t_1}{\upsilon \rightarrow \upsilon'} & \sftc{\Gamma_\upsilon}{t_2}{\upsilon}}
{\sftc{\Gamma_\upsilon}{t_1~t_2}{\upsilon'}}[\textsc{App}]
\]
\\[0.5em]
\[
\inference{\sftc{a, \Gamma_\upsilon}{t}{\upsilon}}
{\sftc{\Gamma_\upsilon}{\Lambda{}a.\,t}{\forall{}a.\,\upsilon}}[\textsc{TyAbs}]
\hspace{4em}
\inference{\sftc{\Gamma_\upsilon}{t}{\forall{}a.\,\upsilon'}}
{\sftc{\Gamma_\upsilon}{t~\upsilon}{[a \mapsto \upsilon]\upsilon'}}[\textsc{TyApp}]
\]
\end{framed}
\caption{Declarative typing rules of the target language}%
\label{fig:typing-rules-target-lang}
\end{figure}

\cref{fig:typing-rules-target-lang} shows the standard declarative typing rules for our target language, System F.

\section{Source Language Syntax}%
\label{sec:source-lang}

\begin{figure}[tb]
\begin{framed}
\[
\begin{array}{lrlr}
  \tau       & \eqq & a \mid \tau_1 \rightarrow \tau_2 \mid \dict{\TC}~\tau                      & \textit{Monotype}               \\
  \rho       & \eqq & C \Rightarrow \rho \mid \tau                                         & \textit{Qualfied type}          \\
  \sigma       & \eqq & \forall{}a.\,\sigma \mid \rho                                      & \textit{Type scheme}            \\
  C       & \eqq & \TC~\tau                                             & \textit{Type class Constraint}  \\
  Q       & \eqq & \epsilon \mid Q_1 \wedge Q_2 \mid \new{\tev \ty} C                  & \textit{Constraints}            \\
  A       & \eqq & \forall\,\overline{a}.~\overline{C} \Rightarrow C                 & \textit{Axiom scheme}           \\
  \QQ     & \eqq & \epsilon \mid \QQ_1 \wedge \QQ_2 \mid \new{\tev \ty} A              & \textit{Top-level axiom scheme} \\
  e       & \eqq & x \mid \lambda{}x.\,e \mid e_1~e_2 \mid e_1~\dictapp{e_2}{\TC~a} & \textit{Term}                   \\
  \Gamma       & \eqq & \epsilon \mid (x\ty{}\sigma), \Gamma \mid a, \Gamma                           & \textit{Typing environment}     \\
  \theta       & \eqq & [\cdot] \mid [a \mapsto \tau, \theta]                                  & \textit{Type substitution}      \\
                                                                                                       \\
  \ftv{\cdot} &      &                                                   & \textit{Free type variables}    \\

\end{array}
\]
\end{framed}
\caption{Syntax of the source language}%
\label{fig:syntax-source-lang}
\end{figure}

\cref{fig:syntax-source-lang} shows the syntax of the source language, based on~\cite{OutsideIn,AssociatedTypeSynonyms}.
We omit parts that are not important with regards to explicit dictionary application: \ensuremath{\mathbf{case}} expressions, \ensuremath{\mathbf{let}} bindings, equality constraints, etc.
We define $Q$ as a tree of conjunctions with $\epsilon$ as the empty leaf.
The structure and order of the tree is irrelevant and any conjunction $Q \wedge \epsilon$ can be treated as $Q$.
When convenient, we use the notation $\overline{Q}$, or $\overline{\tev \ty C}$ to refer to a list of constraints obtained by flattening the tree.
The two formats are interchangeable.
The same is true for $\QQ$.
The top-level axiom scheme ($\QQ$) contains instance declarations, for example:
\begin{align*}
  &(\textit{\$fEqMaybe} \ty \forall{}a.\,\textit{Eq}~a \Rightarrow \textit{Eq}~(\textit{Maybe}~a))~\wedge \\
  &(\textit{\$fEqInt} \ty \textit{Eq}~\textit{Int})
\end{align*}
Any class constraint C can be considered a degenerate axiom scheme A with $\overline{a}$ and $\overline{C}$ empty.
Similarly, any Q can be considered a degenerate $\QQ$.
Throughout this report, we often assume $\QQ$ is implicitly available.

For convenience, we use the following notations:
\[
\begin{array}{lrlrl}
\forall\,\overline{a}.\,\rho & = & \forall\,a_1\ldots{}a_n.\,\rho & = & \forall\,a_1.\,\ldots\,.\,\forall\,a_n.\,\rho \\
\forall\,\overline{a}.\,\upsilon & = & \forall\,a_1\ldots{}a_n.\,\upsilon & = & \forall\,a_1.\,\ldots\,.\,\forall\,a_n.\,\upsilon \\
\overline{C}        & = & C_1 \Rightarrow \ldots \Rightarrow C_n    & = & (C_1, \ldots, C_n) \\
\end{array}
\]

\section{Elaboration}%
\label{sec:elaboration}

\begin{figure}[tb]
\begin{framed}

\raggedright\boxed{\elab{\cdot}{\cdot}{\cdot}}
\vspace{-1em}
\[
\inference{\elab{\rho}{\rho}{\upsilon}}
{\elab{\sigma}{\rho}{\upsilon}}
\hspace{4em}
\inference{\elab{\sigma}{\sigma}{\upsilon}}
{\elab{\sigma}{\forall{}a.\,\sigma}{\forall{}a.\,\upsilon}}
\]
\vspace{0.75em}
\[
\inference{%
\elab{\tau}{\tau}{\upsilon}
}
{\elab{\rho}{\tau}{\upsilon}}
\hspace{4em}
\inference{%
\elab{C}{C}{\upsilon_1} &
\elab{\rho}{\rho}{\upsilon_2}
}
{\elab{\rho}{C \Rightarrow \rho}{\upsilon_1 \rightarrow \upsilon_2}}
\]
\vspace{0.75em}
\[
\inference{}{\elab{\tau}{a}{a}}
\hspace{3em}
\inference{%
\elab{\tau}{\tau_1}{\upsilon_1} &
\elab{\tau}{\tau_2}{\upsilon_2}}
{\elab{\tau}{\tau_1 \rightarrow \tau_2}{\upsilon_1 \rightarrow \upsilon_2}}
\hspace{3em}
\inference{\elab{\tau}{\tau}{\upsilon}}
{\elab{\tau}{\dict{\TC}~\tau}{\dict{\TC}~\upsilon}}
\]
\vspace{0.75em}
\[
\inference{\elab{\tau}{\tau}{\upsilon}}
{\elab{C}{\TC~\tau}{\dict{\TC}~\upsilon}}
\]
\vspace{0.75em}
\[
\inference{}
{\elab{Q}{\epsilon}{\epsilon}}
\hspace{3em}
\inference{%
\elab{C}{C}{\upsilon}
}
{\elab{Q}{\tev \ty C}{\tev \ty \upsilon}}
\hspace{3em}
\inference{%
\elab{Q}{Q_1}{Q_{\upsilon}^1} &
\elab{Q}{Q_2}{Q_{\upsilon}^2}
}
{\elab{Q}{Q_1 \wedge Q_2}{Q_{\upsilon}^1 \wedge Q_{\upsilon}^2}}
\]
\vspace{0.75em}
\[
\inference{\elab{A}{\overline{C} \Rightarrow C}{\upsilon}}
{\elab{A}{\forall\,\overline{a}.\,\overline{C} \Rightarrow C}{\forall\,\overline{a}.\,\upsilon}}
\hspace{3em}
\inference{\elab{C}{C}{\upsilon}}
{\elab{A}{C}{\upsilon}}
\hspace{3em}
\inference{%
\elab{C}{C}{\upsilon} &
\elab{A}{\overline{C} \Rightarrow C'}{\upsilon'}
}
{\elab{A}{C \Rightarrow \overline{C} \Rightarrow C'}{\upsilon \rightarrow \upsilon'}}
\]
\vspace{0.75em}
\[
\inference{}
{\elab{\QQ}{\epsilon}{\epsilon}}
\hspace{3em}
\inference{%
\elab{A}{A}{\upsilon}
}
{\elab{\QQ}{\tev \ty A}{\tev \ty \upsilon}}
\hspace{3em}
\inference{%
\elab{\QQ}{\QQ_1}{\QQ_{\upsilon}^1} &
\elab{\QQ}{\QQ_2}{\QQ_{\upsilon}^2}
}
{\elab{\QQ}{\QQ_1 \wedge \QQ_2}{\QQ_{\upsilon}^1 \wedge \QQ_{\upsilon}^2}}
\]
\vspace{0.75em}
\[
\inference{}
{\elab{\Gamma}{\epsilon}{\epsilon}}
\hspace{3em}
\inference{%
\elab{\Gamma}{\Gamma}{\Gamma_\upsilon}
}
{\elab{\Gamma}{a, \Gamma}{a, \Gamma_\upsilon}}
\hspace{3em}
\inference{%
\elab{\sigma}{\sigma}{\upsilon} &
\elab{\Gamma}{\Gamma}{\Gamma_\upsilon}
}
{\elab{\Gamma}{(x\ty{}\sigma), \Gamma}{(x\ty{}\upsilon), \Gamma_\upsilon}}
\]
\vspace{0.75em}
\[
\inference{%
\elab{Q}{Q}{Q_\upsilon} &
\elab{\Gamma}{\Gamma}{\Gamma_\upsilon}
}
{\elabqg{Q}{\Gamma}{Q_\upsilon \doubleplus \Gamma_\upsilon}}
\hspace{6em}
\inference{%
\elab{\QQ}{\QQ}{\QQ_\upsilon} &
\elab{\Gamma}{\Gamma}{\Gamma_\upsilon}
}
{\elabqqg{\QQ}{\Gamma}{\QQ_\upsilon \doubleplus \Gamma_\upsilon}}
\]
\end{framed}
\caption{Elaboration rules}%
\label{fig:elaboration-rules}
\end{figure}

Elaboration is the translation from the source language to the target language.
Besides source programs (which are elaborated according to their typing rules, see \cref{sec:typing-rules}), source types, constraints, axiom schemes, contexts, \ldots are also translated to the target language.
These translations are defined using a number of syntax-directed judgments of the form $\elab{\cdot}{\cdot}{\cdot}$ listed below in \cref{fig:elaboration-rules}, where the symbol on the arrow indicates the type of the syntax element.

We use the $_\upsilon$-suffix to denote a target-language variant of a source-level entity, e.g., $Q_\upsilon$ vs. $Q$.
As they are not used outside these elaboration rules, except for $\Gamma_\upsilon$ (see \cref{fig:syntax-target-lang}), we do not explicitly define them.
We use an overbar to indicate that a list is elaborated, e.g., $\elab{C}{\overline{C}}{\overline{\upsilon}}$.

In the target language, constraints and top-level axiom schemes are regular types, and evidence terms are regular target-level terms.
Therefore, we simply translate $Q$ and $\QQ$ to regular value bindings using the judgments $\elabqg{Q}{\Gamma}{\Gamma_\upsilon}$ and $\elabqqg{\QQ}{\Gamma}{\Gamma_\upsilon}$.

\section{Declarative Typing Rules}%
\label{sec:decl-typing-rules}

The typing rules depend on the \emph{constraint entailment} relation~\cite{OutsideIn}, which we will discuss first.

\subsection{Constraint Entailment}%
\label{sec:entailment}

\begin{figure}[tb]
\begin{framed}
\[
\begin{array}{llr}
\text{Reflexivity} & \entails{\QQ \wedge Q}{Q} & \text{(R1)} \\
\text{Transitivity} & \entails{\QQ \wedge Q_1}{Q_2} \text{ and } \entails{\QQ \wedge Q_2}{Q_3} \text{ implies } \entails{\QQ \wedge Q_1}{Q_3} & \text{(R2)} \\
\text{Substitution} & \entails{\QQ}{Q_2} \text{ implies } \entails{\theta\QQ}{\theta{}Q_2} \text{ where } \theta \text{ is a type substitution} & \text{(R3)} \\
\ldots \\
\text{Conjunctions} & \entails{\QQ}{Q_1} \text{ and } \entails{\QQ}{Q_2} \text{ implies } \entails{\QQ}{Q_1 \wedge Q_2} & \text{(R7)}
\end{array}
\]
\end{framed}
\caption{Entailment requirements}\label{fig:entailment-requirements}
\end{figure}

\[
\boxed{\entails{\QQ}{Q}}
\]
This relation can be read as: ``from the top-level axiom scheme $\QQ$, we can derive the constraints $Q$.''
Following \textsc{OutsideIn(X)}, we leave the details of entailment deliberately unspecified, because it is a parameter of the type system~\cite{OutsideIn}.
\cref{fig:entailment-requirements} lists the requirements of the entailment relation~\cite{OutsideIn}.
Rules about type equalities are omitted, hence the jump from R3 to R7.

Compared to \textsc{OutsideIn(X)}, we extend this relation to produce evidence for each entailment, i.e.\ the $\tev$ in $\tev \ty C$, which we will use to elaborate typing judgements.

Remember that any $Q$ is a degenerate $\QQ$, so $\entails{Q_1}{Q_2}$ is also valid.

When $\entails{\QQ}{Q}$, then for all $(\tev \ty C) \in Q$, $\sftc{\Gamma_\upsilon}{\tev}{\upsilon_C}$ for any $\Gamma$, where $\elabqqg{\QQ}{\Gamma}{\Gamma_\upsilon}$ and $\elab{C}{C}{\upsilon_C}$.
In other words, each evidence term $\tev$ will have the type of the elaborated constraint $C$ in a context obtained by elaborating $\QQ$ in combination with any $\Gamma$.

Sometimes we write $\sftc{\Gamma_\upsilon}{\tev}{C}$ instead of $\upsilon_C$ for clarity.

Another convenient notation we use is $\sftc{\Gamma_\upsilon}{\overline{\tev}}{Q}$ where the $\overline{\tev}$ correspond to the evidence terms in $Q$, which will be of the form $\tev[1] \ty C_1 \wedge \tev[2] \ty C_2 \wedge \ldots$.

\subsection{Typing Rules}%
\label{sec:typing-rules}

\begin{figure}[tb]
\begin{framed}
\small
\raggedright\boxed{\decelab{Q}{\Gamma}{e}{\sigma}{t}}

\[
\inference{%
(x\ty{}\sigma) \in \Gamma
}
{\decelab{Q}{\Gamma}{x}{\sigma}{x}}[\textsc{Var}]
\hspace{3em}
\inference{%
\decelab{Q}{(x\ty{}\tau_1), \Gamma}{e}{\tau_2}{t} &
\new{\elab{\tau}{\tau_1}{\upsilon_1}}
}
{\decelab{Q}{\Gamma}{\lambda{}x.\,e}{\tau_1 \rightarrow \tau_2}{\lambda{}(x\ty{}\upsilon_1).\,t}}[\textsc{Abs}]
\]
\vspace{0.5em}
\[
\inference{%
\decelab{Q}{\Gamma}{e_1}{\tau_1 \rightarrow \tau_2}{t_1} &
\decelab{Q}{\Gamma}{e_2}{\tau_1}{t_2}
}
{\decelab{Q}{\Gamma}{e_1~e_2}{\tau_2}{t_1~t_2}}[\textsc{App}]
\]
\vspace{0.5em}
\[
\inference{%
\decelab{Q}{a, \Gamma}{e}{\sigma}{t} &
a \notin \ftv{Q, \Gamma}
}
{\decelab{Q}{\Gamma}{e}{\forall{}a.\,\sigma}{\Lambda{}a.\,t}}[$\introforall$]
\hspace{2em}
\inference{%
\decelab{Q}{\Gamma}{e}{\forall{}a.\,\sigma}{t} &
\new{\elab{\tau}{\tau}{\upsilon}}
}
{\decelab{Q}{\Gamma}{e}{[a \mapsto \tau]\sigma}{t~\upsilon}}[$\elimforall$]
\]
\vspace{0.5em}
\[
\inference{%
\decelab{\new{d \ty} C \wedge Q}{\Gamma}{e}{\rho}{t} &
\new{\elab{C}{C}{\upsilon}} &
d \notin \fdv{Q}
}
{\decelab{Q}{\Gamma}{e}{C \Rightarrow \rho}{\lambda{}(d\ty{}\upsilon).\,t}}[$\introct$]
\]
\vspace{1em}
\[
\inference{%
\decelab{Q}{\Gamma}{e}{C \Rightarrow \rho}{t} &
\entails{\QQ \wedge Q}{\new{\tev \ty} C}
}
{\decelab{Q}{\Gamma}{e}{\rho}{t~\tev}}[$\elimct$]
\]
\vspace{0.5em}
\[
\inference{%
\decelabs{Q}{\Gamma}{e_1}{\forall\overline{b_1}a\overline{b_2}.\,(\overline{C_1}, \TC~a, \overline{C_2}) \Rightarrow \tau_1}{t_1} &
\decelab{Q}{\Gamma}{e_2}{\dict{\TC}~\tau_2}{t_2} \\
\new{\elab{\tau}{\tau_2}{\upsilon_2}} &
\new{\elab{C}{[a \mapsto \tau_2]\overline{C_1}}{\overline{\upsilon_1}}} \\
\forall C \ldotp \entails{\QQ \wedge \TC~a}{\tev \ty C} \Rightarrow ((\nentails{\QQ \wedge Q \wedge \overline{C_1} \wedge \overline{C_2}}{C}) \vee (\entails{\QQ}{\tev \ty C}))
}
{\decelab{Q}{\Gamma}{e_1~\dictapp{e_2}{\TC~a}}{\forall{}\overline{b_1}\overline{b_2}.\,[a \mapsto \tau_2]((\overline{C_1}, \overline{C_2}) \Rightarrow \tau_1)}{\Lambda{}\overline{b_1}\overline{b_2}.\,\lambda{}(\overline{d \ty \upsilon_1}).\,t_1~\overline{b_1}~\upsilon_2~\overline{b_2}~\overline{d}~t_2}}[\textsc{DictApp}]
\]
\begin{minipage}{0.5\textwidth}
\vspace{2em}
\boxed{\decelabs{Q}{\Gamma}{e}{\sigma}{t}}
\[
\inference{%
\decelab{Q}{\Gamma}{e}{\sigma}{t} \\
\text{The type of } e \text{ is specified to be } \sigma \\
\text{The principal type of } e \text{ is unambiguous}
}
{\decelabs{Q}{\Gamma}{e}{\sigma}{t}}[\textsc{Spec}]
\]
\end{minipage}%
\hspace{0.8em}
\begin{minipage}{0.45\textwidth}
\vspace{2em}
\boxed{\dectopelab{\QQ}{\Gamma}{e}{\sigma}{t}}
\vspace{0.4em}
\[
\inference{%
\decelab{Q}{\Gamma}{e}{\tau}{t} &
\entailselab{\QQ \wedge \overline{\new{d \ty} C}}{Q}{\eta} \\
\overline{a} = \ftv{\overline{C}, \tau} &
\elab{C}{\overline{C}}{\overline{\upsilon}}
}
{\dectopelab{\QQ}{\Gamma}{e}{\forall{}\overline{a}.\,\overline{C} \Rightarrow \tau}{\Lambda{}\overline{a}.\,\lambda{}(\overline{d \ty \upsilon}).\,\eta{}(t)}}[\textsc{Top}]
\]
\end{minipage}
\end{framed}
\caption{Declarative typing rules of the source language}%
\label{fig:typing-rules-source-lang}
\end{figure}

The typing judgment can be read as: ``under assumptions $Q$ and context $\Gamma$, expression $e$ has type $\sigma$ and its elaboration is term $t$''.
Except for the new \textsc{DictApp} rule, the rules are rather standard and based on~\cite{AssociatedTypeSynonyms}.
One of the main changes is that these rules derive a polytype $\sigma$ instead of a monotype $\tau$.
The reason for this is that a dictionary can only be applied to an expression with a $\sigma$ type, i.e.\ an expression that has the type class constraint corresponding to the dictionary in its type.
As a result, the polytype $\sigma$ obtained from the context in \textsc{Var} is no longer immediately instantiated, but is instantiated using the new $\elimforall$ and $\elimct$ rules.

Let us now discuss the new \textsc{DictApp} rule in more detail.
For simplicity, the annotation ``\ensuremath{\textbf{as}\;\Conid{TC}\;\Varid{a}}'' is mandatory in the formalisation, whereas it is optional in the implementation when there is only one constraint in the context matching the type of the dictionary.

The type of $e_1$, to which the dictionary will be passed, must be specified.
This can be done simply by writing a type signature, or by annotating the expression itself with its type.
This is expressed using the $\decelabs{Q}{\Gamma}{e}{\sigma}{t}$ judgment.
This judgment also states that the principal type of $e$ must be unambiguous, a requirement that is needed for coherence (even without explicit dictionary application), see \cref{sec:coherence}.
The type class constraint ($\TC~a$) to which a dictionary is passed may occur at any place in the type class context of $e_1$.
The same is true for the corresponding type variable ($a$).
This is captured by the zero or more constraints $\overline{C_1}$ and $\overline{C_2}$ coming before and after the type class constraint in question, and the zero or more type variables $\overline{b_1}$ and $\overline{b_2}$ coming before and after the type variable in question.
The dictionary $e_2$ must have a type ($\dict{\TC}~\tau_2$), matching the type class of the constraint.
After passing the dictionary, the type variable $a$ is instantiated with $\tau_2$ and the constraint in question is removed from the type class context.
The elaboration in \textsc{DictApp} is more elaborate, as type variables and constraints must be rearranged.
Consequently, type and evidence abstractions and applications must be added to align the resulting type with the resulting term, i.e.\ the type $\upsilon_2$ must be applied before $\overline{b_2}$ are applied, similarly, $t_2$ must be applied after the $\overline{d}$ corresponding with $\overline{C_1}$.
We have $\eta$-reduced the evidence abstractions and applications corresponding with $\overline{C_2}$.
The crux of the rule is that $t_2$, the term corresponding to the dictionary, is passed as an evidence argument to $t_1$.

For simplicity, multiple explicit dictionary applications cannot be chained one after the other in the formalisation, but this is supported in the implementation.

\paragraph{Top-level typing}
The judgment for top-level expressions can be read as ``under top-level axiom scheme $\QQ$ and in context $\Gamma$, term $e$ has type $\sigma$, and its elaboration is term $t$.''
Compared to the regular typing judgment, we make sure no free type variables occur by quantifying over them.
Also, the top-level axiom scheme ($\QQ$) is used to simplify the required constraints.
For example, if $e$ assumes \ensuremath{\Conid{Eq}\;(\Conid{Maybe}\;\Varid{a})} and $\QQ$ contains the axiom \ensuremath{\forall \Varid{a}\hsforall .\;\Conid{Eq}\;\Varid{a}\Rightarrow \Conid{Eq}\;(\Conid{Maybe}\;\Varid{a})}, we want to qualify over the ``extra information'' needed to satisfy the assumption, i.e.\ \ensuremath{\Conid{Eq}\;\Varid{a}}, as:
\begin{align*}
  &(\new{\textit{\$fEqMaybe} \ty} \forall{}a.\,\textit{Eq}~a \Rightarrow \textit{Eq}~(\textit{Maybe}~a))~\wedge (\new{d \ty} \textit{Eq}~a) \\
  &\Vdash \new{d' \ty} \textit{Eq}~(\textit{Maybe}~a) \new{\,; [d' \mapsto \textit{\$fEqMaybe}~a~d]}
\end{align*}

We require a monotype $\tau$ to be derived for $e$ even though the judgment can derive a polytype $\sigma$.
Using the rules $\elimforall$ and $\elimct$, every polytype can be instantiated to a monotype.

Note the ``; $\eta$'' in the entailment, where $\eta$ is a dictionary evidence substitution.
Let us explain this with an example, say we have that:
\begin{align*}
  &(d' \ty \textit{Eq}~(\textit{Maybe}~a))\,;\,((x \ty \textit{Maybe}~a), a) \vdash (==)~x~x \ty \textit{Bool} \\
  &\rightsquigarrow (==)~(\textit{Maybe}~a)~d'~x~x
\end{align*}
When simplifying $Q$, in this case $(d' \ty \textit{Eq}~(\textit{Maybe}~a))$, as demonstrated above, we get $\overline{d \ty C} = d \ty \textit{Eq}~a$.
However, the elaborated term $t$ still contains the dictionary variable $d'$.
Therefore, we must substitute the original dictionary variable $d'$ with the simplified evidence $\textit{\$fEqMaybe}~a~d$.

\paragraph{Algorithmic typing rules}
We do not provide algorithmic typing rules as they are standard~\cite{AssociatedTypeSynonyms,OutsideIn} and the algorithmic variant of \textsc{DictApp} can easily be derived from the declarative one.

\section{Coherence}%
\label{sec:coherence}

We use the definition from \cite{TypeClassesExploration} for coherence: \emph{every different valid typing derivation for a program leads to a resulting program that has the same dynamic semantics}.

So how does this translate to our setting?
Consider the following program:
\begin{hscode}\SaveRestoreHook
\column{B}{@{}>{\hspre}l<{\hspost}@{}}%
\column{E}{@{}>{\hspre}l<{\hspost}@{}}%
\>[B]{}\Varid{foo}\mathbin{::}\Conid{Eq}\;\Conid{Int}\Rightarrow \Conid{Bool}{}\<[E]%
\\
\>[B]{}\Varid{foo}\mathrel{=}\mathrm{1}\mathop{==}\mathrm{3}{}\<[E]%
\ColumnHook
\end{hscode}\resethooks
There are two valid typing derivations for this program: one that uses the global instance of \ensuremath{\Conid{Eq}\;\Conid{Int}} and one that uses the instance passed to \ensuremath{\Varid{foo}} whenever it is used.
It does not matter which instance or dictionary was chosen, because with global uniqueness of instances, it is the same instance in both cases.
With the ability to explicitly pass a dictionary, it suddenly does matter which typing derivation is used, because the different typing derivations can use potentially different dictionaries, which will directly influence the dynamic semantics.
The following cases have a similar risk of incoherence, as the compiler can choose between multiple dictionaries:

\begin{hscode}\SaveRestoreHook
\column{B}{@{}>{\hspre}l<{\hspost}@{}}%
\column{E}{@{}>{\hspre}l<{\hspost}@{}}%
\>[B]{}\mbox{\onelinecomment  Two \ensuremath{\Conid{Eq}\;\Varid{a}} dictionaries...}{}\<[E]%
\\
\>[B]{}\Varid{two}\mathbin{::}(\Conid{Eq}\;\Varid{a},\Conid{Eq}\;\Varid{a})\Rightarrow \Varid{a}\to \Varid{a}\to \Conid{Bool}{}\<[E]%
\\
\>[B]{}\mbox{\onelinecomment  A second \ensuremath{\Conid{Eq}\;\Varid{a}} hidden in \ensuremath{\Conid{Ord}\;\Varid{a}}...}{}\<[E]%
\\
\>[B]{}\Varid{three}\mathbin{::}(\Conid{Eq}\;\Varid{a},\Conid{Ord}\;\Varid{a})\Rightarrow \Varid{a}\to \Varid{a}\to \Conid{Bool}{}\<[E]%
\\
\>[B]{}\mbox{\onelinecomment  A second \ensuremath{\Conid{Eq}\;\Varid{a}} dictionary thanks to constraint \ensuremath{\Varid{a}\sim\Varid{b}}}{}\<[E]%
\\
\>[B]{}\Varid{four}\mathbin{::}(\Conid{Eq}\;\Varid{a},\Varid{a}\sim\Varid{b},\Conid{Eq}\;\Varid{b})\Rightarrow \Varid{a}\to \Varid{b}\to \Conid{Bool}{}\<[E]%
\ColumnHook
\end{hscode}\resethooks
As for \ensuremath{\Varid{foo}} above, global instances can create potential incoherence, also with type variables:
\begin{hscode}\SaveRestoreHook
\column{B}{@{}>{\hspre}l<{\hspost}@{}}%
\column{E}{@{}>{\hspre}l<{\hspost}@{}}%
\>[B]{}\mathbf{instance}\;\Conid{Eq}\;\Varid{a}\;\mathbf{where}\;\anonymous \mathop{==}\anonymous \mathrel{=}\Conid{False}{}\<[E]%
\\[\blanklineskip]%
\>[B]{}\Varid{five}\mathbin{::}\Conid{Eq}\;\Varid{a}\Rightarrow \Varid{a}\to \Varid{a}\to \Conid{Bool}{}\<[E]%
\ColumnHook
\end{hscode}\resethooks

All of the contrived programs above are valid and coherent in Haskell as long as global uniqueness of instances holds.
However, when we add explicit dictionary application, they become incoherent: in each case, the compiler has more than one choice for which dictionary the functions will use, so we cannot allow one of those dictionaries to be instantiated to something else than the others.
To preserve coherence, we restrict explicit type application using another safety criterion.
In this section, we explain this criterion and how we have proven that it effectively salvages coherence.

So how do we detect cases like the above?
Say we pass a dictionary for class constraint $\TC~a$ to a function with the following type:
\[
\forall\overline{b_1}a\overline{b_2}.\,(\overline{C_1}, \TC~a, \overline{C_2}) \Rightarrow \tau_1
\]
What all the examples had in common was that the type class constraint $\TC~a$ or any constraint that could be derived from it in combination with the top-level axiom scheme $\QQ$, could also be derived from the remaining constraints and the top-level axiom scheme ($\QQ \wedge \overline{C_1} \wedge \overline{C_2}$).
As this is also trivially true for any constraint in $\QQ$, we also require that the constraint cannot be derived from $\QQ$ while producing the same evidence.
Since we allow explicit type application to non-top-level expressions, we have to add to this list the local assumptions ($Q$).
Generally, an explicit type application can cause incoherence if:
\begin{align*}
  \exists C \ldotp &(\entails{\QQ \wedge \TC~a}{\tev \ty C}) \wedge (\entails{\QQ \wedge Q \wedge \overline{C_1} \wedge \overline{C_2}}{C})\, \wedge \\
             &\nentails{\QQ}{\tev \ty C}
\end{align*}
To forbid such cases and to safeguard coherence, we add the following condition to the \textsc{DictApp} rule:
\begin{align*}
  \forall C \ldotp &\entails{\QQ \wedge \TC~a}{\tev \ty C} \Rightarrow \\
             &((\nentails{\QQ \wedge Q \wedge \overline{C_1} \wedge \overline{C_2}}{C}) \vee \entails{\QQ}{\tev \ty C})
\end{align*}
In other words: to ensure coherence, the type class constraint that we provide an explicit dictionary for, or any constraint implied by it and the global instances, is either not implied by the remaining constraints and the global instances, or it is implied by the global instances while yielding the exact same evidence.
All of the examples above are caught by this check.

\section{Coherence Proof}%
\label{sec:coherence-proof}

To ensure this check is sufficient to prevent incoherence, we prove coherence of our formalisation.

First, a short sketch of the proof: to prove coherence, we must to prove that each typing derivation of a program $e$ results in the same dynamic semantics.
In addition to a type, a typing derivation also translates $e$ to a System F term $t$ in which all implicit information, type abstraction and application, and evidence/dictionary abstraction and application, is made explicit.
The System F term $t$ will determine the dynamic semantics of $e$.
The different typing derivations potentially differ in the selected dictionaries, which will have a direct impact on the dynamic semantics.
Our objective is to prove that the terms produced by any two different typing derivations of the same program $e$ are equivalent.

Before we can show the actual coherence theorem and proof, we first have to define a number of auxiliary concepts and lemmas.

\subsection{Equivalence Relation}%
\label{sec:equivalence-relation}

To prove coherence, we must prove that every different valid typing derivation for a program leads to a resulting program that has the same dynamic semantics.
The dynamic semantics of a program are fully determined by the elaborated program in the target language.
To be able to determine whether two target-language programs have the \emph{same} dynamic semantics, we use an axiomatic equivalence relation between two target-language terms, based on \cite{QualifiedTypes}, shown in \cref{fig:tred-relation}.
Note that this equivalence relation considers the behaviour of terms after \emph{type erasure}, i.e.\ the type annotations of binders are removed, as well as type applications and abstractions.
\begin{figure}[tb]
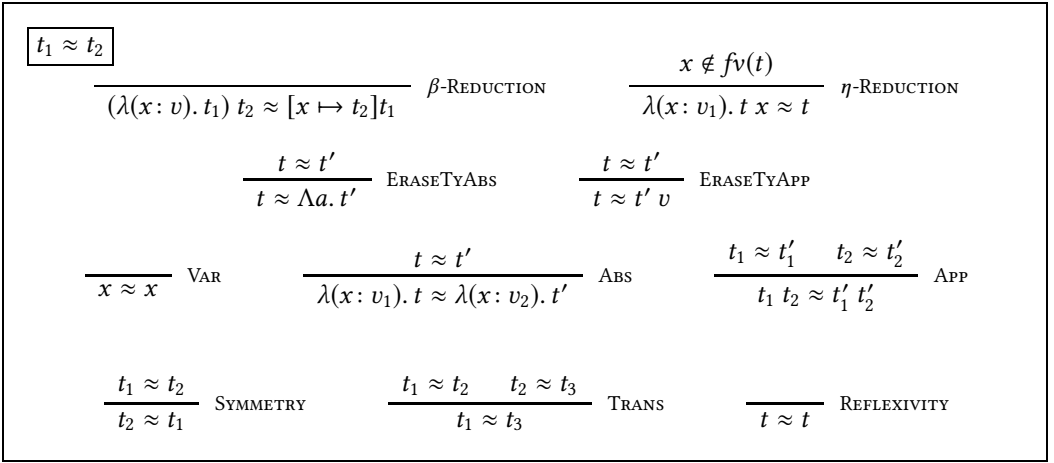

  \begin{framed}
\raggedright\boxed{\tred{t_1}{t_2}}
\vspace{-1em}
\[
\inference{}{\tred{(\lambda{}(x \ty \upsilon).\,t_1)~t_2}{[x \mapsto t_2]t_1}}[\betared]
\hspace{3em}
\inference{x \notin \fv{t}}{\tred{\lambda{}(x \ty \upsilon_1).\,t~x}{t}}[\etared]
\]
\vspace{0.5em}
\[
\inference{\tred{t}{t'}}{\tred{t}{\Lambda{}a.\,t'}}[\textsc{EraseTyAbs}]
\hspace{3em}
\inference{\tred{t}{t'}}{\tred{t}{t'~\upsilon}}[\textsc{EraseTyApp}]
\]
\vspace{0.5em}
\[
\inference{}{\tred{x}{x}}[\textsc{Var}]
\hspace{3em}
\inference{\tred{t}{t'}}{\tred{\lambda{}(x \ty \upsilon_1).\,t}{\lambda{}(x \ty \upsilon_2).\,t'}}[\textsc{Abs}]
\hspace{3em}
\inference{\tred{t_1}{t_1'} & \tred{t_2}{t_2'}}{\tred{t_1~t_2}{t_1'~t_2'}}[\textsc{App}]
\]
\vspace{2em}
\[
\inference{\tred{t_1}{t_2}}{\tred{t_2}{t_1}}[\textsc{Symmetry}]
\hspace{3em}
\inference{\tred{t_1}{t_2} & \tred{t_2}{t_3}}{\tred{t_1}{t_3}}[\textsc{Trans}]
\hspace{3em}
\inference{}{\tred{t}{t}}[\textsc{Reflexivity}]
\]
\end{framed}
\caption{Equivalence relation for terms}%
\label{fig:tred-relation}
\end{figure}

As our target language is lazy, we can safely use the \betared{} rule from \cref{fig:tred-relation}, which is only sound for languages with non-strict semantics.
By including the rules \textsc{Symmetry}, \textsc{Trans}, and \textsc{Reflexivity}, we extend this relation to an equivalence relation.

\begin{lemma}\label{lemma:erase-theta}
  For every $\tev$ and substitution $\theta$, $\tred{\tev}{\theta{}(\tev)}$.
\end{lemma}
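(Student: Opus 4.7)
The plan is to proceed by structural induction on the evidence term $\tev$, exploiting its grammar $\tev \eqq d \mid \tev~\overline{\upsilon}~\overline{\tev}$. Since $\theta$ is a \emph{type} substitution, it never renames a dictionary variable and only touches the types appearing in $\tev$; in particular $\theta(d) = d$. The base case is therefore immediate by \textsc{Reflexivity}.

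For the inductive case $\tev = \tev_0~\overline{\upsilon}~\overline{\tev_i}$ we have $\theta(\tev) = \theta(\tev_0)~\overline{\theta(\upsilon)}~\overline{\theta(\tev_i)}$, and the induction hypothesis supplies $\tred{\tev_0}{\theta(\tev_0)}$ and $\tred{\tev_i}{\theta(\tev_i)}$ for each sub-evidence. I would then peel off the applications one argument at a time: every term-level argument is absorbed by the \textsc{App} congruence rule, while every type-level argument is absorbed using \textsc{EraseTyApp} bridged by \textsc{Symmetry} and \textsc{Trans}. Concretely, from $\tred{\tev_0}{\theta(\tev_0)}$ one applies \textsc{EraseTyApp} to add $\theta(\upsilon)$ on the right, then \textsc{Symmetry}, then \textsc{EraseTyApp} once more to add $\upsilon$ on the (now) right, then \textsc{Symmetry} again, yielding $\tred{\tev_0~\upsilon}{\theta(\tev_0)~\theta(\upsilon)}$; iterating this and composing with \textsc{Trans} gives the full statement.

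The main obstacle I anticipate is purely notational: the rule \textsc{EraseTyApp} is stated asymmetrically (it adds a type application only to the right-hand side of an equivalence), so a clean presentation would first isolate the small auxiliary fact ``$\tred{t_1}{t_2}$ implies $\tred{t_1~\upsilon_1}{t_2~\upsilon_2}$ for arbitrary $\upsilon_1, \upsilon_2$.'' Together with \textsc{App}, this lemma turns the inductive step into a routine traversal of the mixed list $\overline{\upsilon}~\overline{\tev}$ of type and evidence arguments. The broader point the lemma establishes — that type substitutions are irrelevant up to $\tred{\cdot}{\cdot}$ — is exactly what the later coherence argument will need in order to ignore differences in how distinct typing derivations instantiate type variables inside elaborated evidence.
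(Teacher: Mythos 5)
Your proof is correct and matches the paper's approach exactly: the paper's proof is simply ``straightforward induction on $\tev$ with the \textsc{EraseTyApp} rule,'' which is precisely the structure you spell out, including the handling of type applications via \textsc{EraseTyApp}, \textsc{Symmetry}, and \textsc{Trans}. Your elaboration of the asymmetry of \textsc{EraseTyApp} and the base case $\theta(d)=d$ is a faithful filling-in of the details the paper leaves implicit.
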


As the equivalence is based on type-erased terms, applying a substitution to a
term preserves its equivalence.

\begin{proof}
  Straightforward induction on $\tev$ with the \textsc{EraseTyApp} rule.
\end{proof}

\subsection{Principal Types}%
\label{sec:principal-types}

As established in previous work~\cite[Theorem 5.32]{QualifiedTypes}, coherence in the presence of type classes requires that the principal type (defined below) of program $e$ must be (context-)unambiguous (defined in \cref{sec:ambiguity}).

The typical example used to explain this condition, is the following:
\begin{hscode}\SaveRestoreHook
\column{B}{@{}>{\hspre}l<{\hspost}@{}}%
\column{E}{@{}>{\hspre}l<{\hspost}@{}}%
\>[B]{}\mathbf{class}\;\Conid{Show}\;\Varid{a}\;\mathbf{where}\;\Varid{show}\mathbin{::}\Varid{a}\to \Conid{String}{}\<[E]%
\\
\>[B]{}\mathbf{class}\;\Conid{Read}\;\Varid{a}\;\mathbf{where}\;\Varid{read}\mathbin{::}\Conid{String}\to \Varid{a}{}\<[E]%
\\[\blanklineskip]%
\>[B]{}\Varid{f}\mathbin{::}\Conid{String}\to \Conid{String}{}\<[E]%
\\
\>[B]{}\Varid{f}\;\Varid{s}\mathrel{=}\Varid{show}\;(\Varid{read}\;\Varid{s}){}\<[E]%
\ColumnHook
\end{hscode}\resethooks
The function \ensuremath{\Varid{f}} converts a \ensuremath{\Conid{String}} to a value of some type, let us call it \ensuremath{\Varid{x}}, using \ensuremath{\Varid{read}}, and converts it back to a \ensuremath{\Conid{String}} using \ensuremath{\Varid{show}}.
Many different typing derivations are possible for this program, as any type may be chosen for \ensuremath{\Varid{x}}.
As the type \ensuremath{\Varid{x}} determines the dictionary that will be chosen, the dynamic semantics will vary along with the typing derivations, e.g., a program that reads and shows an \ensuremath{\Conid{Int}} obviously has different dynamic semantics than one reading and showing a \ensuremath{\Conid{Maybe}\;[\mskip1.5mu \Conid{String}\mskip1.5mu]}.

We exclude such programs from coherence, as their principal type, in this case \ensuremath{\forall \Varid{a}\hsforall .\;(\Conid{Show}\;\Varid{a},\Conid{Read}\;\Varid{a})} \ensuremath{\Rightarrow \Conid{String}\to \Conid{String}}, is \emph{ambiguous}, i.e.\ a type variable from the constraints does not occur in the remainder of the type, see \cref{sec:ambiguity}.

To define principal types, we first define an ordering relation on quantified type schemes, based on \cite[Proposition 3.4]{QualifiedTypes}.

\begin{definition}[More general quantified type scheme]
  A type scheme
  $\sigma_1 = \forall\overline{a}.\,\overline{C_1} \Rightarrow \tau_1$ with
  assumptions $Q_1$ is a more general quantified type scheme than
  $\sigma_2 = \forall\overline{b}.\,\overline{C_2} \Rightarrow \tau_2$ with
  assumptions $Q_2$, or $\mg{Q_1}{\sigma_1}{Q_2}{\sigma_2}$, iff:
  \begin{align*}
    &\exists \theta = [\overline{a \mapsto \tau}], \\
    &\quad\overline{b} \notin \ftv{\sigma_1, Q_1, Q_2} \text{ and}\\
    &\quad\theta{}(\tau_1) = \tau_2 \text{ and}\\
    &\quad\entails{Q_2 \wedge \overline{C_2}}{Q_1 \wedge \theta{}(\overline{C_1})}
  \end{align*}
\end{definition}

\begin{theorem}[Principal types]\label{thm:principal-types}
  If $e$ is well-typed, then there exists a principal type $\sigma$ where
  $\dec{\epsilon}{\Gamma}{e}{\sigma}$, such that for all $\sigma'$ where
  $\dec{Q'}{\Gamma}{e}{\sigma'}$, it is the case that
  $\mg{\epsilon}{\sigma}{Q'}{\sigma'}$.
\end{theorem}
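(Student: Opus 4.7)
The plan is to follow the classical route for qualified-type systems from \cite{QualifiedTypes}: route through an algorithmic variant of the typing relation, and show that the type it computes is the most general one derivable.

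First, I would introduce an algorithmic judgment $\Gamma \vdash_{\!\mathcal A} e : Q \Rightarrow \tau$ that, given $\Gamma$ and $e$, deterministically computes a set of residual constraints $Q$ and a monotype $\tau$ (up to the choice of fresh meta-variables). The rules are syntax-directed on $e$: a variable $x$ is instantiated by replacing its quantified type variables with fresh metavariables and collecting its constraints into $Q$; an abstraction $\lambda x.\,e$ introduces a fresh argument type and recurses; an application unifies via a most general unifier; and \textsc{DictApp} checks $e_1$ against its already-specified polytype, since that rule demands an annotation. The principal-type candidate is then $\sigma = \forall \overline{a}.\,\overline{C} \Rightarrow \tau$ where $\overline{a} = \ftv{\overline{C}, \tau}$ and $\overline{C}$ is the flattening of $Q$.

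Soundness of the algorithm, that $\Gamma \vdash_{\!\mathcal A} e : Q \Rightarrow \tau$ implies $\dec{\epsilon}{\Gamma}{e}{\sigma}$, is a direct induction interleaving $\introforall$/$\elimforall$ to quantify or instantiate and $\introct$/$\elimct$ to move constraints between the type and the assumptions. The core step is principality: for every $\dec{Q'}{\Gamma}{e}{\sigma'}$, the algorithmic $\sigma$ satisfies $\mg{\epsilon}{\sigma}{Q'}{\sigma'}$. I would prove this by induction on the derivation of $\dec{Q'}{\Gamma}{e}{\sigma'}$, building the witnessing substitution $\theta$ along the way. The non-syntax-directed rules $\introforall$/$\elimforall$/$\introct$/$\elimct$ are absorbed by extending or restricting $\theta$; substitution closure \Rthree{} and closure under conjunctions \Rseven{} then discharge the entailment obligation. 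The application case is where the most-general unifier delivers exactly the $\theta$ needed for principality.

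The main obstacle is the \textsc{DictApp} rule, whose shape $\forall\overline{b_1}a\overline{b_2}.\,(\overline{C_1},\TC~a,\overline{C_2})\Rightarrow \tau_1$ mixes binders and constraints in a specific order, whose conclusion reshuffles them, and which carries a coherence side condition on $\QQ$ and $Q$. Because \textsc{DictApp} requires the principal type of $e_1$ to be specified and unambiguous via the Spec judgment, however, the algorithm only needs to check this shape rather than infer it, and the coherence side condition is a purely declarative check on $\QQ$, $Q$, and the annotation that is invariant under the particular derivation of $e_1$ and $e_2$. The delicate point in the completeness direction is showing that any alternative derivation of a \textsc{DictApp} expression must agree on the specified shape, so that a more-general witness for $e_1$ lifts to a more-general witness for the application; this follows from unambiguity of $e_1$'s principal type together with the Spec judgment fixing the placement of $a$, $\overline{b_1}$, and $\overline{b_2}$.
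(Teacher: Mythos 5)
The paper does not prove this theorem at all: immediately after stating it, the authors write that they give no type inference algorithm and simply \emph{assume} that one inferring principal types exists, appealing to the algorithms known for the systems their rules are based on~\cite{AssociatedTypeSynonyms,OutsideIn}. Indeed, the stated novelty of the coherence proof is precisely that it avoids committing to an algorithmic typing relation. So you are attempting a proof the authors deliberately declined to give, and your classical route (an algorithmic judgment, soundness, and a completeness/principality induction building $\theta$) is the standard way one would discharge the assumption for concrete Haskell-style type classes, in the spirit of Jones's development in \cite{QualifiedTypes}.

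However, as a proof in \emph{this} paper's setting your plan has a real gap: the entailment relation $\Vdash$ is a parameter of the system, constrained only by \Rone{}--\Rthree{} and \Rseven{}. Your application case relies on "a most general unifier" and your variable/constraint handling relies on residual constraints being most general, but neither is derivable from those four requirements alone --- principal types are known to fail for some instantiations of \textsc{OutsideIn(X)}, which is exactly why the authors posit the theorem rather than prove it. To make your argument go through you would have to add explicit assumptions on the constraint domain (existence of most general unifiers, a principal-solutions property for simplification), at which point you have essentially re-imported the theorem as a hypothesis. A second, more local imprecision: the \textsc{DictApp} side condition is not "invariant under the particular derivation" as you claim, since it mentions the ambient assumptions $Q$, which differ between the principal derivation ($\epsilon$) and an arbitrary one ($Q'$); what saves you is monotonicity --- $\nentails{\QQ \wedge Q' \wedge \overline{C_1} \wedge \overline{C_2}}{C}$ implies $\nentails{\QQ \wedge \overline{C_1} \wedge \overline{C_2}}{C}$ --- so the condition transfers in the direction needed for the principal derivation to exist, but this should be argued rather than asserted as invariance.
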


As mentioned in \cref{sec:typing-rules}, we do not give a type inference algorithm.
However, as our typing rules are based on \cite{AssociatedTypeSynonyms, OutsideIn}, for which algorithms that infer principal types exist, we assume an algorithm that infers principal types exists.

\begin{lemma}\label{lemma:var-type-is-principal}
  If:
  \begin{align}
    &(x \ty \sigma) \in \Gamma \notag \\
    &\dec{Q}{\Gamma}{x}{\sigma'} \label{eq:given-var-deriv}
  \end{align}
  Then $\mg{\epsilon}{\sigma}{Q}{\sigma'}$, i.e.\ $\sigma$ is the principal
  type of $x$.
\end{lemma}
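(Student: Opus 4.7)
The plan is to proceed by structural induction on the derivation of $\dec{Q}{\Gamma}{x}{\sigma'}$ in \eqref{eq:given-var-deriv}, since the only rules that can conclude a judgement for a variable $x$ are \textsc{Var}, $\introforall$, $\elimforall$, $\introct$ and $\elimct$. In each inductive case we show that if $\sigma$ was more general than the premise's type, then it is still more general than the conclusion's type.

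For the base case \textsc{Var}, the context lookup is unique (assuming well-formedness of $\Gamma$), so $\sigma' = \sigma$ and $\mg{\epsilon}{\sigma}{Q}{\sigma}$ holds by taking $\theta$ to be the identity and using reflexivity of $\Vdash$ (rule \Rone). For $\elimforall$, the inductive hypothesis gives $\mg{\epsilon}{\sigma}{Q}{\forall a.\,\sigma''}$ with witness substitution $\theta_0$; to obtain $\mg{\epsilon}{\sigma}{Q}{[a \mapsto \tau]\sigma''}$ we compose $\theta_0$ with $[a \mapsto \tau]$ (after $\alpha$-renaming the bound variables of $\sigma$ to avoid capture) and appeal to closure of entailment under substitution (\Rthree). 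For $\introforall$, the side condition $a \notin \ftv{Q, \Gamma}$ guarantees $a \notin \ftv{\sigma}$, so the same $\theta_0$ from the IH, extended to fix $a$, witnesses $\mg{\epsilon}{\sigma}{Q}{\forall a.\,\sigma''}$; the freshness clause of the definition is satisfied precisely because $a$ is fresh in $\Gamma$.

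For $\elimct$, the IH gives $\mg{\epsilon}{\sigma}{Q}{C \Rightarrow \rho}$, which by definition means $\entails{Q \wedge C}{\theta(\overline{C_\sigma})}$ for the constraints $\overline{C_\sigma}$ of $\sigma$. The side condition of $\elimct$ yields $\entails{\QQ \wedge Q}{C}$, so by transitivity (\Rtwo) and conjunction (\Rseven) we obtain $\entails{Q}{\theta(\overline{C_\sigma})}$, which is exactly what is needed for $\mg{\epsilon}{\sigma}{Q}{\rho}$. The $\introct$ case is symmetric: going from premise assumptions $C \wedge Q$ and type $\rho$ to conclusion assumptions $Q$ and type $C \Rightarrow \rho$, the fresh $C$ is absorbed into the constraint prefix of the target type, and the witness substitution from the IH carries over unchanged, with the required entailment matching by rule \Rone.

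The main obstacle is bookkeeping around the freshness condition $\overline{b} \notin \ftv{\sigma_1, Q_1, Q_2}$ in the definition of $\geq$, especially across the $\introforall$/$\elimforall$ cases where bound variables of $\sigma'$ move in and out of the prefix. Handling this cleanly requires implicit $\alpha$-renaming of $\sigma$ and of the witness substitution's domain at each inductive step, but no deeper insight is needed beyond that. Once the freshness is managed, each case reduces to an application of the entailment requirements \Rone--\Rthree and \Rseven from \cref{fig:entailment-requirements}.
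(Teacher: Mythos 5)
Your proposal is correct and matches the paper's approach exactly: the paper proves this lemma by ``straightforward induction on \cref{eq:given-var-deriv}'' and gives no further detail, and your case analysis over \textsc{Var}, $\introforall$, $\elimforall$, $\introct$, and $\elimct$ using \Rone--\Rthree{} and \Rseven{} is precisely the intended elaboration of that induction.
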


\begin{proof}
  Straightforward induction on \cref{eq:given-var-deriv}.
\end{proof}

\subsection{Ambiguity and Context-Ambiguity}%
\label{sec:ambiguity}

We define (context-)ambiguity and (context)-unambiguity.

\begin{definition}[Ambiguous type scheme]
  A type scheme $\sigma = \forall\overline{a}.\,\overline{C} \Rightarrow \tau$
  is ambiguous iff:
  \begin{align*}
    \exists a, a \in \overline{a} \cap \ftv{\overline{C}} \text{ and } a \notin \ftv{\tau}
  \end{align*}
\end{definition}

\begin{definition}[Unambiguous type scheme]
  A type scheme $\sigma = \forall\overline{a}.\,\overline{C} \Rightarrow \tau$
  is unambiguous iff:
  \begin{align*}
    \overline{a} \cap \ftv{\overline{C}} \subseteq \ftv{\tau}
  \end{align*}
\end{definition}

When applying the \textsc{Abs} rule, an unambiguous type may suddenly become ambiguous, as the disambiguating type variable may have moved from the monotype $\tau$ to the context $\Gamma$.
Therefore, we use the following definitions throughout the coherence proof to handle this as well:

\begin{definition}[Context-ambiguous type scheme]
  A type scheme $\sigma = \forall\overline{a}.\,\overline{C} \Rightarrow \tau$
  is context-ambiguous in context $\Gamma$ iff:
  \begin{align*}
    \exists a, a \in \overline{a} \cap \ftv{\overline{C}} \text{ and } a \notin \ftv{\Gamma, \tau}
  \end{align*}
\end{definition}

\begin{definition}[Context-unambiguous type scheme]
  A type scheme $\sigma = \forall\overline{a}.\,\overline{C} \Rightarrow \tau$
  is context-unambiguous in context $\Gamma$ iff:
  \begin{align*}
    \overline{a} \cap \ftv{\overline{C}} \subseteq \ftv{\Gamma, \tau}
  \end{align*}
\end{definition}

Whenever the context is closed, i.e.\ $\ftv{\Gamma} = \emptyset$, context-(un)ambiguity coincides with (un)ambiguity.
For this reason, we sometimes omit the \emph{context-} prefix in examples where the context is closed.

\begin{lemma}\label{lemma:ambi-app1}
  If the principal type
  $\sigma = \forall\overline{a}.\,\overline{C} \Rightarrow \tau$ of $e_1~e_2$
  is context-unambiguous where $\dec{\epsilon}{\Gamma}{e_1~e_2}{\sigma}$, then
  the principal type
  $\sigma_1 = \forall\overline{b}.\,\overline{C_2} \Rightarrow \tau_1 \rightarrow \tau_2$ of $e_1$
  is also context-unambiguous where $\dec{\epsilon}{\Gamma}{e_1}{\sigma_1}$.
\end{lemma}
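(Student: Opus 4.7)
The plan is to proceed by contrapositive: assume the principal type $\sigma_1 = \forall\overline{b}.\,\overline{C_2} \Rightarrow \tau_1 \rightarrow \tau_2$ of $e_1$ is context-ambiguous in $\Gamma$ and derive that $\sigma$ is context-ambiguous as well, contradicting the hypothesis. (The existence of $\sigma_1$ follows from $e_1$ being well-typed — a consequence of $e_1~e_2$ being well-typed through any derivation ending in \textsc{App} — together with \cref{thm:principal-types}.) Concretely, I would fix a witness $b \in \overline{b} \cap \ftv{\overline{C_2}}$ with $b \notin \ftv{\Gamma, \tau_1 \rightarrow \tau_2}$ and aim to produce a corresponding witness $b'$ for the context-ambiguity of $\sigma = \forall\overline{a}.\,\overline{C} \Rightarrow \tau$.

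Next, I would apply \cref{thm:principal-types} to $e_2$ to obtain its principal type $\sigma_2 = \forall\overline{c}.\,\overline{C_3} \Rightarrow \tau_3$, and argue that a principal derivation of $e_1~e_2$ can be taken to factor through \textsc{App}: it instantiates $\sigma_1$ via $\elimforall$ and $\elimct$ with a substitution $\theta_1$ that renames $\overline{b}$ to fresh variables $\overline{b'}$, instantiates $\sigma_2$ similarly with $\theta_2$ renaming $\overline{c}$ to fresh $\overline{c'}$, chooses a most general instantiation $\mu$ to make $\mu\theta_1(\tau_1) = \mu\theta_2(\tau_3)$, applies \textsc{App} to obtain a monotype $\mu\theta_1(\tau_2)$ carrying the accumulated constraints $\mu\theta_1(\overline{C_2}) \wedge \mu\theta_2(\overline{C_3})$, and finally generalises via $\introct$ and $\introforall$ over all type variables not free in $\Gamma$. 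The key observation is that because $b \notin \ftv{\tau_1}$, the fresh variable $b'$ does not occur in $\mu\theta_1(\tau_1)$ and hence $\mu$ can be chosen to leave $b'$ untouched. Consequently $b'$ appears in $\mu\theta_1(\overline{C_2}) \subseteq \overline{C}$ (because $b \in \ftv{\overline{C_2}}$), is absent from $\mu\theta_1(\tau_2) = \tau$ (because $b \notin \ftv{\tau_2}$ and $\mu$ does not touch $b'$), and is absent from $\ftv{\Gamma}$ by freshness. So $b' \in \overline{a} \cap \ftv{\overline{C}}$ but $b' \notin \ftv{\Gamma, \tau}$, making $\sigma$ context-ambiguous.

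The main obstacle is justifying the structural form attributed above to the principal derivation of $e_1~e_2$, namely that it can always be taken to factor through \textsc{App} applied to fresh-variable instantiations of the principal types $\sigma_1$ and $\sigma_2$, followed by generalisation. This hinges on \cref{thm:principal-types} together with the assumed existence of a principal-type inference algorithm for the underlying \cite{AssociatedTypeSynonyms,OutsideIn}-style system, and on an inversion-style argument about how $\elimforall$, $\elimct$, \textsc{App}, $\introct$, and $\introforall$ combine in a canonical most-general derivation; a subsidiary point is that the most general unifier does not introduce any dependency involving $b'$, since $b'$ is absent from the left-hand side of the unification problem, which is what guarantees that $b'$ genuinely survives generalisation as an otherwise-disconnected quantified variable.
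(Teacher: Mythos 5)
Your contrapositive set-up is the right one, and it matches the paper's strategy for the companion result \cref{lemma:ambi-app2}. The problem is where you put all the weight: your argument rests entirely on the claim that the principal derivation of $e_1~e_2$ ``can be taken to factor through \textsc{App} applied to fresh-variable instantiations of the principal types $\sigma_1$ and $\sigma_2$, followed by generalisation'', with a most general unifier $\mu$ in the middle. Nothing in this formalisation gives you that. The paper deliberately provides no algorithmic typing relation and no inversion principle for principal derivations; \cref{thm:principal-types} only asserts the existence of a principal type $\sigma$ together with the ordering $\mg{\epsilon}{\sigma}{Q'}{\sigma'}$ against every other derivable $\sigma'$ --- it says nothing about the syntactic shape of any derivation that produces $\sigma$. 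You flag this as ``the main obstacle'', but it is not an obstacle to be smoothed over later: it is the entire content of the lemma, and as stated your proof assumes a Hindley--Milner-style canonical-derivation theorem that would itself require a proof at least as long as the one you are trying to give.

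The paper's proof of \cref{lemma:ambi-app2} (to which this lemma is reduced) avoids the issue by never touching the principal derivation at all. Having fixed an ambiguous witness constraint $C_a$ in the subterm's principal type, it takes an \emph{arbitrary} derivation $\dec{Q}{\Gamma}{e_1~e_2}{\sigma}$ and proves by induction on that derivation that $\mg{Q}{(\forall\overline{a_2}.\,C_a \Rightarrow \sigma)}{Q}{\sigma}$, case-analysing only the last rule used (\textsc{App}, $\introforall$, $\elimforall$, $\introct$, $\elimct$). Each case needs nothing beyond the definition of the more-general ordering and the entailment requirements \Rone--\Rseven; the base case \textsc{App} uses only that the derived type of the subterm is an instance of its principal type, i.e.\ exactly what \cref{thm:principal-types} supplies. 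Since the prefixed type is then at least as general as every derivable type, it is (equivalent to) a principal type, and it is ambiguous because $a \in \ftv{C_a}$ but $a \notin \ftv{\Gamma, \tau}$. If you recast your argument in this form --- quantifying over all derivations and pushing $C_a$ through each rule via the $\geq$ relation, rather than reverse-engineering one canonical derivation --- the gap disappears and the freshness bookkeeping you describe for $b'$ becomes the side condition $\overline{b} \notin \ftv{\sigma_1, Q_1, Q_2}$ already built into the definition of the ordering.
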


\begin{proof}
  Similar to \cref{lemma:ambi-app2}.
\end{proof}

\begin{lemma}\label{lemma:ambi-app2}
  If the principal type
  $\sigma_1 = \forall\overline{a_1}.\,\overline{C_1} \Rightarrow \tau_1$ of
  $e_1~e_2$ is context-unambiguous where
  $\dec{\epsilon}{\Gamma}{e_1~e_2}{\sigma_1}$, then the principal type
  $\sigma_2 = \forall\overline{a_2}.\,\overline{C_2} \Rightarrow \tau_2$ of
  $e_2$ is also context-unambiguous where $\dec{\epsilon}{\Gamma}{e_2}{\sigma_2}$.
\end{lemma}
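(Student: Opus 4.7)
The plan is to argue the contrapositive: assuming $\sigma_2$ is context-ambiguous in $\Gamma$, I exhibit a witness that $\sigma_1$ is context-ambiguous in $\Gamma$. By hypothesis, fix $a\in\overline{a_2}\cap\ftv{\overline{C_2}}$ with $a\notin\ftv{\Gamma,\tau_2}$, and aim to trace this variable through the derivation of $e_1~e_2$ to produce an analogous witness $a'\in\overline{a_1}\cap\ftv{\overline{C_1}}$ with $a'\notin\ftv{\Gamma,\tau_1}$.

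First I would normalise the principal-type derivation $\dec{\epsilon}{\Gamma}{e_1~e_2}{\sigma_1}$ so that $\introforall$ (introducing $\overline{a_1}$) and $\introct$ (introducing $\overline{C_1}$) appear outermost, with the \textsc{App} rule sitting directly below them and typing $e_1~e_2$ at the monotype $\tau_1$. This yields sub-derivations $\dec{\overline{C_1}}{\overline{a_1},\Gamma}{e_1}{\tau'\rightarrow\tau_1}$ and $\dec{\overline{C_1}}{\overline{a_1},\Gamma}{e_2}{\tau'}$ for some monotype $\tau'$. Since $\sigma_2$ is principal for $e_2$ (extending \cref{thm:principal-types} to the context $\overline{a_1},\Gamma$, which is legitimate because the $\overline{a_1}$ are fresh and cannot contribute to a type of $e_2$), there is a substitution $\theta=[\overline{a_2\mapsto\tau}]$ with $\theta(\tau_2)=\tau'$ and $\entails{\overline{C_1}}{\theta(\overline{C_2})}$. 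The asymmetry driving the argument is that $a\notin\ftv{\tau_2}$ means $\theta(a)$ contributes nothing to $\tau'$, whereas $a\in\ftv{\overline{C_2}}$ forces its free variables into $\theta(\overline{C_2})$.

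The heart of the proof is then a twofold appeal to maximality of the principal type $\sigma_1$. If $\ftv{\theta(a)}\subseteq\ftv{\Gamma,\tau_1}$ held, one could rebuild the derivation mapping $a$ instead to a freshly $\introforall$-generalised variable $a'$ with the corresponding constraint added via $\introct$, producing a strictly more general type for $e_1~e_2$ and contradicting principality of $\sigma_1$; hence some $a'\in\overline{a_1}\setminus\ftv{\Gamma,\tau_1}$ lies in $\ftv{\theta(a)}$. A parallel minimality argument on the constraint set precludes $a'$ occurring in $\theta(\overline{C_2})$ without already occurring in $\overline{C_1}$, since otherwise $\overline{C_1}$ could be weakened into a smaller set that still entails $\theta(\overline{C_2})$, again contradicting principality. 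Together these give $a'\in\overline{a_1}\cap\ftv{\overline{C_1}}$ with $a'\notin\ftv{\Gamma,\tau_1}$, witnessing context-ambiguity of $\sigma_1$.

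I expect this maximality/minimality step to be the main obstacle, because the declarative rules of \cref{fig:typing-rules-source-lang} do not themselves pin down how $\sigma_1$ is constructed. Discharging it cleanly requires reaching into the algorithmic inference machinery inherited from \cite{OutsideIn,AssociatedTypeSynonyms,QualifiedTypes}, where fresh instantiations are known to be most general and residual constraint sets are minimised up to entailment-equivalence and renaming. I would factor the argument through two small auxiliary lemmas — one stating that the principal instantiation of a type variable absent from the result type is fresh, the other that the residual constraint set of a principal type cannot be strictly weakened while still entailing the constraints needed by the derivation — and then slot them into the schematic argument above.
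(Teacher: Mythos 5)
Your overall strategy (contraposition, fixing a witness $a$ and a constraint $C_a \in \overline{C_2}$ containing it, and using principality of $\sigma_2$ to obtain $\entails{\overline{C_1}}{\theta(\overline{C_2})}$) matches the paper up to that point, but the step you yourself flag as the main obstacle is not merely hard to discharge --- it is wrong as stated. If $\ftv{\theta(a)} \subseteq \ftv{\Gamma,\tau_1}$, rebuilding the derivation with $a$ mapped to a fresh generalised $a'$ and the constraint $[a\mapsto a']C_a$ added via $\introct$ yields the scheme $\forall a'\overline{a_1}.\,([a\mapsto a']C_a, \overline{C_1}) \Rightarrow \tau_1$, and this is \emph{not} strictly more general than $\sigma_1$: it is equi-general, since $\entails{\overline{C_1}}{\theta(C_a)}$ gives one direction of the ordering and the identity substitution gives the other. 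Principal types are only determined up to $\geq$-equivalence, so no contradiction with principality of $\sigma_1$ arises and your witness $a'$ is never produced. The same issue undermines the ``minimality'' half: a principal constraint set can always be padded with entailed constraints without losing principality, so nothing precludes $a'$ occurring in $\theta(\overline{C_2})$ without occurring in $\overline{C_1}$. Your proposed normalisation of the derivation (pushing $\introforall$/$\introct$ outermost with \textsc{App} directly beneath) is also an unproved permutation lemma in this declarative system, where $\elimforall$/$\elimct$ may be interleaved arbitrarily.

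The paper turns precisely the observation that defeats your contradiction into the proof: it shows by induction on an arbitrary derivation $\dec{Q}{\Gamma}{e_1~e_2}{\sigma}$ (cases \textsc{App}, $\introforall$, $\elimforall$, $\introct$, $\elimct$) that $\mg{Q}{(\forall\overline{a_2}.\,C_a \Rightarrow \sigma)}{Q}{\sigma}$, the \textsc{App} case using principality of $\sigma_2$ to obtain $\entails{Q}{\theta(C_a)}$. Applied to the principal type, this shows that $\forall\overline{a_2}\,\overline{a_1}.\,C_a \Rightarrow \overline{C_1} \Rightarrow \tau_1$ is \emph{also} a principal type of $e_1~e_2$, and that scheme is manifestly context-ambiguous because $a \in \ftv{C_a}$ while $a \notin \ftv{\Gamma,\tau_1}$. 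The fix, then, is to stop seeking a contradiction and instead exhibit an ambiguous principal type directly; this stays entirely within the declarative system and needs no appeal to the algorithmic inference machinery.
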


\begin{proof}
  Proof by contraposition. Assume that the principal type $\sigma_2$ of $e_2$
  is context-ambiguous. We will show that the principal type of $e_1~e_2$ is
  also context-ambiguous.

  Since the principal type
  $\sigma_2 = \forall\overline{a_2}.\,\overline{C_2} \Rightarrow \tau_2$ is
  context-ambiguous, there exists a type variable
  $a \in \overline{a_2} \cap \ftv{\overline{C_2}}$ for which
  $a \notin \ftv{\Gamma, \tau_2}$. Say $C_a$ is a constraint
  $\in \overline{C_2}$ containing $a$. Without loss of generality, we assume
  that $\overline{a_2} \mathrel{\#} \overline{a_1}$.

  Take a typing derivation $\dec{Q}{\Gamma}{e_1~e_2}{\sigma}$ with
  $\overline{a_2} \notin \Gamma$ with
  $\sigma = \forall\overline{a}.\,\overline{C} \Rightarrow \tau$. We prove by
  induction that
  $\mg{Q}{(\forall\overline{a_2}.\, C_a \Rightarrow \sigma)}{Q}{\sigma}$. In
  particular, this means that
  $\mg{\epsilon}{(\forall\overline{a_2}\,\overline{a}.\,C_a \Rightarrow
    \overline{C_1} \Rightarrow \tau_1)}{\epsilon}{\sigma_1}$ for the principal
  type $\sigma_1$ of $e_1~e_2$, so that
  $(\forall\overline{a_2}\,\overline{a_1}.\,C_a \Rightarrow \overline{C_1}
  \Rightarrow \tau_1)$ is also a principal type for $e_1~e_2$ in the empty
  context. Since $a \in \ftv{C_a}$ but $a \notin \ftv{\tau_1}$, this new
  principal type is ambiguous.

  By induction on $\dec{Q}{\Gamma}{e_1~e_2}{\sigma}$, we get the following
  five cases (see \cref{fig:typing-rules-source-lang}):
  \begin{itemize}
  \item \textsc{App}: We have that $\sigma = \tau_4$,
    $\dec{Q}{\Gamma}{e_1}{\tau_3 \rightarrow \tau_4}$ and
    $\dec{Q}{\Gamma}{e_2}{\tau_3}$.

    Because $\epsilon;\sigma_2$ is the principal type of $e_2$, we have that
    $\mg{\epsilon}{\sigma_2}{Q}{\tau_3}$.

    By definition, this means that $\tau_3 = \theta(\tau_2)$ and
    $\entails{Q}{\theta{}(\overline{C_2})}$ for some
    $\theta = [\overline{a_2 \mapsto \tau}]$. Particularly, we have that
    $\entails{Q}{\theta{}(C_a)}$.

    It follows that
    $\mg{Q}{(\forall\overline{a_2}.\,C_a \Rightarrow \tau_4)}{Q}{\tau_4}$,
    since we have that $\theta{}(\tau_4) = \tau_4$ and
    $\entails{Q}{\theta{}(C_a)}$ (see above).

  \item $\introforall$: We have that $\sigma = \forall{}a'.\,\sigma'$,
    $\dec{Q}{a', \Gamma}{e_1~e_2}{\sigma'}$ and $a' \notin \ftv{Q,\Gamma}$.

    By the induction hypothesis, we have that
    $\mg{Q}{\forall\overline{a_2}\,C_a \Rightarrow \sigma'}{Q}{\sigma'}$.

    It follows that $\mg{Q}{\forall{}a.\,C_a \Rightarrow \sigma}{Q}{\sigma}$,
    by extending the $\theta$ with $[a' \mapsto a']$.
  \item $\elimforall$: We have that $\sigma = [a' \mapsto \tau]\sigma'$ and
    $\dec{Q}{\Gamma}{e_1~e_2}{\forall{}a'.\,\sigma'}$.

    By the induction hypothesis, we have that
    $\mg{Q}{\forall\overline{a_2}.\,C_a \Rightarrow
      \forall{}a'.\,\sigma'}{Q}{\forall{}a'.\,\sigma'}$.

    It follows that
    $\mg{Q}{\forall{}a.\,C_a \Rightarrow [a' \mapsto \tau]\sigma'}{Q}{[a'
      \mapsto \tau]\sigma'}$ because we can construct a
    $\theta' = [a' \mapsto \tau, \theta]$.
  \item $\introct$: We have that $\sigma = (C \Rightarrow \rho)$ and
    $\dec{C \wedge Q}{\Gamma}{e_1~e_2}{\rho}$.

    By the induction hypothesis, we have that
    $\mg{C \wedge Q}{(\forall\overline{a_2}.\,C_a \Rightarrow \rho)}{C \wedge
      Q}{\rho}$.

    It follows that
    $\mg{Q}{(\forall{}a.\, C_a \Rightarrow C \Rightarrow \rho)}{Q}{(C
      \Rightarrow \rho)}$ because if
    $\entails{C \wedge Q \wedge C_\rho}{C \wedge Q \wedge \theta{}(C_a \wedge
      C_\rho)}$, then also
    $\entails{Q \wedge C \wedge C_\rho}{Q \wedge \theta{}(C_a \wedge C \wedge
      C_\rho)}$.

  \item $\elimct$: We have that $\sigma = \rho$ and
    $\dec{Q}{\Gamma}{e_1~e_2}{C \Rightarrow \rho}$ and $\entails{Q}{C}$.

    By the induction hypothesis, we have that
    $\mg{Q}{(\forall\overline{a_2}.\,C_a \Rightarrow C \Rightarrow
      \rho)}{Q}{(C \Rightarrow \rho)}$.

    It follows that
    $\mg{Q}{\forall\overline{a_2}.\,C_a \Rightarrow \rho}{Q}{\rho}$, since
    $\entails{Q \wedge C_\rho}{Q \wedge \theta{}(C_a \wedge C_\rho)}$ follows
    from
    $\entails{Q \wedge C \wedge C_\rho}{Q \wedge \theta{}(C_a \wedge C_\rho)}$
    since we know that $\entails{Q}{C}$.
    \end{itemize}
\end{proof}

\begin{lemma}\label{lemma:ambi-dictapp2}
  If the principal type
  $\sigma = \forall\overline{a}.\,\overline{C} \Rightarrow \tau$ of
  $e_1~\dictapp{e_2}{\TC~a}$ is context-unambiguous where
  $\dec{\epsilon}{\Gamma}{e_1~\dictapp{e_2}{\TC~a}}{\sigma}$, then the
  principal type
  $\sigma' = \forall\overline{b}.\,\overline{C'} \Rightarrow \dict{\TC}~\tau'$
  of $e_2$ is also context-unambiguous where
  $\dec{\epsilon}{\Gamma}{e_2}{\dict{\TC}~\tau'}$.
\end{lemma}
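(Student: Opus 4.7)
The plan is to mirror the proof of \cref{lemma:ambi-app2} and proceed by contraposition. Assume the principal type of $e_2$ is $\sigma' = \forall\overline{b}.\,\overline{C'} \Rightarrow \dict{\TC}~\tau'$ and that it is context-ambiguous, so some $b^* \in \overline{b} \cap \ftv{\overline{C'}}$ satisfies $b^* \notin \ftv{\Gamma, \dict{\TC}~\tau'}$; fix a witnessing constraint $C_b \in \overline{C'}$ with $b^* \in \ftv{C_b}$. Without loss of generality, rename $\overline{b}$ to be fresh from $\Gamma$, $\tau_1$, $\tau_2$, and the quantifier families $\overline{b_1}, \overline{b_2}$ introduced by the \textsc{DictApp} rule. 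Given an arbitrary derivation $\dec{Q}{\Gamma}{e_1~\dictapp{e_2}{\TC~a}}{\rho}$, I will show by induction on it that $\mg{Q}{(\forall\overline{b}.\,C_b \Rightarrow \rho)}{Q}{\rho}$. Instantiating this at $\rho = \sigma$, the principal type of $e_1~\dictapp{e_2}{\TC~a}$ in the empty context, forces $\forall\overline{b}.\,C_b \Rightarrow \sigma$ to itself be a principal type; because $b^* \in \ftv{C_b}$ but $b^* \notin \ftv{\sigma}$ by freshness, this principal type is context-ambiguous, as required.

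The non-trivial case of the induction is \textsc{DictApp}. There $\rho = \forall\overline{b_1}\overline{b_2}.\,[a \mapsto \tau_2]((\overline{C_1}, \overline{C_2}) \Rightarrow \tau_1)$ and the rule supplies a subderivation $\dec{Q}{\Gamma}{e_2}{\dict{\TC}~\tau_2}$. Applying \cref{thm:principal-types} to $\sigma'$ yields a substitution $\theta = [\overline{b \mapsto \tau}]$ with $\theta(\tau') = \tau_2$ and $\entails{Q}{\theta(\overline{C'})}$, and in particular $\entails{Q}{\theta(C_b)}$. To witness $\mg{Q}{(\forall\overline{b}.\,C_b \Rightarrow \rho)}{Q}{\rho}$, I extend $\theta$ by the identity on $\overline{b_1}, \overline{b_2}$; the monotype $[a \mapsto \tau_2]\tau_1$ is preserved because $\overline{b}$ is fresh from $\tau_1$ and $\tau_2$, and the entailment obligation $\entails{Q \wedge [a \mapsto \tau_2](\overline{C_1} \wedge \overline{C_2})}{Q \wedge \theta(C_b) \wedge [a \mapsto \tau_2](\overline{C_1} \wedge \overline{C_2})}$ follows immediately from $\entails{Q}{\theta(C_b)}$ by weakening and reflexivity.

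The remaining four cases, $\introforall$, $\elimforall$, $\introct$, and $\elimct$, will carry through essentially verbatim as in \cref{lemma:ambi-app2}: each rule simply pushes the extra outermost $\forall\overline{b}.\,C_b \Rightarrow$ through its quantifier, context extension, or discharge, using the same combination of substitution extension and entailment weakening. The main obstacle is therefore concentrated in the \textsc{DictApp} base case and is almost entirely a bookkeeping issue: keeping the three independent families of bound type variables ($\overline{b}$, $\overline{b_1}$, $\overline{b_2}$) disjoint and fresh, checking that $[a \mapsto \tau_2]$ commutes harmlessly with the extended substitution, and tracing that the principal-type substitution supplied for $e_2$ lifts exactly to the witness required for the outer type of $e_1~\dictapp{e_2}{\TC~a}$. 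Once this setup is in place, the entailment step that transports ambiguity from $\sigma'$ up to $\rho$ is immediate.
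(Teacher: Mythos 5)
Your proposal is correct and takes essentially the same route as the paper, whose entire proof of this lemma is ``Similar to \cref{lemma:ambi-app2}'': contraposition, followed by induction on the derivation of $e_1~\dictapp{e_2}{\TC~a}$ showing that the ambiguity witness $C_b$ can be prepended to any derived type while remaining at least as general. Your treatment of the one genuinely new case, \textsc{DictApp} --- using \cref{thm:principal-types} on the subderivation for $e_2$ to obtain $\entails{Q}{\theta(C_b)}$ and then extending $\theta$ by the identity on $\overline{b_1},\overline{b_2}$ --- is exactly the analogue of the \textsc{App} base case in \cref{lemma:ambi-app2}, so it matches the intended argument.
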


\begin{proof}
Similar to \cref{lemma:ambi-app2}.
\end{proof}

\begin{lemma}\label{lemma:abs-ih-ambi}
  If the principal type $\sigma$ of $\lambda{}x.\,e$ is context-unambiguous
  where $\dec{\epsilon}{\Gamma}{\lambda{}x.\,e}{\sigma}$, then the principal
  type $\sigma'$ of $e$ is context-unambiguous where
  $\dec{\epsilon}{(x \ty \tau_1), \Gamma}{e}{\sigma'}$.
\end{lemma}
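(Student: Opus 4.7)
The plan is to mirror the contrapositive argument used in \cref{lemma:ambi-app2}. I assume that the principal type $\sigma' = \forall\overline{a'}.\,\overline{C'} \Rightarrow \tau'$ of $e$ in the extended context $(x \ty \tau_1), \Gamma$ is context-ambiguous, so there exists $a \in \overline{a'} \cap \ftv{\overline{C'}}$ with $a \notin \ftv{(x \ty \tau_1), \Gamma, \tau'}$; pick some $C_a \in \overline{C'}$ containing $a$. The decisive observation is that $a \notin \ftv{\tau_1}$, because $\tau_1$ lives in the extended context. I $\alpha$-rename so that $\overline{a'}$ is fresh with respect to $\Gamma$, the $Q$ under consideration, and any monotypes appearing in the derivations to follow.

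Following \cref{lemma:ambi-app2}, I take an arbitrary derivation $\dec{Q}{\Gamma}{\lambda x.\,e}{\sigma}$ and show by induction on its structure that $\mg{Q}{(\forall\overline{a'}.\,C_a \Rightarrow \sigma)}{Q}{\sigma}$. Instantiated at a principal derivation, this exhibits $\forall\overline{a'}.\,C_a \Rightarrow \sigma$ as another principal type for $\lambda x.\,e$ in $\Gamma$ that is context-ambiguous (since $a \in \ftv{C_a}$ but $a \notin \ftv{\Gamma,\sigma}$), contradicting the hypothesis. The base case is \textsc{Abs}: here $\sigma = \tau_1 \rightarrow \tau_2$ with premise $\dec{Q}{(x \ty \tau_1), \Gamma}{e}{\tau_2}$. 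Because $\sigma'$ is principal for $e$ in this extended context, we obtain $\theta = [\overline{a' \mapsto \tau}]$ with $\theta(\tau') = \tau_2$ and $\entails{Q}{\theta(\overline{C'})}$, hence $\entails{Q}{\theta(C_a)}$. Freshness of $\overline{a'}$ gives $\theta(\tau_1 \rightarrow \tau_2) = \tau_1 \rightarrow \tau_2$, so $\theta$ witnesses the required $\mg{Q}{(\forall\overline{a'}.\,C_a \Rightarrow \tau_1 \rightarrow \tau_2)}{Q}{\tau_1 \rightarrow \tau_2}$.

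The remaining four cases $\introforall$, $\elimforall$, $\introct$, and $\elimct$ go through essentially verbatim from \cref{lemma:ambi-app2}: each appeals to the induction hypothesis on the smaller derivation and then extends the witnessing substitution with the identity on the newly bound type variable (for $\introforall$ and $\elimforall$), reshuffles the constraint conjuncts in the entailment ($\introct$), or uses the available $\entails{Q}{C}$ to discharge the extra premise ($\elimct$). The main obstacle is, as in \cref{lemma:ambi-app2}, controlling the substitution $\theta$ in the \textsc{Abs} case so that $\overline{a'}$ does not sneak back into $\tau_2 = \theta(\tau')$ via the images of $\theta$; this is handled by the initial $\alpha$-renaming together with the crucial fact that $\tau_1$ appears in the extended context witnessing context-ambiguity of $\sigma'$, which forces $a \notin \ftv{\tau_1}$ so the Abs-derived arrow remains free of $a$ on its left-hand side as well.
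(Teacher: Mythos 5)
Your proposal is correct and follows exactly the route the paper intends: the paper's own proof is just the remark ``Similar to \cref{lemma:ambi-app2}; intuitively, the $\tau_1$ simply moves from the derived type to the context,'' and your contrapositive induction, with the \textsc{Abs} base case and the observation that $a \notin \ftv{\tau_1}$ because $\tau_1$ now sits in the extended context, is precisely the spelled-out version of that remark.
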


\begin{proof}
  Similar to \cref{lemma:ambi-app2}.
  Intuitively, the $\tau_1$ simply moves from the derived type to the context.
\end{proof}

\subsection{Fully saturated types and terms}%
\label{sec:fully-saturated-types-and-terms}

Say we have the expression $e = \emph{show}$ where the variable \emph{show} has type $\forall{}a.\,\emph{Show}~a \Rightarrow a \rightarrow \emph{String}$.
Using the \textsc{Var} rule, we could derive the type $\sigma_1 = \forall{}a.\,\emph{Show}~a \Rightarrow a \rightarrow \emph{String}$ and the term $t_1 = \emph{show}$.
By additionally using the $\elimforall$ and $\elimct$ rules we could also derive the type $\sigma_2 = \emph{Int} \rightarrow \emph{String}$ and the term $t_2 = \emph{show}~\emph{Int}~d$ with $Q = d \ty \emph{Show}~\emph{Int}$.
While these terms were both derived for the same expression $e$, they are clearly not equivalent.

So before we can determine whether two terms are equivalent, we first have to saturate both derived terms by instantiating their type variables with types which we will call $\overline{\tau_a}$ and $\overline{\tau_b}$, and their constraints with evidence terms which we will call $\overline{\tev[C_1]}$ and $\overline{\tev[C_2]}$.

In the example above, we would choose $\overline{\tau_a} = \emph{Int}$ and $\overline{\tev[C_1]} = \emph{eqInt}$.
The second type and term are already saturated, hence $\overline{\tau_b}$ and $\overline{\tev[C_2]}$ are both empty.
The fully saturated types are then both \ensuremath{\Conid{Int}\to \Conid{String}}, the saturated terms will be \ensuremath{\Varid{show}\;\Conid{Int}\;\Varid{eqInt}} and \ensuremath{\Varid{show}\;\Conid{Int}\;\Varid{d}}.

Next, the terms $t_1$ and $t_2$ might still refer to assumptions from $Q_1$ or $Q_2$.
In our example above, $t_2$ refers to $d$.
Consequently, we also require evidence for these assumptions, namely $\overline{\tev[Q_1]}$ and $\overline{\tev[Q_2]}$.
We use the $\eta_1$ and $\eta_2$ substitutions to replace the respective dictionary variables in the terms.
For this example we choose $\eta_2 = [d \mapsto eqInt]$.
Thus, in the end, we want to know whether $\tred{\emph{show}~\emph{Int}~\emph{eqInt}}{\eta_2{}(\emph{show}~\emph{Int}~d)}$ is true, which clearly is the case, as after applying the substitution we have the same term on both sides.

\subsection{Equivalent Type Instantiations}%
\label{sec:equivalent-type-instantiations}

When saturating the types, we have to make sure the same types are chosen to instantiate the type variables in the constraints in both typing derivations.
\begin{example}\label{ex:show-a-b}
  $\textit{show} \ty \forall{}a.\,\textit{Show}~a \Rightarrow a \rightarrow
  \textit{String}$ and
  $\textit{show} \ty \forall{}b.\,\textit{Show}~b \Rightarrow b \rightarrow
  \textit{String}$.
\end{example}
In \cref{ex:show-a-b} $a$ and $b$ should be instantiated with the same type before we can determine whether their elaborations are equivalent.
If $a$ were instantiated with \textit{Int} and $b$ with \textit{Bool}, the elaborated terms would not be equivalent, as different dictionaries would have been chosen for the differing constraints.
It is tempting to express this as $\theta_1{}(\overline{C_1}) = \theta_2{}(\overline{C_2})$ where $\theta_1$ and $\theta_2$ are substitutions of the type variables $\overline{a}$ and $\overline{b}$.
In the \cref{ex:show-a-b}, this would work, but consider the following example:
\begin{example}\label{ex:eq-a-eq-b}
  Say there is some $e$ for which the type $\forall{}a\,b.\,(\textit{Eq}~a, \textit{Eq}~b) \Rightarrow (a, b) \rightarrow Bool$ and term $t_1 = t$ are derived.
  Then it is possible that for this same $e$, the following type and term are derived: $\forall{}a\,b.\,(\textit{Eq}~a, \textit{Eq}~b) \Rightarrow (b, a) \rightarrow Bool$ with $t_2 = \Lambda{}a.\,\Lambda{}b.\,\lambda{}(d_1 \ty \textit{Eq}~a).\,\lambda{}(d_2 \ty \textit{Eq}~b).\,t~b~a~d_2~d_1$.
  Note that $a$ and $b$ are swapped in the monotype ($\tau$), but the order of the type variables and the constraints is the same.
\end{example}
If we know that $\theta_1 = [a \mapsto \textit{Int}, b \mapsto \textit{Bool}]$ and $\theta_2 = [a \mapsto \textit{Int}, b \mapsto \textit{Bool}]$, we have that $\theta_1{}(\textit{Eq}~a, \textit{Eq}~b) = \theta_2{}(\textit{Eq}~a, \textit{Eq}~b)$.
But the saturated terms:
\begin{align*}
  &\ensuremath{\Varid{t}\;\Conid{Int}\;\Conid{Bool}\;\Varid{eqInt}\;\Varid{eqBool}} \text{ and} \\
  &(\Lambda{}a.\,\Lambda{}b.\,\lambda{}(d_1 \ty \textit{Eq}~a).\,\lambda{}(d_2 \ty \textit{Eq}~b).\,t~b~a~d_2~d_1)~\ensuremath{\Conid{Int}\;\Conid{Bool}\;\Varid{eqInt}\;\Varid{eqBool}} \\
  & \text{\quad which reduces to:} \\
  &\ensuremath{\Varid{t}\;\Conid{Bool}\;\Conid{Int}\;\Varid{eqBool}\;\Varid{eqInt}}
\end{align*}
are clearly not equivalent.

In summary, we cannot express this using just the type variables or the constraints, because their order, number (as they might already been instantiated in one of the derivations), and names might differ.

However, this can be expressed by requiring the monotypes to be equal: $\theta_1{}(\tau_1) = \theta_2{}(\tau_2)$.
If we apply this to \cref{ex:eq-a-eq-b}, this would lead to:
\begin{align*}
  [a \mapsto \ensuremath{\Conid{Int}}, b \mapsto \ensuremath{\Conid{Bool}}]((a, b) \rightarrow \ensuremath{\Conid{Bool}}) = (\ensuremath{\Conid{Int}}, \ensuremath{\Conid{Bool}}) \rightarrow \ensuremath{\Conid{Bool}} = [a \mapsto \ensuremath{\Conid{Bool}}, b \mapsto \ensuremath{\Conid{Int}}]((b, a) \rightarrow \ensuremath{\Conid{Bool}})
\end{align*}
Note that we do not actually need this information about the monotypes, we just want to know that on both sides the constraints are instantiated with the same types.

However, this requirement is too strict.
Consider the following example:
\begin{example}\label{ex:a-bool} $e_1~e_2$ where $e_1 \ty \forall{}a.\,(a \rightarrow a) \rightarrow \textit{Bool}$ and $e_2 \ty \forall{}a.\,a \rightarrow a$.
\end{example}
In one typing derivation, $a$ could be instantiated with \textit{Int}, giving us:\\
$e_1~e_2 \ty \textit{Bool} \rightsquigarrow (t_1~\textit{Int})~(t_2~\textit{Int})$.

In the other typing derivation, $a$ could be instantiated with \textit{Bool}, giving us:\\
$e_1~e_2 \ty \textit{Bool} \rightsquigarrow (t_1'~\textit{Bool})~(t_2'~\textit{Bool})$.

The condition $\theta_1{}(\tau_1') = \theta_2{}(\tau_2')$, or $(\textit{Int} \rightarrow \textit{Int}) = (\textit{Bool} \rightarrow \textit{Bool})$, is clearly not true, but the resulting terms are still equivalent, as equivalence is defined for erased terms, so the type applications disappear anyway.
Thus, we cannot prove the condition $\theta_1{}(\tau_1) = \theta_2{}(\tau_2)$ (for the subderivation) in this case.
Moreover, it does not seem necessary in this case.

However, if we add a constraint to \cref{ex:a-bool}:
\begin{example}\label{ex:a-eq-a-bool}
  $e_1 \ty \forall{}a.\,\textit{Eq}~a \Rightarrow (a \rightarrow a)
  \rightarrow \textit{Bool}$ and $e_2 \ty \forall{}a.\,a \rightarrow a$.
\end{example}

Now the choice for $a$ does matter, as it determines which dictionary will be chosen.
For instance, the type \textit{Bool} may be derived using the assumptions $Q_1 = \textit{Eq}~\textit{Int}$ when $a$ was instantiated with \textit{Int}, but the type \textit{Bool} may also be derived using the assumptions $Q_2 = \textit{Eq}~\textit{Bool}$ when $a$ was instantiated with \textit{Bool}.
What is especially unfortunate, is that the type to instantiate $a$ with is guessed!
The condition $\theta_1{}(\tau_1) = \theta_2{}(\tau_2)$, or in this example $\textit{Bool} = \textit{Bool}$, does not ensure that the type variables in the constraints are instantiated with the same types.
The condition does not seem strict enough.

Fortunately, this case is caught by the condition established in \cref{sec:principal-types}, namely that \emph{the principal type scheme of e must be context-unambiguous}.

Let us apply this to \cref{ex:a-eq-a-bool}.
First, we must determine the principal type scheme of $e_1~e_2$.
This is not \textit{Bool}, but in fact $\forall{}a.\,\textit{Eq}~a \Rightarrow \textit{Bool}$, which is an ambiguous type, as the type variable $a$ occurs in the constraints, but not in the monotype (\textit{Bool}).
The condition thus rules out the \cref{ex:a-eq-a-bool}.

Let us apply the condition to \cref{ex:a-bool}.
The principal type scheme is $\forall{}a.\,\textit{Bool}$, which is \emph{unambiguous}.
Consequently, the condition does not rule out this example.
Rightfully so, as derivations with different choices for $a$ will result in equivalent terms.

The condition $\theta_1{}(\tau_1) = \theta_2{}(\tau_2)$ along with the condition that the principal type scheme of $e$ must be unambiguous is too strict.
As a next step, we distinguish between type variables that occur in the constraints and type variables that do not.
Only the former matter, as their instantiation determines the constraints, which is what we are interested in.
Moreover, their presence can indicate ambiguity, whereas the presence of the former cannot.
The latter type variables may differ, as demonstrated in the previous paragraph.
This leads to the following definition:

\begin{definition}[Monotype equality modulo unconstrained type variables]
  We define $\tau_1$ and $\tau_2$ to be equal modulo unconstrained type variables of $\sigma$ if the principal type scheme $\sigma = \forall\overline{c}.\,\overline{C_p} \Rightarrow \tau_p$ is unambiguous and there exist $\theta_1'$, $\theta_2'$, $\theta_p$, $\theta_1{}(\tau_1) = \theta_1'{}(\theta_p{}(\tau_p))$ and $\theta_2{}(\tau_2) = \theta_2'{}(\theta_p{}(\tau_p))$ where $\dom{\theta_1', \theta_2'} \cap \ftv{\overline{C_p}} = \emptyset$.
\end{definition}

We use the notation $\taueq{\sigma}{\theta_1{}(\tau_1)}{\theta_1{}(\tau_2)}$ for this.

\newcommand{\thetainst}[1]{\theta_{#1}''}
\newcommand{\thetachoose}[1]{\theta_{#1}'''}
\newcommand{\thetachosen}[1]{\theta_{#1}'}
\begin{lemma}\label{lemma:tau-eq-app12}
  If:
  \begin{align}
    &\dec{Q_1}{\Gamma}{e_1}{\tau_1' \rightarrow \tau_2'} \notag \\
    &\dec{Q_2}{\Gamma}{e_1}{\tau_1'' \rightarrow \tau_2''} \notag \\
    &\dec{Q_1}{\Gamma}{e_2'}{\tau_1'} \notag \\
    &\dec{Q_2}{\Gamma}{e_2'}{\tau_1''} \notag \\
    &\taueq{\sigma}{\theta_1{}(\tau_2')}{\theta_2{}(\tau_2'')} \label{eq:tau-eq-app-tau-eq} \\
    &\text{the principal type } \sigma_1 \text{ is context-unambiguous where } \dec{\epsilon}{\Gamma}{e_1'}{\sigma_1} \label{eq:tau-eq-app-principal}
  \end{align}
  Then: $\taueq{\sigma_1}{\theta_1{}(\tau_1' \rightarrow \tau_2')}{\theta_2{}(\tau_1'' \rightarrow \tau_2'')}$.
\end{lemma}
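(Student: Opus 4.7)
The plan is to unpack the hypothesis $\taueq{\sigma}{\theta_1(\tau_2')}{\theta_2(\tau_2'')}$ and the principality of $\sigma_1$, then exhibit the witnesses required by the conclusion $\taueq{\sigma_1}{\theta_1(\tau_1' \to \tau_2')}{\theta_2(\tau_1'' \to \tau_2'')}$. Write $\sigma_1 = \forall \overline{a}.\,\overline{C_0} \Rightarrow \tau_d \to \tau_c$ with $\overline{a}$ fresh with respect to $\Gamma$. \Cref{thm:principal-types} applied to $\dec{Q_1}{\Gamma}{e_1}{\tau_1' \to \tau_2'}$ and $\dec{Q_2}{\Gamma}{e_1}{\tau_1'' \to \tau_2''}$ yields substitutions $\theta_{s,1}, \theta_{s,2}$ on $\overline{a}$ with $\theta_{s,1}(\tau_d) = \tau_1'$, $\theta_{s,1}(\tau_c) = \tau_2'$, and symmetrically for $\theta_{s,2}$. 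Unfolding the hypothesis gives the unambiguous principal type $\sigma = \forall \overline{c}.\,\overline{C_p} \Rightarrow \tau_p$ of the whole application, together with substitutions $\theta_p, \theta_1', \theta_2'$ satisfying $\theta_1(\tau_2') = \theta_1'(\theta_p(\tau_p))$, $\theta_2(\tau_2'') = \theta_2'(\theta_p(\tau_p))$, and $\dom{\theta_1', \theta_2'} \cap \ftv{\overline{C_p}} = \emptyset$.

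To construct the witnesses at the level of $\sigma_1$, I define $\theta_{p'}(a) := \theta_1(\theta_{s,1}(a))$ for $a \in \overline{a} \cap \ftv{\overline{C_0}}$, extended by the identity, and define $\theta_1'', \theta_2''$ on the unconstrained quantified variables $\overline{a} \setminus \ftv{\overline{C_0}}$ by $\theta_1''(a) := \theta_1(\theta_{s,1}(a))$ and $\theta_2''(a) := \theta_2(\theta_{s,2}(a))$, again extended by the identity. By construction $\dom{\theta_1'', \theta_2''} \cap \ftv{\overline{C_0}} = \emptyset$, and since $\overline{a}$ is fresh the values $\theta_1(\theta_{s,1}(a))$ assigned by $\theta_{p'}$ contain no variable in $\dom{\theta_i''}$, so both $\theta_1''$ and $\theta_2''$ act trivially on them. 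A direct calculation now yields the side-$1$ equation $\theta_1''(\theta_{p'}(\tau_d \to \tau_c)) = \theta_1(\tau_1' \to \tau_2')$ unconditionally, and the side-$2$ equation $\theta_2''(\theta_{p'}(\tau_d \to \tau_c)) = \theta_2(\tau_1'' \to \tau_2'')$ reduces entirely to the \emph{key equality}
\[
    \theta_1(\theta_{s,1}(a)) \;=\; \theta_2(\theta_{s,2}(a)) \qquad \text{for all } a \in \overline{a} \cap \ftv{\overline{C_0}}.
\]

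The key equality is proved by case analysis on where $a$ appears in $\sigma_1$. Context-unambiguity of $\sigma_1$ combined with freshness of $\overline{a}$ forces $a \in \ftv{\tau_d} \cup \ftv{\tau_c}$. If $a \in \ftv{\tau_c}$, then $\theta_{s,1}(a)$ and $\theta_{s,2}(a)$ sit at matching positions inside $\tau_2' = \theta_{s,1}(\tau_c)$ and $\tau_2'' = \theta_{s,2}(\tau_c)$; comparing $\theta_1(\tau_2') = \theta_1'(\theta_p(\tau_p))$ with $\theta_2(\tau_2'') = \theta_2'(\theta_p(\tau_p))$ at those positions, which must correspond to constrained quantified variables of $\sigma$ inside $\tau_p$ and therefore lie in a part of $\theta_p(\tau_p)$ that $\theta_1'$ and $\theta_2'$ are both forbidden from substituting, yields the equality. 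The difficult case is when $a$ occurs only in the domain $\tau_d$: then $\theta_{s,1}(a)$ appears only in $\tau_1'$ and $\theta_{s,2}(a)$ only in $\tau_1''$, neither of which is mentioned in the hypothesis. To close this gap I would invoke \cref{lemma:ambi-app2} to obtain that the principal type $\sigma_2$ of $e_2'$ is context-unambiguous, and then argue that constrained domain-only variables of $\sigma_1$ are linked, through the unification implicit in forming $\sigma$ from $\sigma_1$ and $\sigma_2$, to variables of $\sigma_2$ that by unambiguity of $\sigma$ must survive into $\tau_p$ and hence into the codomain, reducing the domain case to the codomain case already handled.

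The main obstacle is exactly this last step, since the hypothesis only relates the codomains $\tau_2'$ and $\tau_2''$ while $\sigma_1$'s constrained variables may lie entirely in its domain. A cleaner strategy, anticipating this, is to strengthen the lemma and prove it simultaneously with a companion statement $\taueq{\sigma_2}{\theta_1(\tau_1')}{\theta_2(\tau_1'')}$, so that the domain case on one side is discharged by the codomain case on the other and no explicit manipulation of the unifier behind $\sigma$ is needed.
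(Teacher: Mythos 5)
Your skeleton matches the paper's: unpack the monotype-equality hypothesis \cref{eq:tau-eq-app-tau-eq}, use principality of $\sigma_1$ to obtain instantiation substitutions (your $\theta_{s,1},\theta_{s,2}$; the paper's $\theta_1'',\theta_2''$), and assemble witnesses for the conclusion. But your construction commits to a \emph{new} principal-side substitution $\theta_{p'}$ read off from side~1, which forces the ``key equality'' $\theta_1(\theta_{s,1}(a)) = \theta_2(\theta_{s,2}(a))$ on every constrained quantified variable of $\sigma_1$ --- and, as you concede, you cannot discharge it for constrained variables occurring only in the domain $\tau_d$. That is a genuine gap: the proposal as written is not a proof, and the fallback of mutual induction with the companion statement about $e_2'$ (essentially \cref{lemma:tau-eq-app1}) does not obviously help, because that companion's hypothesis is likewise only about the codomains, so the missing domain information still has to come from somewhere.

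The paper closes the argument differently, and the difference is exactly what dissolves your hard case. It \emph{reuses} the $\theta_p$ supplied by the hypothesis as the shared principal substitution in the conclusion, and merely extends the side-specific substitutions $\theta_1',\theta_2'$ to $\theta_1''',\theta_2'''$ on the leftover variables $\ftv{\theta_p(\tau_{1p})} \setminus \dom{\theta_1'}$, setting $\theta_1'''(a) = \theta_1(\theta_1''(a))$ and $\theta_2'''(a) = \theta_2(\theta_2''(a))$ there. These two values are \emph{allowed to differ}: the definition of $\taueq{\sigma_1}{\cdot}{\cdot}$ only requires $\dom{\theta_1''',\theta_2'''} \cap \ftv{\overline{C_p'}} = \emptyset$. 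The load-bearing step is then the paper's claim \cref{eq:g4}, that $\ftv{\theta_p(\tau_{1p})} \cap \ftv{\overline{C_p'}} = \emptyset$, i.e.\ every constrained variable of $\sigma_1$ occurring in the domain is already in $\dom{\theta_p}$ and hence instantiated identically on both sides by the shared $\theta_p$ --- so no pointwise equality between the two sides ever needs to be proved for them. The idea you are missing is precisely this: do not rebuild the principal substitution from one side and then try to show the other side agrees; keep the hypothesis's $\theta_p$ and show only that the \emph{surviving} domain variables are unconstrained. (In fairness, the paper's own justification of \cref{eq:g4} from \cref{eq:tau-eq-app-principal} is terse, and your worry about domain-only constrained variables is exactly the content of that step; but identifying the obstacle is not the same as resolving it, and your proposal resolves neither branch.)
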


\begin{proof}
  From  \cref{eq:tau-eq-app-tau-eq} follows that:
  \begin{align}
    &\theta_1{}(\tau_2') = \thetachosen{1}(\theta_p{}(\tau_{2p})) \label{eq:g1} \\
    &\theta_2{}(\tau_2'') = \thetachosen{2}(\theta_p{}(\tau_{2p})) \notag \\
    &\dom{\thetachosen{1}, \thetachosen{2}} \cap \ftv{\overline{C_p}} = \emptyset \label{eq:g3}
  \end{align}
  Where $\overline{C_p}$ are the constraints in $\sigma$.

  Say
  $\sigma_1 = \forall\overline{a}.\,\overline{C_p'} \Rightarrow \tau_{1p}
  \rightarrow \tau_{2p}$. Both $\tau_1' \rightarrow \tau_2'$ and
  $\tau_1'' \rightarrow \tau_2''$ are instantiation of this principal type for
  $e_1'$, which can be stated as:
  \begin{align}
    &\tau_1' \rightarrow \tau_2' = \thetainst{1}(\tau_{1p} \rightarrow \tau_{2p}) \label{eq:g2} \\
    &\tau_1'' \rightarrow \tau_2'' = \thetainst{2}(\tau_{1p} \rightarrow \tau_{2p}) \notag
  \end{align}

  From \cref{eq:tau-eq-app-principal} follows that:
  \begin{align}
    &\dom{\ftv{\theta_p{}(\tau_{1p})}} \cap \ftv{\overline{C_p'}} = \emptyset \label{eq:g4}
  \end{align}

  To prove
  $\taueq{\sigma_1}{\theta_1{}(\tau_1' \rightarrow
    \tau_2')}{\theta_2{}(\tau_1'' \rightarrow \tau_2'')}$, we must prove that
  there exist $\thetachoose{1}$, $\thetachoose{2}$, such that:
  \begin{align}
    &\theta_1{}(\tau_1') = \thetachoose{1}(\theta_p{}(\tau_{1p})) \label{eq:tb1} \\
    &\theta_2{}(\tau_1'') = \thetachoose{2}(\theta_p{}(\tau_{1p})) \notag \\
    &\theta_1{}(\tau_2') = \thetachoose{1}(\theta_p{}(\tau_{2p})) \label{eq:tb3} \\
    &\theta_2{}(\tau_2'') = \thetachoose{2}(\theta_p{}(\tau_{2p})) \notag \\
    &\dom{\thetachoose{1}, \thetachoose{2}} \cap \ftv{\overline{C_p'}} = \emptyset \label{eq:tb5}
  \end{align}

  \[ \text{Take } \thetachoose{1}(a) =
    \begin{cases}
      \thetachosen{1}(a)           & \quad \text{ if } a \in \dom{\thetachosen{1}} \\
      \theta_1{}(\thetainst{1}(a)) & \quad \text{ if } a \in \ftv{\theta_p{}(\tau_{1p})} \setminus \dom{\thetachosen{1}}
    \end{cases}
  \]

  \begin{itemize}
  \item \cref{eq:tb3} follows from the definition of $\thetachoose{1}$ and \cref{eq:g1}.
  \item \cref{eq:tb5} follows from the definition of $\thetachoose{1}$,
    \cref{eq:g3}, and \cref{eq:g4}.
  \item To prove \cref{eq:tb1}, we must prove:
    \[
      \theta_1{}(\tau_1') = \thetachoose{1}(\theta_p{}(\tau_{1p}))
    \]
    As $\tau_1' = \thetainst{1}(\tau_{1p})$, which follows from \cref{eq:g2},
    we can rewrite this to:
    \[
      \theta_1{}(\thetainst{1}(\tau_{1p})) = \thetachoose{1}(\theta_p{}(\tau_{1p}))
    \]
    Using \cref{lemma:subst-equiv}, we can restate our goal as:
    \[
      \forall a \in \ftv{\tau_{1p}}, \theta_1{}(\thetainst{1}(a)) = \thetachoose{1}(\theta_p{}(a))
    \]
    We distinguish two cases:
    \begin{itemize}
    \item If $a \in \dom{\thetachosen{1}}$, then, based on the definition of
      $\thetachoose{1}$, we must prove that:
      \begin{align}
        \theta_1{}(\thetainst{1}(a)) = \thetachosen{1}(\theta_p{}(a)) \label{eq:goal1}
      \end{align}
      From \cref{eq:g2}, we have that:
      \[
        \tau_2' = \thetainst{1}(\tau_{2p})
      \]
      We apply $\theta_1$ to both sides:
      \[
        \theta_1{}(\tau_2') = \theta_1{}(\thetainst{1}(\tau_{2p}))
      \]
      From \cref{eq:g1}, we get:
      \[
        \thetachosen{1}(\theta_p{}(\tau_{2p})) = \theta_1{}(\tau_2') = \theta_1{}(\thetainst{1}(\tau_{2p}))
      \]
      And thus:
      \[
        \theta_1{}(\thetainst{1}(\tau_{2p})) = \thetachosen{1}(\theta_p{}(\tau_{2p}))
      \]
      If we now apply \cref{lemma:subst-equiv}, we get:
      \[
        \forall a \in \ftv{\tau_{1p}}, \theta_1{}(\thetainst{1}(a)) = \thetachosen{1}(\theta_p{}(a))
      \]
      Since $\dom{\thetachosen{1}} \subseteq \ftv{\tau_{2p}}$, we get:
      \[
        \forall a \in \dom{\thetachosen{1}}, \theta_1{}(\thetainst{1}(a)) = \thetachosen{1}(\theta_p{}(a))
      \]
      From which our goal \cref{eq:goal1} follows.

    \item If
      $a \in \ftv{\theta_p{}(\tau_{1p})} \setminus \dom{\thetachosen{1}}$,
      then, based on the definition of $\thetachoose{1}$, we must prove that:
      \[
        \theta_1{}(\thetainst{1}(a)) = \theta_1{}(\thetainst{1}(\theta_p{}(a)))
      \]
      Since
      $a \in \ftv{\theta_p{}(\tau_{1p})} \setminus \dom{\thetachosen{1}}$, it
      must be that $a \notin \dom{\theta_p}$, and thus that
      $\theta_p{}(a) = a$. Using this, we can rewrite our goal to:
      \[
        \theta_1{}(\thetainst{1}(a)) = \theta_1{}(\thetainst{1}(a))
      \]
      Thereby finishing this case.
    \end{itemize}

  \end{itemize}

  Analogously for $\thetachoose{2}$.
\end{proof}

\begin{lemma}\label{lemma:tau-eq-app1}
  If:
  \begin{align*}
    &\dec{Q_1}{\Gamma}{e_1}{\tau_1' \rightarrow \tau_2'} \\
    &\dec{Q_2}{\Gamma}{e_1}{\tau_1'' \rightarrow \tau_2''} \\
    &\dec{Q_1}{\Gamma}{e_2'}{\tau_1'} \\
    &\dec{Q_2}{\Gamma}{e_2'}{\tau_1''} \\
    &\taueq{\sigma}{\theta_1{}(\tau_2')}{\theta_2{}(\tau_2'')} \\
    &\text{the principal type } \sigma_2 \text{ is context-unambiguous where } \dec{\epsilon}{\Gamma}{e_2'}{\sigma_2}
  \end{align*}
  Then: $\taueq{\sigma_1}{\theta_1{}(\tau_1')}{\theta_2{}(\tau_1'')}$.
\end{lemma}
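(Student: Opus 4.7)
The plan is to mirror the proof of \cref{lemma:tau-eq-app12}, with the roles of the argument and result positions in the application swapped. The principal type of $e_2'$ has the shape $\sigma_2 = \forall\overline{b}.\,\overline{C_p^{(2)}} \Rightarrow \tau_{1p}$, and both $\tau_1'$ and $\tau_1''$ are instantiations $\thetainst{1}(\tau_{1p})$ and $\thetainst{2}(\tau_{1p})$ of this principal type (the conclusion being the analogue $\taueq{\sigma_2}{\theta_1(\tau_1')}{\theta_2(\tau_1'')}$).

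First, I would unpack the hypothesis $\taueq{\sigma}{\theta_1(\tau_2')}{\theta_2(\tau_2'')}$ via the definition of equality modulo unconstrained type variables, obtaining witnesses $\theta_p, \thetachosen{1}, \thetachosen{2}$ and a decomposition $\sigma = \forall\overline{c}.\,\overline{C_p} \Rightarrow \tau_p$, together with the disjointness $\dom{\thetachosen{1}, \thetachosen{2}} \cap \ftv{\overline{C_p}} = \emptyset$. Because $\sigma$ is the principal type of the application $e_1'\,e_2'$, its constraint set $\overline{C_p}$ absorbs the constraints $\overline{C_p^{(2)}}$ of $\sigma_2$ (up to the unification produced by the \textsc{App} rule), so the disjointness lifts to $\dom{\thetachosen{1}, \thetachosen{2}} \cap \ftv{\overline{C_p^{(2)}}} = \emptyset$.

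Next, I would construct the required witness substitutions exactly as in \cref{lemma:tau-eq-app12}, via a case split:
\[
\thetachoose{1}(a) =
\begin{cases}
\thetachosen{1}(a) & \text{if } a \in \dom{\thetachosen{1}}, \\
\theta_1(\thetainst{1}(a)) & \text{if } a \in \ftv{\theta_p(\tau_{1p})} \setminus \dom{\thetachosen{1}},
\end{cases}
\]
and symmetrically for $\thetachoose{2}$. The three defining obligations for $\taueq{\sigma_2}{\theta_1(\tau_1')}{\theta_2(\tau_1'')}$ then follow in parallel with \cref{lemma:tau-eq-app12}: the disjointness holds by construction, and the equations $\theta_1(\tau_1') = \thetachoose{1}(\theta_p(\tau_{1p}))$ and its counterpart reduce, via \cref{lemma:subst-equiv}, to a pointwise check on each $a \in \ftv{\tau_{1p}}$, which splits into the two branches of the definition and closes each.

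The main obstacle is the transfer step in the first paragraph. In \cref{lemma:tau-eq-app12}, the principal type of $e_1'$ shared its result position $\tau_{2p}$ with the monotype of $\sigma$, making the derivation of $\theta_1(\thetainst{1}(\tau_{2p})) = \thetachosen{1}(\theta_p(\tau_{2p}))$ immediate from the hypothesis. Here the monotype of $\sigma$ is unrelated to $\tau_{1p}$, so establishing that constrained variables of $\sigma_2$ remain inside $\ftv{\overline{C_p}}$ requires a careful argument about how the \textsc{App} rule's unification preserves the argument's constraint context inside $\sigma$. Once this transfer is in place, the remaining steps are a routine adaptation of the earlier proof, and the argument reduces to \emph{``similar to \cref{lemma:tau-eq-app12}''} as the paper phrases it.
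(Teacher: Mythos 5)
Your proposal takes essentially the same route as the paper: the paper's entire proof of this lemma is the single line ``Similar to \cref{lemma:tau-eq-app12}'', and your adaptation --- viewing $\tau_1'$ and $\tau_1''$ as instantiations of the monotype of $e_2'$'s principal type and rebuilding $\thetachoose{1},\thetachoose{2}$ by the same two-branch case split --- is exactly the intended elaboration. The transfer step you flag as the main obstacle (that the constrained variables of $\sigma_2$ survive, via the \textsc{App} rule, into $\ftv{\overline{C_p}}$ of the application's principal type, so the hypothesis on $\sigma$ pins down their instantiation) is indeed the only genuinely new content relative to \cref{lemma:tau-eq-app12}, and the paper supplies no more detail on it than you do.
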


\begin{proof}
  Similar to \cref{lemma:tau-eq-app12}
\end{proof}

\begin{lemma}\label{lemma:subst-equiv}
  $\theta_1{}(\tau) = \theta_2{}(\tau)$ iff
  $\forall a \in \ftv{\tau}, \theta_1{}(a) = \theta_2{}(a)$.
\end{lemma}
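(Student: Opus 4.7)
The plan is to prove both directions of the iff by a single structural induction on the monotype $\tau$, following the grammar $\tau \eqq a \mid \tau_1 \rightarrow \tau_2 \mid \dict{\TC}\,\tau'$ from \cref{fig:syntax-source-lang}. The statement is a standard compositionality fact about substitutions: a substitution's action on a type depends only on its values at the type variables that actually occur free in the type. No machinery beyond the definition of $\ftv{\cdot}$ and the clause-by-clause definition of applying a type substitution $\theta$ is needed.

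For the base case, take $\tau = a$. Then $\ftv{\tau} = \{a\}$, and by definition $\theta_i(\tau) = \theta_i(a)$, so the equality $\theta_1(\tau) = \theta_2(\tau)$ is literally the equality $\theta_1(a) = \theta_2(a)$; both directions are immediate.

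For the inductive step, consider the arrow case $\tau = \tau_1 \rightarrow \tau_2$, where $\ftv{\tau} = \ftv{\tau_1} \cup \ftv{\tau_2}$ and $\theta_i(\tau) = \theta_i(\tau_1) \rightarrow \theta_i(\tau_2)$. Arrow types are equal iff their components are equal, so $\theta_1(\tau) = \theta_2(\tau)$ iff $\theta_1(\tau_j) = \theta_2(\tau_j)$ for $j = 1,2$, which by the induction hypothesis is iff $\forall a \in \ftv{\tau_j}, \theta_1(a) = \theta_2(a)$ for each $j$, which is exactly $\forall a \in \ftv{\tau}, \theta_1(a) = \theta_2(a)$. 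The dictionary case $\tau = \dict{\TC}\,\tau'$ is analogous, using $\ftv{\tau} = \ftv{\tau'}$ and $\theta_i(\tau) = \dict{\TC}\,\theta_i(\tau')$ together with the induction hypothesis on $\tau'$.

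There is no real obstacle here; the only mild care needed is to be explicit that the $\Rightarrow$ direction uses injectivity of the type constructors ($\rightarrow$ and $\dict{\TC}\,\cdot$) to descend into subterms, while the $\Leftarrow$ direction uses congruence of these constructors to rebuild equalities at the outer level. The whole proof fits in a few lines and could reasonably be dispatched with ``by a straightforward induction on $\tau$.''
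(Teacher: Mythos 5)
Your proof is correct and matches the paper exactly: the paper's proof is simply ``Straightforward induction on $\tau$,'' and your write-up is just that induction spelled out case by case over the grammar $a \mid \tau_1 \rightarrow \tau_2 \mid \dict{\TC}~\tau'$. Nothing to add.
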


\begin{proof}
  Straightforward induction on $\tau$.
\end{proof}

\subsection{Canonical Evidence}%
\label{sec:canonical-evidence}

We assume a function $\cev{C}$, which maps a closed constraint $C$ to its unique, canonical evidence as given by the constraint solver.

An evidence term is canonical when it is equivalent with the canonical evidence of the corresponding constraint:
\begin{gather*}
  \sftc{\epsilon}{\tev}{\dict{\TC}~\upsilon} \text{ is canonical} \\
  \Leftrightarrow \\
  \tred{\tev}{\cev{C}} \text{ where } \elab{C}{C}{\dict{\TC}~\upsilon}
\end{gather*}
A context substitution $\phi$ maps all dictionary bindings in a context $\Gamma_\upsilon$ to evidence terms ($\dom{\phi} = \fdv{\Gamma_\upsilon}$).
Other bindings in the context are left unchanged.
We call this context without dictionary bindings $\Gamma_\upsilon'$ ($\fdv{\Gamma_\upsilon'} = \emptyset$).
The context substitution $\phi$ is canonical when, for each dictionary binding in the context, the evidence term obtained after applying $\phi$ canonical is:
\begin{gather*}
  \phi \ty \Gamma_\upsilon \rightarrow \Gamma_\upsilon' \text{ is canonical} \\
  \Leftrightarrow \\
  \forall~(d \ty \dict{\TC}~\upsilon) \in \Gamma_\upsilon, \sftc{\Gamma_\upsilon'}{\phi{}(d)}{\dict{\TC}~\upsilon} \text{ is canonical}
\end{gather*}
An evidence term in a context $\Gamma_\upsilon$ is canonical when the evidence
term obtained after applying any canonical context substitution canonical is:
\begin{gather*}
  \sftc{\Gamma_\upsilon}{\tev}{\dict{\TC}~\upsilon} \text{ is canonical} \\
  \Leftrightarrow \\
  \forall~\phi \ty \Gamma_\upsilon \rightarrow \Gamma_\upsilon', \phi \text{ is canonical}, \text{then } \sftc{\Gamma_\upsilon'}{\phi{}(\tev)}{\dict{\TC}~\upsilon} \text{ is canonical}
\end{gather*}

\begin{lemma}\label{lemma:canonicity-context-weakening}
  If $\tev$ is canonical in $\Gamma_\upsilon$, then $\tev$ is canonical in
  $\Gamma_\upsilon'$ where $\Gamma_\upsilon \subseteq \Gamma_\upsilon'$.
\end{lemma}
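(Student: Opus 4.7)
The plan is to unfold the definition of canonicity of an evidence term in a context and reduce the statement for the larger context $\Gamma_\upsilon'$ back to the assumed canonicity in the smaller context $\Gamma_\upsilon$.

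Concretely, I would take an arbitrary canonical context substitution $\phi' \ty \Gamma_\upsilon' \rightarrow \Gamma_\upsilon''$ (with $\fdv{\Gamma_\upsilon''} = \emptyset$) and show that $\sftc{\Gamma_\upsilon''}{\phi'(\tev)}{\dict{\TC}~\upsilon}$ is canonical. To do so, I would define $\phi$ as the restriction of $\phi'$ to the dictionary variables occurring in $\Gamma_\upsilon$, i.e.\ $\dom{\phi} = \fdv{\Gamma_\upsilon}$ and $\phi(d) = \phi'(d)$ for every such $d$. This restriction is well-defined because $\Gamma_\upsilon \subseteq \Gamma_\upsilon'$ ensures every $(d \ty \dict{\TC}~\upsilon) \in \Gamma_\upsilon$ is also in $\Gamma_\upsilon'$, and the canonicity of $\phi'$ immediately transfers to $\phi$ binding-wise, so $\phi \ty \Gamma_\upsilon \rightarrow \Gamma_\upsilon''$ is itself a canonical context substitution.

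Applying the assumed canonicity of $\tev$ in $\Gamma_\upsilon$ to this $\phi$ yields that $\sftc{\Gamma_\upsilon''}{\phi(\tev)}{\dict{\TC}~\upsilon}$ is canonical. It remains to show $\phi(\tev) = \phi'(\tev)$, which follows from the observation that $\tev$ is well-typed in $\Gamma_\upsilon$, so $\fdv{\tev} \subseteq \fdv{\Gamma_\upsilon} = \dom{\phi}$; the extra bindings that $\phi'$ has over $\phi$ therefore never fire when substituting into $\tev$. Combining the two facts gives canonicity of $\phi'(\tev)$, which is what we needed.

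The argument is almost entirely bookkeeping on the definitions; the only genuinely load-bearing step is the observation $\fdv{\tev} \subseteq \fdv{\Gamma_\upsilon}$, which relies on the implicit well-typedness assumption $\sftc{\Gamma_\upsilon}{\tev}{\dict{\TC}~\upsilon}$ that is baked into the statement ``$\tev$ is canonical in $\Gamma_\upsilon$.'' I expect no real obstacle; the main thing to be careful about is keeping the domains and codomains of $\phi$, $\phi'$ and the associated dictionary-free contexts straight so that canonicity of $\phi$ is obtained pointwise from canonicity of $\phi'$ without any need to change the target context $\Gamma_\upsilon''$.
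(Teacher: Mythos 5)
The paper states this lemma without any proof at all, so there is no ``paper approach'' to compare against; your sketch is the natural argument obtained by unfolding the definitions, and it is essentially correct. The one point that deserves care is the typing of the restricted substitution: the paper's definition of a context substitution fixes its target to be the source context with its dictionary bindings removed, whereas your restricted $\phi$ keeps the larger target $\Gamma_\upsilon''$ (the dictionary-free version of the \emph{bigger} context), since the images $\phi'(d)$ may mention term variables present in $\Gamma_\upsilon'$ but not in $\Gamma_\upsilon$. To invoke the assumed canonicity of $\tev$ in $\Gamma_\upsilon$ you therefore either have to read the definition as quantifying over arbitrary dictionary-free targets (which is almost certainly the intent, given that the surrounding development freely weakens contexts, cf.\ \cref{lemma:systemf-context-weakening}), or insert a small weakening step showing that a canonical substitution into a larger dictionary-free context still triggers the hypothesis. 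Your key observation $\fdv{\tev} \subseteq \fdv{\Gamma_\upsilon}$, justified by the well-typedness baked into $\sftc{\Gamma_\upsilon}{\tev}{\dict{\TC}~\upsilon}$, is exactly what makes $\phi(\tev) = \phi'(\tev)$ and is the right load-bearing step.
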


Say we have a program $e$ with the following type:
\begin{example}\label{ex:ab}
\[
  \sigma = \forall{}a\,b.\, (\textit{Eq}~a, \textit{Ord}~a, \textit{Num}~b) \Rightarrow (a, b) \rightarrow a
\]
\end{example}
Besides a typing derivation producing the type above, another typing derivation might produce the following type:
\begin{example}\label{ex:vw}
\[
  \forall{}v\,w.\, (\textit{Eq}~v, \textit{Ord}~v, \textit{Num}~w) \Rightarrow (v, w) \rightarrow v
\]
\end{example}
As mentioned in \cref{sec:fully-saturated-types-and-terms}, to reason about the dynamic semantics of the terms produced by the two typing derivations, the two types and terms must first be saturated.
This is done by choosing types for the type variables such that the two monotypes are equal modulo unconstrained type variables, see \cref{sec:equivalent-type-instantiations}.
We choose $\theta_1$ for \cref{ex:ab} and $\theta_2$ for \cref{ex:vw}:
\begin{align*}
  \theta_1 &= [a \mapsto \textit{Bool}, b \mapsto \textit{Int}] \\
  \theta_2 &= [v \mapsto \textit{Bool}, w \mapsto \textit{Int}]
\end{align*}
We check whether we have the desired monotype equality:
\begin{align*}
  &\taueq{\sigma}{\theta_1{}((a, b) \rightarrow a)}{\theta_2{}((v, w) \rightarrow v)} \\
  &\taueq{\sigma}{(\textit{Bool}, \textit{Int}) \rightarrow \textit{Bool}}{(\textit{Bool}, \textit{Int}) \rightarrow \textit{Bool}}
\end{align*}
The next step is to pick evidence for the constraints, closed with $\theta_1$ and $\theta_2$:
\begin{align*}
  \theta_1{}(\overline{C_1}) &= (\textit{Eq}~\textit{Bool}, \textit{Ord}~\textit{Bool}, \textit{Num}~\textit{Int}) \\
  \theta_2{}(\overline{C_2}) &= (\textit{Eq}~\textit{Bool}, \textit{Ord}~\textit{Bool}, \textit{Num}~\textit{Int})
\end{align*}
Now is the question: which evidence should be chosen?
To ensure the same dynamic semantics, we should make sure the same evidence is chosen for $\theta_1{}(\overline{C_1})$ and $\theta_2{}(\overline{C_2})$.
If we just pick the canonical evidence produced by the constraint solver, we are safe, as we know the same evidence will be chosen for both.
However, if we were to require that all evidence be canonical, we would exclude the possibility of passing a custom dictionary, since a custom dictionary is most likely not the canonical one.

So we relax this requirement and instead of requiring canonical evidence to be chosen for both, we just require pairwise equivalent evidence for both to be chosen, e.g., $\tev[11]$ for $[a \mapsto \textit{Bool}](\textit{Eq}~a)$, $\tev[21]$ for $[v \mapsto \textit{Bool}](\textit{Eq}~v)$ where $\tred{\tev[11]}{\tev[21]}$, and so on.
This is enough to guarantee us that we have the same dynamic semantics for both.
Furthermore, picking canonical evidence also satisfies this requirement, as the canonical evidence is of course equivalent with itself.

However, what if in the first typing derivation for $e$, the dictionary for $\textit{Eq}~a$ is used while in the second typing derivation for $e$, the super-class dictionary $\textit{Eq}~v$, stored in the dictionary for $\textit{Ord}~v$, is used?
We have that $\tred{\tev[11]}{\tev[21]}$ and $\tred{\tev[12]}{\tev[22]}$, but we do not know whether $\tred{\tev[11]}{\textit{superclass}(\tev[22])}$.

To remedy this, we formulate a more correct rule.
\newlist{evcanon}{enumerate}{10}
\setlist[evcanon]{label*=\arabic*.,leftmargin=1cm}

\begin{definition}[Evidence Canonicity]\label{def:evidence-canonicity}

  We define that $\textit{evidenceCanonicity}(\Gamma_\upsilon^1, \Gamma_\upsilon^2, \theta_1, \theta_2, \overline{\tev[1]}, \overline{\tev[2]},\allowbreak{} \overline{C_1}, \overline{C_2})$ iff all of the following hold:
  \begin{evcanon}
  \item $\overline{\tev[1] \ty \theta_1{}(C_1)} = \overline{\tev[1,0] \ty \theta_1{}(C_{1,0})} \wedge \overline{\tev[1,1] \ty \theta_1{}(C_{1,1})} \wedge \cdots \wedge \overline{\tev[1,k] \ty \theta_1{}(C_{1,k})}$
  \item $\overline{\tev[2] \ty \theta_2{}(C_2)} = \overline{\tev[2,0] \ty \theta_2{}(C_{2,0})} \wedge \overline{\tev[2,1] \ty \theta_2{}(C_{2,1})} \wedge \cdots \wedge \overline{\tev[2,k] \ty \theta_2{}(C_{2,k})}$
  \item All $\overline{\tev[1,0] \ty \theta_1{}(C_{1,0})}$ and $\overline{\tev[2,0] \ty \theta_2{}(C_{2,0})}$ are canonical in respectively $\Gamma_\upsilon^1$ and $\Gamma_\upsilon^2$.
  \item For all $i = 1 \ldots k$, there exists a $C^p_i$ such that:
    \begin{evcanon}
    \item $\exists (\tev[1,i]^p \ty \theta_1{}(C_{i}^p))$
    \item $\exists (\tev[2,i]^p \ty \theta_2{}(C_{i}^p))$
    \item $\tred{\tev[1,i]^p}{\tev[2,i]^p}$

    \item For all $C_1$, there exist a $\tev[1]'$ and $\overline{C_1'}$ such that
      \begin{evcanon}
      \item $\entails{\QQ \wedge \overline{d' \ty C_1'} \wedge d \ty C_i^p}{\tev[1]' \ty C_1}$
      \item if $\entails{\QQ \wedge \overline{C_{1,0}} \wedge \left( \bigwedge_{j \le i}{\overline{C_{1,j}}} \right)}{\tev[1] \ty C_1}$, then $\tred{\tev[1] }{[d \mapsto \tev[1,i]^p][\overline{d'\mapsto \tev[1]''}]\tev[1]'}$ for some $\entails{\QQ \wedge \overline{C_{1,0}} \wedge \left( \bigwedge_{j < i}{\overline{C_{1,j}}} \right)}{\overline{\tev[1]'' \ty C_1'}}$
      \end{evcanon}
      For any given $C_1$, we denote these $\overline{C_1'}$ as $(C_1 \setminus C_{i}^p)$ and $\overline{\tev[1]'}$ as the evidence for $C_1\setminus C_{i}^p$.

    \item For all $C_2$, there exist a $\tev[2]'$ and $\overline{C_2'}$ such that
      \begin{evcanon}
      \item $\entails{\QQ \wedge \overline{d' \ty C_2'} \wedge d \ty C_i^p}{\tev[2]' \ty C_2}$
      \item if $\entails{\QQ \wedge \overline{C_{2,0}} \wedge \left( \bigwedge_{j \le i}{\overline{C_{2,j}}} \right)}{\tev[2] \ty C_2}$, then $\tred{\tev[2] }{[d \mapsto \tev[2,i]^p][\overline{d'\mapsto \tev[2]''}]\tev[2]'}$ for some $\entails{\QQ \wedge \overline{C_{2,0}} \wedge \left( \bigwedge_{j < i}{\overline{C_{2,j}}} \right)}{\overline{\tev[2]'' \ty C_2'}}$
      \end{evcanon}
      For any given $C_2$, we denote this $C_2'$ as $(C_2 \setminus C_{i}^p)$ and $\tev[2]'$ as $\tev[C_2\setminus C_{i}^p]$.

    \item For all $(\tev[1,i] \ty \theta_1{}(C_{1,i})) \in \overline{\tev[1,i] \ty \theta_1{}(C_{1,i})}$, there exists $\tev[1,i]',C_{1,i}'$ such that:
      \begin{evcanon}
      \item $\tred{\theta_1([\overline{d_j \mapsto \tev[1,j]^p}]_{j \le i}[d \mapsto \tev[1,i]'']\tev[1,i]')}{\tev[1,i]}$
      \item $\tev[1,i]'$ and $C_{1,i}'$ are uniquely determined by $C_{1,i}$ and the $\overline{C_{j}^p}$ for $j \le i$.
      \item $\entails{\QQ \wedge \overline{C_{1,0}}}{\tev[1,i]'' \ty C_{1,i}'}$
      \end{evcanon}

    \item For all $(\tev[2,i] \ty \theta_2{}(C_{2,i})) \in \overline{\tev[2,i] \ty \theta_2{}(C_{2,i})}$, there exists $\tev[2,i]',C_{2,i}'$ such that:
      \begin{evcanon}
      \item $\tred{\theta_2([\overline{d_j \mapsto \tev[2,j]^p}]_{j \le i}[d \mapsto \tev[2,i]'']\tev[2,i]')}{\tev[2,i]}$
      \item $\tev[2,i]'$ and $C_{2,i}'$ are uniquely determined by $C_{2,i}$ and the $\overline{C_{j}^p}$ for $j \le i$.
      \item $\entails{\QQ \wedge \overline{C_{2,0}}}{\tev[2,i]'' \ty C_{2,i}'}$
      \end{evcanon}
    \end{evcanon}
  \end{evcanon}
\end{definition}

\begin{lemma}
  \label{lem:canonical-ev-form}
  If
  \begin{itemize}
  \item $\textit{evidenceCanonicity}(\Gamma_\upsilon^1, \Gamma_\upsilon^2,
    \theta_1, \theta_2, \overline{\tev[1]}, \overline{\tev[2]}, \overline{C_1},
    \overline{C_2})$
  \item $\entails{\QQ \wedge \overline{C_1}}{\tev[1] \ty C_1}$
  \end{itemize}
  Then there exists a $\tev[1]'$ and $C_1'$ such that (using the existentially quantified objects in the evidenceCanonicity definition)
  \begin{itemize}
  \item $\tred{[\overline{d_j \mapsto
        \tev[1,j]^p}][d\mapsto \tev[1]'']\tev[1]'}{\tev[1]}$
  \item $\tev[1]'$ and $C_1'$ are uniquely determined by $C_1$ and the $\overline{C_{j}^p}$
  \item $\entails{\QQ \wedge \overline{C_{1,0}}}{\tev[1]'' \ty C_1'}$
  \end{itemize}

  The symmetric result holds for constraints that follow from $C_2$.
\end{lemma}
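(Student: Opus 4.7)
The plan is to establish a strengthened inductive statement and then specialize it to obtain the lemma. Let the strengthened statement be: for every $i \in \{0, 1, \ldots, k\}$ and every entailment $\entails{\QQ \wedge \overline{C_{1,0}} \wedge \bigwedge_{j \le i}{\overline{C_{1,j}}}}{\tev[1] \ty C_1}$, there exist a template $\tev[1]^{\star}$, residual constraints $\overline{C_1^{\star}}$, and evidence $\overline{\tev[1]^{\star\star} \ty C_1^{\star}}$ entailed by $\QQ \wedge \overline{C_{1,0}}$, such that
\[
\tred{[\overline{d_j \mapsto \tev[1,j]^p}]_{j \le i}\,[\overline{d^{\star} \mapsto \tev[1]^{\star\star}}]\,\tev[1]^{\star}}{\tev[1]},
\]
with $\tev[1]^{\star}$ and $\overline{C_1^{\star}}$ uniquely determined by $C_1$ and $\{C_j^p\}_{j \le i}$. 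Specializing at $i = k$ gives the lemma, since $\overline{C_1}$ is precisely $\overline{C_{1,0}} \wedge \bigwedge_{j \le k}{\overline{C_{1,j}}}$.

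The induction is upward in $i$. For $i = 0$, the hypothesis reduces to $\entails{\QQ \wedge \overline{C_{1,0}}}{\tev[1] \ty C_1}$, so I take $\tev[1]^{\star}$ to be a fresh variable $d^{\star}$, $\overline{C_1^{\star}} = C_1$, and $\tev[1]^{\star\star} = \tev[1]$; the substitution collapses to $\tev[1]$ and reflexivity of $\approx$ concludes, with uniqueness immediate. For $i > 0$, I apply clause~4.iv of \cref{def:evidence-canonicity} at level $i$ to obtain a uniquely determined template $\tev[1]^{(i)}$, residual constraints $\overline{C_1^{(i)}}$, and residual evidence $\overline{\tev[1]^{(i)''} \ty C_1^{(i)}}$ entailed by $\QQ \wedge \overline{C_{1,0}} \wedge \bigwedge_{j < i}{\overline{C_{1,j}}}$, with $\tred{\tev[1]}{[d \mapsto \tev[1,i]^p]\,[\overline{d' \mapsto \tev[1]^{(i)''}}]\,\tev[1]^{(i)}}$. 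To each $\tev[1]^{(i)''}$ I apply the induction hypothesis at level $i-1$, obtaining sub-templates, residual constraints, and residual evidence in $\QQ \wedge \overline{C_{1,0}}$. I assemble a single combined template by substituting these sub-templates in place of the $d'$ inside $\tev[1]^{(i)}$, concatenate the residual constraint lists, and use the congruence rules \textsc{App} and \textsc{Abs} of \cref{fig:tred-relation} together with \textsc{Trans} to propagate equivalences through the nested substitutions.

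The main obstacle is the bookkeeping required to collapse the nested substitutions produced by the induction into the single flat form $[\overline{d_j \mapsto \tev[1,j]^p}]\,[d \mapsto \tev[1]'']\,\tev[1]'$ demanded by the conclusion. This requires choosing bound variables disjointly so the pivot and residual substitutions commute, and relies on a congruence property of $\approx$ under substitution of evidence into terms, which follows by structural induction using \textsc{App}, \textsc{Abs}, and \textsc{Var}. The uniqueness claim is inherited mechanically: at each level both the template supplied by 4.iv and the templates produced inductively are uniquely determined by their inputs, so their composition is uniquely determined by $C_1$ and $\{C_j^p\}_{j \le k}$. The symmetric claim for $C_2$ follows by applying the same argument using clauses~4.v--4.vi of \cref{def:evidence-canonicity}.
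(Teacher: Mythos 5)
Your proposal is correct and follows essentially the same route as the paper's proof: induction on the number of constraint groups, using clause 4.iv of \cref{def:evidence-canonicity} to peel off the highest-indexed group and then applying the induction hypothesis to the residual evidence entailed from the lower groups, finally composing the templates. Your formulation as a strengthened statement indexed by $i$ and your explicit attention to flattening the nested substitutions (and to the residual constraints being a list $\overline{C_1'}$ rather than a single constraint) are slightly more careful than the paper's write-up, but the underlying argument is the same.
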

\begin{proof}
  By symmetry, we only prove the stated result, not the symmetric one.

  We know that $\overline{\tev[1] \ty \theta_1{}(C_1)} = \overline{\tev[1,0] \ty \theta_1{}(C_{1,0})} \wedge \overline{\tev[1,1] \ty \theta_1{}(C_{1,1})} \wedge \cdots \wedge \overline{\tev[1,k] \ty \theta_1{}(C_{1,k})}$.

  We prove the result by induction on $k$.
  For $k = 0$, we have that $\entails{\QQ \wedge \overline{C_{1,0}}}{\tev[1] \ty C_1}$, so we can take $\tev[1]' = d$ and $C_1' = C_1$ and the result follows trivially.

  For $k +1$, we have that $\entails{\QQ \wedge \overline{C_{1,0}} \wedge \overline{C_{1,1}} \wedge \cdots \wedge \overline{C_{1,k+1}}}{\tev[1] \ty C_1}$.
  By evidence canonicity, we know that there exist $\tev[1,k]'$ and $\overline{C_{1,k}'}$ such that
  \begin{itemize}
  \item $\entails{\QQ \wedge \overline{d' \ty C_{1,k}'} \wedge d \ty C_{k+1}^p}{\tev[1,k]' \ty
      C_1}$
  \item For all $\tev[1],C_1$, if $\entails{\QQ \wedge \overline{C_{1,0}} \wedge \left( \bigwedge_{j \le k+1}{\overline{C_{1,j}}} \right)}{\tev[1] \ty C_1}$, then $\tred{\tev[1] }{[d \mapsto \tev[1,k+1]^p][\overline{d' \mapsto \tev[1]''}]\tev[1]'}$ for some $\entails{\QQ \wedge \overline{C_{1,0}} \wedge \left( \bigwedge_{j < k+1}{\overline{C_{1,j}}} \right)}{\overline{\tev[1]'' \ty C_1'}}$
  \end{itemize}

  Since $\entails{\QQ \wedge \overline{C_{1,0}} \wedge \overline{C_{1,1}} \wedge \cdots \wedge \overline{C_{1,k+1}}}{\tev[1] \ty C_1}$, the second point particularly implies (for our $\tev[1],C_1$) that $\tred{\tev[1] }{[d \mapsto \tev[1,k+1]^p][\overline{d'\mapsto \tev[1]''}]\tev[1,k]'}$ for some $\entails{\QQ \wedge \overline{C_{1,0}} \wedge \left( \bigwedge_{j < k+1}{\overline{C_{1,j}}} \right)}{\overline{\tev[1]'' \ty C_{1,k}'}}$.

  But by induction, for this new entailment, we get $\tev[1,k]''$ and $C_{1,k}''$, such that
  \begin{itemize}
  \item $\tred{[\overline{d_j \mapsto
        \tev[1,j]^p}][d\mapsto \tev[1]'']\tev[1,k]''}{\tev[1]''}$
  \item $\tev[1,k]''$ and $C_{1,k}''$ are uniquely determined by $C_{1,k}'$ and the $\overline{C_{j}^p}$
  \item $\entails{\QQ \wedge \overline{C_{1,0}}}{\tev[1]'' \ty C_{1,k}'}$
  \end{itemize}

  We can now construct $\tev[1]' = [d \mapsto \tev[1,k+1]^p][d'\mapsto \tev[1,k]'']\tev[1,k]'$  and $C_1' = C_{1,k}''$ and the desired equalities follow.
\end{proof}

\begin{lemma}
  \label{lem:canonical-ev-equiv}
  If
  \begin{itemize}
  \item $\textit{evidenceCanonicity}(\Gamma_\upsilon^1, \Gamma_\upsilon^2,
    \theta_1, \theta_2, \overline{\tev[1]}, \overline{\tev[2]}, \overline{C_1},
    \overline{C_2})$
  \item $\tev[1] \ty \theta_1(C_1) \in \overline{\tev[1] \ty \theta_1(C_1)}$
  \item $\tev[2] \ty C_2 \in \overline{\tev[2] \ty C_2}$
  \item $C_1 = C_2$
  \item The constraint solver produces canonical evidence.
  \end{itemize}
  Then $\tred{\tev[1]}{\tev[2]}$.
\end{lemma}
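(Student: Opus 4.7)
The plan is to reduce both sides to the normal form provided by Lemma~\ref{lem:canonical-ev-form} and then exploit its uniqueness clause together with the componentwise equivalences already guaranteed by evidence canonicity.

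First, since $\tev[1] \ty \theta_1(C_1) \in \overline{\tev[1] \ty \theta_1(C_1)}$, reflexivity (R1) yields $\entails{\QQ \wedge \overline{C_1}}{\tev[1] \ty C_1}$, and symmetrically $\entails{\QQ \wedge \overline{C_2}}{\tev[2] \ty C_2}$. Applying Lemma~\ref{lem:canonical-ev-form} on each side produces a normal form
\[
  \tred{[\overline{d_j \mapsto \tev[1,j]^p}][d \mapsto \tev[1]'']\tev[1]'}{\tev[1]}
\]
together with $\tev[1]', C_1'$ that are, by the uniqueness clause (4.h.ii / 4.i.ii) of evidence canonicity, uniquely determined by $C_1$ and the shared pivots $\overline{C_j^p}$. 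The analogous statement holds for $\tev[2]$. Because the pivots $C_i^p$ are shared between the two sides (they are the single family $\overline{C_i^p}$ introduced in clause 4 of Definition~\ref{def:evidence-canonicity}) and $C_1 = C_2$, this forces $\tev[1]' = \tev[2]'$ and $C_1' = C_2'$ syntactically.

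Next I would assemble the equivalence by combining three ingredients: (i) clause~(4.c) gives $\tred{\tev[1,j]^p}{\tev[2,j]^p}$ for every pivot $j$; (ii) clause~3 together with the assumption that the constraint solver produces canonical evidence gives $\tred{\tev[1]''}{\tev[2]''}$, since both are solver-produced evidences for the same constraint $C_1' = C_2'$ in a context whose dictionary bindings are canonical (so applying any canonical context substitution yields closed canonical terms that both equal $\cev{C_1'}$ up to $\approx$); and (iii) $\tev[1]' = \tev[2]'$. Congruence of $\approx$ under the substitution rules (\textsc{App}, \textsc{Abs}) combined with Lemma~\ref{lemma:erase-theta} lifts these componentwise equivalences to the full substituted normal forms, giving
\[
  \tred{[\overline{d_j \mapsto \tev[1,j]^p}][d \mapsto \tev[1]'']\tev[1]'}{[\overline{d_j \mapsto \tev[2,j]^p}][d \mapsto \tev[2]'']\tev[2]'}\,,
\]
and transitivity of $\approx$ with the two normal-form equivalences from Lemma~\ref{lem:canonical-ev-form} closes the proof.

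The main obstacle is step~(ii): justifying $\tred{\tev[1]''}{\tev[2]''}$. The residual terms $\tev[1]'', \tev[2]''$ are witnesses of $\entails{\QQ \wedge \overline{C_{1,0}}}{\cdot \ty C_1'}$ and $\entails{\QQ \wedge \overline{C_{2,0}}}{\cdot \ty C_1'}$ respectively, with the level-0 contexts differing between the two sides even though $C_1' = C_2'$. The canonicity assumption on the solver, combined with Lemma~\ref{lemma:canonicity-context-weakening} to bring both contexts into a common super-context, should suffice; but pinning this down rigorously may require a secondary induction on the structure of $C_1'$ as a conjunction of constraints, peeling off one constraint at a time and invoking canonicity pointwise. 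This is also the place where care is needed that the formal statement of ``solver produces canonical evidence'' is strong enough to talk about evidence derived from $\QQ$ extended with an arbitrary canonical context, rather than only from $\QQ$ alone.
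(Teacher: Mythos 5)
Your proposal matches the paper's proof essentially step for step: apply Lemma~\ref{lem:canonical-ev-form} to both sides, use the uniqueness clause to get $\tev[1]' = \tev[2]'$ and $C_1' = C_2'$, derive $\tred{\tev[1]''}{\tev[2]''}$ from the canonicity of the level-0 evidence plus the solver assumption (both residuals being equivalent to the canonical evidence for the shared $C_1'$), and close with the pivot equivalences, congruence, and transitivity. The worry you raise at the end about rigorously justifying $\tred{\tev[1]''}{\tev[2]''}$ is legitimate, but the paper resolves it exactly as in your step~(ii) --- both residuals are canonical for the same constraint, hence mutually equivalent --- with no secondary induction.
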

\begin{proof}
  Since $C_1 \in \overline{C_1}$, we trivially have that $\entails{\QQ \wedge \overline{\tev[1] \ty C_1}}{\tev[1] \ty C_1}$.

  By the previous lemma, there exists a $\tev[1]'$ and $C_1'$ such that (using the existentially quantified objects in the evidenceCanonicity definition)
  \begin{itemize}
  \item $\tred{[\overline{d_j \mapsto
        \tev[1,j]^p}][d\mapsto \tev[1]'']\tev[1]'}{\tev[1]}$
  \item $\tev[1]'$ and $C_1'$ are uniquely determined by $C_1$ and the $\overline{C_{j}^p}$
  \item $\entails{\QQ \wedge \overline{\tev[1,0] \ty C_{1,0}}}{\tev[1]'' \ty C_1'}$
  \end{itemize}

  Because we also know that the $\overline{\tev[1,0] \ty C_{1,0}}$ are canonical in $\Gamma_\upsilon^1$, it follows from the last entailment and the fact that the constraint solver produces canonical evidence, that $\tev[1]''$ is also canonical, i.e. $[\overline{d_{1,0}\mapsto \tev[1,0]}]\tev[1]''$ is equivalent to the canonical evidence for $C_1'$.

  Likewise, there exists a $\tev[2]'$ and $C_2'$ such that (using the existentially quantified objects in the evidenceCanonicity definition)
  \begin{itemize}
  \item $\tred{[\overline{d_j \mapsto
        \tev[2,j]^p}][d\mapsto \tev[2]'']\tev[2]'}{\tev[2]}$
  \item $\tev[2]'$ and $C_2'$ are uniquely determined by $C_2$ and the $\overline{C_{j}^p}$
  \item $\entails{\QQ \wedge \overline{\tev[2,0] \ty C_{2,0}}}{\tev[2]'' \ty C_2'}$
  \end{itemize}

  Again, we also know that the $\overline{\tev[2,0] \ty C_{2,0}}$ are canonical in $\Gamma_\upsilon^2$, it follows from the last entailment and the fact that the constraint solver produces canonical evidence, that $\tev[2]''$ is also canonical, i.e. $[\overline{d_{2,0}\mapsto \tev[2,0]}]\tev[2]''$ is equivalent to the canonical evidence for $C_2'$.

  Because $\tev[1]'$, $\tev[2]'$, $C_1'$ and $C_2'$ are uniquely determined by $C_1 = C_2$ and the $\overline{C_j^p}$, we have that $\tev[1]' = \tev[2]'$, $C_1' = C_2'$ and by the above that $\tred{\tev[1]''}{\tev[2]''}$.
  Since we also have that $\tred{\tev[1,i]^p}{\tev[2,i]^p}$ for all $i$, it follows by equivalence congruence that $\tred{\tev[1]}{\tev[2]}$.
\end{proof}

\begin{lemma}\label{lem:coherence-asmpt}
  If
  \begin{itemize}
  \item For all $C'$ such that $\entails{\QQ \wedge C^p}{\tev \ty C'}$, we have that one of the following holds:
    \begin{itemize}
    \item $\nentails{\QQ\wedge \overline{C}}{C'}$
    \item $\entails{\QQ}{\tev \ty C'}$
    \end{itemize}
  \end{itemize}
  then for all $C$, there exist a $\tev'$ and $\overline{C'}$ such that
  \begin{itemize}
  \item $\entails{\QQ \wedge \overline{d' \ty C'} \wedge d \ty C^p}{\tev' \ty C}$
  \item For all $\tev$, if $\entails{\QQ \wedge \overline{C} \wedge d \ty C^p}{\tev \ty C}$, then $\tred{\tev }{[d'\mapsto \tev'']\tev'}$ for some $\entails{\QQ \wedge \overline{C}}{\overline{\tev'' \ty C'}}$
  \end{itemize}
\end{lemma}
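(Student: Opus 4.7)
The plan is to construct $\tev'$ and $\overline{C'}$ depending only on $C$ (and the fixed $\QQ$, $\overline{C}$, $C^p$), and then to verify that every derivation of $C$ from $\QQ \wedge \overline{C} \wedge d \ty C^p$ agrees with this canonical skeleton up to $\tred{\cdot}{\cdot}$, with all variation confined to the $\overline{d'}$ positions.

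First, I would build $\tev'$ by recursively constructing a derivation of $C$ from $\QQ \wedge \overline{d' \ty C'} \wedge d \ty C^p$ that uses $d$ and axioms of $\QQ$ as aggressively as possible. Concretely, whenever a sub-constraint encountered during the recursive construction lies in the set reachable from $\QQ \wedge C^p$, I expand through that route (potentially introducing an occurrence of $d$); otherwise, I introduce a fresh placeholder $d'_i$ and add the corresponding $C'_i$ to $\overline{C'}$. This yields a skeleton in which every $d$-occurrence is structurally committed and every $d'_i$ marks a boundary where $\overline{C}$-only reasoning must take over. The construction depends only on $C$ and on the reachability structure of $\QQ$, $C^p$ and $\overline{C}$, so $\tev'$ and $\overline{C'}$ are well-defined before seeing any particular $\tev$.

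Second, I would verify the uniqueness property by structural induction on the derivation of an arbitrary $\tev$ with $\entails{\QQ \wedge \overline{C} \wedge d \ty C^p}{\tev \ty C}$. At each application node, I ask whether the subderivation uses $d$. If it does, it witnesses $\entails{\QQ \wedge C^p}{\tev_s \ty C_s}$ for some $C_s$ and $\tev_s$, and the hypothesis leaves two cases. Either $\nentails{\QQ \wedge \overline{C}}{C_s}$, in which case the use of $d$ is forced and the subterm matches the canonical skeleton slot-for-slot; or $\entails{\QQ}{\tev_s \ty C_s}$, in which case $\tev_s$ is already $\tred{\cdot}{\cdot}$-equivalent to the $\QQ$-only derivation that $\tev'$ chooses at the same position. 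The subderivations of $\tev$ that never touch $d$ are exactly what gets plugged into the $\overline{d'}$-holes as $\overline{\tev''}$, and they come equipped with entailments $\entails{\QQ \wedge \overline{C}}{\overline{\tev'' \ty C'}}$ by construction.

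The main obstacle is the second case of the hypothesis: an arbitrary $\tev$ may route through $d$ and some axioms of $\QQ$ to reach a conclusion that $\tev'$ reaches along a different path (either also through $d$, or through a $\overline{d'}$). Showing these two evidence terms are $\tred{\cdot}{\cdot}$-equivalent requires exploiting the fact that the hypothesis pins down the evidence uniquely (up to equivalence) whenever the detour through $\overline{C}$ is also available, together with careful bookkeeping so that a single substitution $[d' \mapsto \tev'']$ simultaneously aligns $\tev$ and $\tev'$ at every such junction. I expect this alignment argument, rather than the construction of $\tev'$ or the induction scaffolding, to be the main technical burden of the proof.
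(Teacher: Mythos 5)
Your plan follows essentially the same route as the paper's own argument: construct $\tev'$ and $\overline{C'}$ as the canonical, maximally-simplified residual of solving the wanted constraint $C$ against $\QQ$ and the given $C^p$, leaving the $\overline{C}$-resolvable leaves as holes $\overline{d'}$, and then argue that any other evidence $\tev$ must agree with this skeleton because instance selection is deterministic and the hypothesis eliminates the only remaining source of choice, namely a sub-constraint solvable both through $C^p$ and through $\overline{C}$.

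The caveat is that the paper does not, and cannot, turn this into an actual proof: it explicitly offers only a proof sketch, because the entailment relation $\Vdash$ is a parameter of the system, constrained only by the requirements (R1)--(R3) and (R7) of \cref{fig:entailment-requirements}. Your construction and your induction both presuppose structure that these requirements do not provide: a syntactic notion of entailment derivation to recurse on and to perform structural induction over, a ``reachability'' relation from $\QQ \wedge C^p$, determinism of instance application, and the absence of overlapping instances. None of these follow from reflexivity, transitivity, substitution, and conjunction alone, so the ``alignment argument'' you correctly single out as the main burden is not merely technically heavy --- it is unprovable at this level of abstraction and must instead be assumed of the concrete constraint solver, which is precisely the stance the paper takes. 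Your proposal is therefore a faithful elaboration of the intended informal argument, but it should be presented as an assumption on (or a proof for one fixed, syntactically specified) entailment relation, not as a theorem derivable from the stated axioms.
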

\begin{proof}[Proof sketch]
  We cannot actually prove this lemma because our proof is parametric in the entailment relation, but we sketch why we think it holds true for constraint solvers that deal with type classes.

  Take $C^p$ as given constraint and $C$ as wanted constraint and apply the constraint solver to obtain a not-further-simplifiable set of simplified constraints.
  Define $\overline{C'}$ as these simplified constraints and $\tev'$ as the evidence produced by the constraint solver.
  Then the first constraint entailment clearly holds.

  To understand why the second holds, consider how the constraint solver would go about finding evidence $\tev$: essentially, it would consider the wanted constraint $C$, recursively apply instances to it from $\QQ$ to obtain new wanted constraints, and then drop those that are entailed by one of the givens.
  The insight here is that for every wanted constraint separately, this process is largely deterministic: when an instance is applied, there can be no other choice because there are no overlapping instances.
  When a wanted constraint is resolved because it is entailed by $C^p$ (or one of its descendants), then we know by assumption that either the evidence found does not depend on $C^p$ or the wanted constraint is not also entailed by $\QQ \wedge \overline{C}$, so there is no other choice here either.
  If a wanted constraint is removed because it is entailed by $\overline{C}$, then it must be part of $\overline{C'}$ and the result follows.
\end{proof}

\subsection{Target Language Lemmas}%
\label{sec:target-language-lemmas}

We need some additional lemmas handling about the target language.

\begin{lemma}\label{lemma:subst-typing}
  If $\sftc{\Gamma_\upsilon}{\eta}{\Gamma_\upsilon'}$ and
  $\sftc{\Gamma_\upsilon'}{t}{\upsilon}$, then
  $\sftc{\Gamma_\upsilon}{\eta{}(t)}{\upsilon}$.
\end{lemma}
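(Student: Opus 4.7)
The plan is to proceed by structural induction on the typing derivation $\sftc{\Gamma_\upsilon'}{t}{\upsilon}$, mirroring the standard substitution-lemma recipe for System~F but restricted to the special case where $\eta$ only substitutes for dictionary-evidence variables. Throughout, I will read $\sftc{\Gamma_\upsilon}{\eta}{\Gamma_\upsilon'}$ as saying that for every dictionary binding $(d \ty \dict{\TC}~\upsilon)$ in $\Gamma_\upsilon'$ we have $\sftc{\Gamma_\upsilon}{\eta(d)}{\dict{\TC}~\upsilon}$, and $\eta$ acts as the identity on all other kinds of variables (ordinary term variables and type variables).

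The interesting cases are the leaves. In the \textsc{Var} case, if $x = d$ is a dictionary variable in $\dom{\eta}$, then $\eta(x) = \eta(d)$ and the hypothesis on $\eta$ gives exactly $\sftc{\Gamma_\upsilon}{\eta(d)}{\upsilon}$. If $x$ is any other variable, then $\eta(x) = x$ and the result follows from the assumption that $\Gamma_\upsilon$ contains the non-dictionary bindings of $\Gamma_\upsilon'$ (or, more cleanly, by a weakening argument on the derivation that $(x \ty \upsilon)$ is in the context). The remaining term-former cases (\textsc{App}, \textsc{TyApp}) are immediate by the induction hypothesis applied to the subderivations and then re-assembling with the same rule.

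For the binder rules (\textsc{Abs} and \textsc{TyAbs}) I first alpha-rename the bound variable so that it does not clash with $\dom{\eta}$ or with any free variable of any $\eta(d)$. Then pushing the binder past the substitution yields $\eta(\lambda(x\ty\upsilon).\,t) = \lambda(x\ty\upsilon).\,\eta(t)$ and similarly for $\Lambda a$. To invoke the induction hypothesis on the body I need $\sftc{(x\ty\upsilon),\Gamma_\upsilon}{\eta}{(x\ty\upsilon),\Gamma_\upsilon'}$, which follows from the original hypothesis plus a weakening of each $\eta(d)$ by the fresh binder; an analogous statement holds for the type-variable binder.

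The only mildly subtle step is this alpha-renaming/weakening bookkeeping: I need to make sure that (i) the fresh bound variable does not appear in $\eta(d)$ for any $d$, and (ii) weakening the codomain context of $\eta$ preserves the typing of each $\eta(d)$. Both facts are routine consequences of well-formedness of $\Gamma_\upsilon$ together with the standard weakening lemma for System~F typing, so I expect no real difficulty; the proof is a straightforward induction whose main content is case analysis on which kind of variable we substitute at a \textsc{Var} leaf.
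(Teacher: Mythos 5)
Your proposal is correct and matches the paper's proof, which is simply stated as an easy induction on the derivation of $\sftc{\Gamma_\upsilon'}{t}{\upsilon}$; your additional detail on the \textsc{Var} leaf case analysis and the alpha-renaming/weakening bookkeeping for binders fills in exactly the routine steps the paper leaves implicit.
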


\begin{proof}
  Easy induction on the proof of $\sftc{\Gamma_\upsilon'}{t}{\upsilon}$.
\end{proof}

\begin{example}
  We demonstrate the lemma above with:
  \begin{align*}
    \Gamma_\upsilon &= (x \ty \textit{Int}) \\
    \Gamma_\upsilon' &= (y \ty (\textit{Int}, \textit{Int})) \\
    \eta &= [y \mapsto (x, x)] \\
    t &= \textit{fst}~y \\
    \upsilon &= \textit{Int}
  \end{align*}
  Given
  $\sftc{(x \ty \textit{Int})}{[y \mapsto (x, x)]}{(y \ty (\textit{Int},
    \textit{Int}))}$ and
  $\sftc{(y \ty (\textit{Int}, \textit{Int}))}{\textit{fst}~y}{\textit{Int}}$,
  we have that
  $\sftc{(x \ty \textit{Int})}{[y \mapsto (x,
    x)](\textit{fst}~y)}{\textit{Int}}$.
\end{example}

\begin{lemma}[Context Weakening]\label{lemma:systemf-context-weakening}
  If $\sftc{\Gamma_\upsilon}{t}{\upsilon}$ then
  $\sftc{\Gamma_\upsilon'}{t}{\upsilon}$ where $\Gamma_\upsilon \subseteq
  \Gamma_\upsilon'$.
\end{lemma}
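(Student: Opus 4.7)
The plan is to proceed by straightforward structural induction on the derivation of $\sftc{\Gamma_\upsilon}{t}{\upsilon}$, performing a case split on the last typing rule applied (see \cref{fig:typing-rules-target-lang}: \textsc{Var}, \textsc{Abs}, \textsc{App}, \textsc{TyAbs}, \textsc{TyApp}). In each case I will show that the same rule can be reapplied with $\Gamma_\upsilon'$ in place of $\Gamma_\upsilon$, using the hypothesis $\Gamma_\upsilon \subseteq \Gamma_\upsilon'$ directly or invoking the induction hypothesis on subderivations.

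For the \textsc{Var} case, we have $(x \ty \upsilon) \in \Gamma_\upsilon$, which immediately yields $(x \ty \upsilon) \in \Gamma_\upsilon'$ by the subset assumption, so \textsc{Var} reapplies. The \textsc{App} and \textsc{TyApp} cases are routine: apply the induction hypothesis to each premise (noting that $\Gamma_\upsilon \subseteq \Gamma_\upsilon'$ is the same subset relation used at each premise) and reassemble using the same rule.

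The \textsc{Abs} and \textsc{TyAbs} cases are the only mildly delicate ones, because the induction hypothesis must be applied on the extended context. For \textsc{Abs}, from $\sftc{(x \ty \upsilon), \Gamma_\upsilon}{t}{\upsilon'}$ I want $\sftc{(x \ty \upsilon), \Gamma_\upsilon'}{t}{\upsilon'}$; this follows from the induction hypothesis applied to the subset $(x \ty \upsilon), \Gamma_\upsilon \subseteq (x \ty \upsilon), \Gamma_\upsilon'$. Analogously for \textsc{TyAbs} with the type variable binder $a$.

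The main (minor) obstacle is ensuring the bound variables $x$ and $a$ of \textsc{Abs}/\textsc{TyAbs} do not collide with bindings already present in $\Gamma_\upsilon' \setminus \Gamma_\upsilon$. This is handled in the standard way: by the Barendregt convention (or an explicit $\alpha$-renaming step) we may assume $x$ and $a$ are fresh with respect to $\Gamma_\upsilon'$, so that $(x \ty \upsilon), \Gamma_\upsilon'$ and $a, \Gamma_\upsilon'$ remain well-formed and the subset relation is preserved. With this convention in place, every case goes through mechanically and the lemma follows.
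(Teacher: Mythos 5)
Your proof is correct and matches the paper's approach: the paper simply states ``Straightforward induction,'' and your case analysis on the typing rules (with the standard freshness/Barendregt side condition for the binder cases) is exactly that induction spelled out.
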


\begin{proof}
Straightforward induction.
\end{proof}



\subsection{Coherence Proof}%
\label{sec:actual-coherence-proof}

Finally, we can state the coherence theorem and prove it.
The proof is itself technically novel and simpler than previous proofs in the literature, as it avoids the use of a specific algorithmic typing relation (instead simply assuming that principal typing holds, see \cref{thm:principal-types}) and is parametric in the constraint implication judgement (simply assuming that it produces canonical evidence when given canonical evidence for assumptions, see \cref{sec:canonical-evidence}).

\begin{theorem}[Coherence]\label{theorem:coherence}
  Given typing derivations $\decelab{Q}{\Gamma}{e}{\sigma}{t_1}$ and
  $\decelab{Q}{\Gamma}{e}{\sigma}{t_2}$, if the principal type $\sigma_0$ of
  $e$ is context-unambiguous, and the constraint solver produces canonical
  evidence, it must be that $\tred{t_1}{t_2}$.
\end{theorem}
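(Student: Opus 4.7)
The plan is to prove a strengthened version of the theorem by structural induction on $e$, since the two derivations may interleave the rules $\introforall$, $\elimforall$, $\introct$ and $\elimct$ in completely different ways and thus cannot be compared directly as derivations. Concretely, I would strengthen the statement to talk about \emph{saturated} terms: for any two derivations $\decelab{Q_1}{\Gamma}{e}{\sigma_1}{t_1}$ and $\decelab{Q_2}{\Gamma}{e}{\sigma_2}{t_2}$, together with type substitutions $\theta_1,\theta_2$ and evidence sequences $\overline{\tev[1]},\overline{\tev[2]}$ satisfying monotype equality modulo unconstrained type variables ($\taueq{\sigma_0}{\theta_1(\tau_1)}{\theta_2(\tau_2)}$) and $\textit{evidenceCanonicity}$, the saturated terms $t_1\,\overline{\tau_a}\,\overline{\tev[1]}$ and $t_2\,\overline{\tau_b}\,\overline{\tev[2]}$ are related by $\approx$. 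The original statement follows by taking $\sigma_1=\sigma_2=\sigma$, empty type substitutions, and canonical evidence for $Q$.

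Before doing the induction I would invoke Theorem~\ref{thm:principal-types} to obtain the principal type $\sigma_0$ of $e$, and observe that both $\sigma_1$ and $\sigma_2$ are instantiations of $\sigma_0$; this justifies reasoning at the level of the principal type. For the \textbf{variable} case, Lemma~\ref{lemma:var-type-is-principal} tells us both derivations are just $\introforall$/$\elimforall$/$\introct$/$\elimct$ wrappings around the same $x$; after type erasure (\textsc{EraseTyAbs}, \textsc{EraseTyApp}) and $\beta$-reduction the terms reduce to $x$ applied to the saturation evidence, and equivalence follows from Lemma~\ref{lem:canonical-ev-equiv}. For the \textbf{application} case, Lemmas~\ref{lemma:ambi-app1} and~\ref{lemma:ambi-app2} propagate context-unambiguity to the subexpressions, and Lemmas~\ref{lemma:tau-eq-app12} and~\ref{lemma:tau-eq-app1} propagate monotype equality; the IH then applies to $e_1$ and $e_2$ separately and \textsc{App} closes the case. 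The \textbf{abstraction} case uses Lemma~\ref{lemma:abs-ih-ambi} to move $\tau_1$ from the derived type into the context, applies the IH to the body, and closes with \textsc{Abs}.

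The only genuinely novel case is \textbf{dictionary application} $e_1\,\dictapp{e_2}{\TC~a}$. Here one must show that no second derivation of the same expression can use a \emph{different} source for the $\TC~a$ constraint inside $e_1$ than the explicitly supplied $e_2$. This is precisely what the side condition of \textsc{DictApp} is designed to enforce: for every $C$ entailed by $\QQ\wedge\TC~a$, either $C$ is not entailed by $\QQ\wedge Q\wedge\overline{C_1}\wedge\overline{C_2}$ (so no alternative derivation route exists) or $\entails{\QQ}{\tev\ty C}$ (so the evidence is canonical and independent of which route is taken). Using Lemma~\ref{lem:coherence-asmpt}, this side condition yields exactly the form needed to extend the $\textit{evidenceCanonicity}$ witness by one more layer: the $C_{i+1}^p$ can be taken to be $\TC~a$, with $\tev[1,i+1]^p = \tev[2,i+1]^p$ being the translation $t_2$ of $e_2$ (equivalent on both sides by the IH applied to $e_2$). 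Combined with the IH applied to $e_1$, this gives equivalence of the assembled terms.

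The main obstacle I anticipate is precisely this extension step in the \textsc{DictApp} case: one has to carefully track how the decomposition $\overline{C_1}=\overline{C_{1,0}}\wedge\overline{C_{1,1}}\wedge\cdots\wedge\overline{C_{1,k}}$ from Definition~\ref{def:evidence-canonicity} is augmented when $\TC~a$ is added as a new $C_{k+1}^p$, and check that the uniqueness conditions (4.7.b and 4.8.b) survive. This bookkeeping is routine but delicate because the type arguments $\overline{b_1},\overline{b_2}$ and the constraints $\overline{C_1},\overline{C_2}$ are reshuffled by the elaboration of \textsc{DictApp}, so the saturating substitutions $\theta_1,\theta_2$ must be re-threaded accordingly before invoking the IH on $e_1$. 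Once this is done, Lemma~\ref{lem:canonical-ev-equiv} together with the congruence rules of $\approx$ close the proof.
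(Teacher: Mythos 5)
Your proposal matches the paper's proof in essentially every substantive respect: the paper proves exactly the strengthened saturated-terms lemma you describe (\cref{lemma:coherence}), with the same side conditions ($\taueq{\sigma}{\theta_1(\tau_1)}{\theta_2(\tau_2)}$ and $\textit{evidenceCanonicity}$), the same supporting lemmas per case (\cref{lemma:var-type-is-principal}, \cref{lemma:ambi-app1}, \cref{lemma:ambi-app2}, \cref{lemma:tau-eq-app12}, \cref{lemma:tau-eq-app1}, \cref{lemma:abs-ih-ambi}, \cref{lem:canonical-ev-equiv}), and the same treatment of \textsc{DictApp}, namely extending the evidence-canonicity decomposition with $\TC~a$ as a new principal constraint group whose shared evidence is the elaboration of $e_2$, justified by the side condition together with \cref{lem:coherence-asmpt}. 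The one divergence is the induction measure: the paper performs simultaneous induction on the two typing derivations, peeling off $\introforall$/$\elimforall$/$\introct$/$\elimct$ one rule at a time as separate symmetric cases, rather than structural induction on $e$ --- a choice forced by the fact that those four rules do not decrease $e$, so your stated measure would need an extra inversion/normalisation step to strip the wrappers inside each syntactic case (and, contrary to your opening remark, the interleaving of these rules is exactly what direct induction on the derivations handles cleanly).
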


\begin{proof}
  Say $\sigma = \forall\overline{a}.\,\overline{C} \Rightarrow \tau$.
  Pick fresh $\overline{c}$ and $\overline{d}$ such that:
  \begin{align*}
    &\overline{c} \mathrel{\#} \ftv{Q, \Gamma, \sigma} \\
    &\forall{}d \in \overline{d}, d \notin \fv{t_1, t_2}
  \end{align*}

  Apply \cref{lemma:coherence} with the two typing derivations above, use $\overline{\tau_a} = \overline{\tau_b} = \overline{c}$, use evidence variables for $\overline{\tev[Q_1]} = \overline{\tev[Q_2]}$ and for $\overline{\tev[C_1]} = \overline{\tev[C_2]}$. %
  Evidence canonicity follows from the fact that all evidence terms produced by the constraint solver are canonical, and thus pairwise equivalent, combined with the fact that the $\overline{d}$ are also pairwise equivalent.

  This gives us:
  \[
    \tred{(\eta{}(t_1)~\overline{\upsilon_c}~\overline{d})}
    {(\eta{}(t_2)~\overline{\upsilon_c}~\overline{d})}
  \]
  We now apply \textsc{EraseTyApp} inside the evidence applications:
  \[
    \tred{(\eta{}(t_1)~\overline{d})}
    {(\eta{}(t_2)~\overline{d})}
  \]
  We can apply \textsc{Abs} on both sides for all $\overline{d}$:
  \[
    \tred{\lambda(\overline{d \ty \upsilon_C}).\,(\eta{}(t_1)~\overline{d})}
    {\lambda(\overline{d \ty \upsilon_C}).\,(\eta{}(t_2)~\overline{d})}
  \]
  Since we know that $\forall{}d \in \overline{d}, d \notin \fv{t_1, t_2}$, we
  can apply \etared{} on both sides:
  \[
    \tred{\eta{}(t_1)}{\eta{}(t_2)}
  \]
  From this follows our goal:
  \[
    \tred{t_1}{t_2}
  \]
\end{proof}


\newcommand{\evidenceCanonicity}{\textit{evidenceCanonicity}\hspace{1pt}}

We use the following generalised lemma with a stronger induction hypothesis to prove coherence:

\begin{lemma}\label{lemma:coherence}
  Let $\elabqqg{\QQ \wedge Q_1\wedge\overline{C_1}}{\Gamma}{\Gamma_{\upsilon'}^1}$ and
  $\elabqqg{\QQ \wedge Q_2 \wedge\overline{C_2}}{\Gamma}{\Gamma_{\upsilon'}^2}$. If:
  \begin{align}
    & \decelab{Q_1}{\Gamma}{e}{\forall\,\overline{a}.\,\overline{C_1} \Rightarrow \tau_1}{t_1} \label{eq:hty1} \\
    & \decelab{Q_2}{\Gamma}{e}{\forall\,\overline{b}.\,\overline{C_2} \Rightarrow \tau_2}{t_2} \label{eq:hty2} \\
    & \overline{a_f} = \ftv{Q_1, \Gamma} \label{eq:coh-as} \\
    & \overline{b_f} = \ftv{Q_2, \Gamma} \label{eq:coh-bs} \\
    & \Gamma_{\upsilon'}^1 \subseteq \Gamma_{\upsilon}^1\\
    & \Gamma_{\upsilon'}^2 \subseteq \Gamma_{\upsilon}^2\\
    & \Gamma_\upsilon^1 \vdash \overline{\tev[Q_1]} \ty [\overline{a_f \mapsto \tau_{a_f}}]Q_1 \label{eq:coh-tevQ1} \\
    & \Gamma_\upsilon^2 \vdash \overline{\tev[Q_2]} \ty [\overline{b_f \mapsto \tau_{b_f}}]Q_2 \label{eq:coh-tevQ2} \\
    & \theta_1 = [\overline{a \mapsto \tau_a}, \overline{a_f \mapsto \tau_{a_f}}] \label{eq:coh-theta1} \\
    & \theta_2 = [\overline{b \mapsto \tau_b}, \overline{b_f \mapsto \tau_{b_f}}] \label{eq:coh-theta2} \\
    & \Gamma_\upsilon^1 \vdash \overline{\tev[C_1] \ty \theta_1{}(C_1)} \label{eq:coh-tevC1} \\
    & \Gamma_\upsilon^2 \vdash \overline{\tev[C_2] \ty \theta_2{}(C_2)} \label{eq:coh-tevC2} \\
    & \eta_1 = [\overline{d_1 \mapsto \tev[Q_1]}] \text{ for each } (d_1 \ty C_1) \in Q_1 \label{eq:coh-eta1} \\
    & \eta_2 = [\overline{d_2 \mapsto \tev[Q_2]}] \text{ for each } (d_2 \ty C_2) \in Q_2 \label{eq:coh-eta2} \\
    & \text{the principal type } \sigma \text{ is context-unambiguous where } \dec{\epsilon}{\Gamma}{e}{\sigma} \label{eq:coh-principal} \\
    & \taueq{\sigma}{\theta_1{}(\tau_1)}{\theta_2{}(\tau_2)} \label{eq:coh-taueq} \\
    & \evidenceCanonicity{}(\Gamma_\upsilon^1, \Gamma_\upsilon^2, \theta_1, \theta_2, (\overline{\tev[Q_1]} \wedge \overline{\tev[C_1]}), (\overline{\tev[Q_2]} \wedge \overline{\tev[C_2]}), Q_1 \wedge \overline{C_1}, Q_2 \wedge \overline{C_2}) \label{eq:coh-evcanon}
  \end{align}

  Then
  $\tred{(\eta_1{}(t_1)~\overline{\upsilon_a}~\overline{\tev[C_1]})}
  {(\eta_2{}(t_2)~\overline{\upsilon_b}~\overline{\tev[C_2]})}$ where
  $\elab{\tau}{\overline{\tau_a}}{\overline{\upsilon_a}}$ and
  $\elab{\tau}{\overline{\tau_b}}{\overline{\upsilon_b}}$.
\end{lemma}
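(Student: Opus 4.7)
The plan is to proceed by structural induction on the first typing derivation \cref{eq:hty1}, at each case inverting the second derivation \cref{eq:hty2} to find its shape. Because several rules ($\introforall$, $\elimforall$, $\introct$, $\elimct$) are not syntax-directed, the two derivations need not end in the same rule; the key trick in those cases is that the saturation parameters ($\overline{\tau_a}$, $\overline{\tau_b}$, $\overline{\tev[C_1]}$, $\overline{\tev[C_2]}$) are available to absorb the mismatch, so we may apply the IH with a suitably extended saturation and then undo it with \betared{} and \etared{} (together with \textsc{EraseTyAbs}/\textsc{EraseTyApp}, which discard type abstractions and applications invisible to the equivalence relation). For instance, when the first derivation ends in $\introforall$ producing $\Lambda a.\,t_1'$ but the second does not, we move the fresh $\overline{\tau_a}$-binding inside the saturation on the right, obtain an IH between $t_1'$ and the appropriate instantiation of $t_2$, and re-erase with \textsc{EraseTyAbs}; a symmetric move handles $\elimforall$, $\introct$, $\elimct$.

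Once the derivations are aligned, I would handle the syntax-directed cases as follows. For \textsc{Var}, \cref{lemma:var-type-is-principal} tells us both sides are instantiations of the same principal type taken from $\Gamma$, so after applying the saturations both elaborated terms have the form $x~\overline{\upsilon}~\overline{\tev}$; \cref{lem:canonical-ev-equiv} (enabled by hypothesis \cref{eq:coh-evcanon}) then yields that the evidence chosen in each position is equivalent, and the result follows by congruence (\textsc{App}). For \textsc{Abs}, I would use \cref{lemma:abs-ih-ambi} to re-establish context-unambiguity for the body, invoke the IH on the bodies in the extended context $(x \ty \tau_1), \Gamma$, and close with \textsc{Abs}. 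For \textsc{App}, \cref{lemma:ambi-app1,lemma:ambi-app2} propagate context-unambiguity to the two subderivations, while \cref{lemma:tau-eq-app12,lemma:tau-eq-app1} propagate the monotype equality modulo unconstrained type variables required by \cref{eq:coh-taueq}; two applications of the IH and congruence finish the case.

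The substantive new case is \textsc{DictApp}. Here the first derivation consumes the explicit dictionary annotation, while the second might do so as well or might derive the constraint conventionally from $Q_2$ and $\QQ$. This is precisely where the safety criterion (the $\forall C$ premise of \textsc{DictApp}) is needed: \cref{lem:coherence-asmpt} transforms that premise into the shape required by \textit{evidenceCanonicity}, so the dictionary the two derivations pick for the relevant constraint is either equivalent to the one supplied explicitly by $e_2$ (whose elaboration we obtain by IH, plus \cref{lemma:ambi-dictapp2} for its context-unambiguity) or is entailed from $\QQ$ alone and therefore canonical. After handling the standard rearrangement of $\overline{b_1}a\overline{b_2}$ and of $\overline{C_1}, \TC~a, \overline{C_2}$ that appears explicitly in the elaboration, the congruence rules reduce the case to equivalences already obtained by IH and by \cref{lem:canonical-ev-equiv}.

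The principal obstacle will be the bookkeeping around evidence canonicity across inductive steps: each time we descend through $\introct$, $\elimct$, or \textsc{DictApp} the partition $\overline{C_{\cdot,0}}, \overline{C_{\cdot,1}}, \ldots, \overline{C_{\cdot,k}}$ underlying \cref{def:evidence-canonicity} must be updated, and we must verify that the new evidence introduced (in $\introct$) or consumed (in $\elimct$, \textsc{DictApp}) is placed in the correct layer, keeping \cref{eq:coh-evcanon} invariant while also keeping $\eta_1, \eta_2$ compatible with \cref{eq:coh-eta1,eq:coh-eta2}. Closely related, but simpler, is the need to extend the contexts $\Gamma_\upsilon^1, \Gamma_\upsilon^2$ and preserve typing of the $\overline{\tev[Q_i]}, \overline{\tev[C_i]}$, for which \cref{lemma:systemf-context-weakening,lemma:subst-typing,lemma:canonicity-context-weakening,lemma:erase-theta} provide the required infrastructure.
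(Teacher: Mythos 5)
Your plan is essentially the paper's proof: a case analysis pairing the last rules of the two derivations, with the generalised saturation parameters ($\overline{\tau_a}$, $\overline{\tau_b}$, $\overline{\tev[C_1]}$, $\overline{\tev[C_2]}$, $\eta_1$, $\eta_2$) absorbing the non-syntax-directed rules via \textsc{EraseTyAbs}/\textsc{EraseTyApp} and \betared{}; \cref{lemma:var-type-is-principal} together with \cref{lem:canonical-ev-equiv} for \textsc{Var}; \cref{lemma:abs-ih-ambi}, \cref{lemma:ambi-app1}, \cref{lemma:ambi-app2}, \cref{lemma:tau-eq-app12}, and \cref{lemma:tau-eq-app1} to propagate context-unambiguity and monotype equality into the subderivations; and \cref{lem:coherence-asmpt} to feed the \textsc{DictApp} safety criterion into the \textit{evidenceCanonicity} invariant (with \cref{lem:canonical-ev-form} doing the corresponding bookkeeping in the $\elimct$ case). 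The bookkeeping concern you flag at the end is exactly where the paper spends most of its effort, so your assessment of the difficulty is accurate.

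The one point where your plan as literally stated would break is the induction measure. Structural induction on the first derivation \cref{eq:hty1} alone, with mere inversion of \cref{eq:hty2}, does not license the recursive call you need when the \emph{second} derivation ends in $\introforall$, $\elimforall$, $\introct$, or $\elimct$ while the first does not: there you must peel a rule off the second derivation and recurse with the first derivation unchanged, and your measure does not decrease. Appealing to the symmetry of the statement does not rescue this, since after swapping the roles the recursive call still fails to shrink whichever derivation you chose to induct on. The paper instead performs simultaneous induction on \emph{both} derivations (in effect, on the combined size), so that peeling a rule off either side is a legitimate decrease; it then disposes of half the mixed cases by symmetry of the lemma. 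With that adjustment to the induction scheme, your sketch coincides with the paper's argument.
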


We have two different typing derivations \cref{eq:hty1} and \cref{eq:hty2} for the same program $e$.
As there may still be free type variables, we collect them in \cref{eq:coh-as} and \cref{eq:coh-bs} and close over them by replacing them with $\overline{\tau_{a_f}}$ and $\overline{\tau_{b_f}}$.
We need evidence, $\overline{\tev[Q_1]}$ and $\overline{\tev[Q_2]}$, for the closed assumptions $Q_1$ and $Q_2$, see \cref{eq:coh-tevQ1} and \cref{eq:coh-tevQ2}.
In \cref{eq:coh-theta1} and \cref{eq:coh-theta2} we define substitutions for the free type variables \cref{eq:coh-as} and \cref{eq:coh-bs}, and also for the type variables $\overline{a}$ and $\overline{b}$ that will be free in $\overline{C_1}$ and $\tau_1$, and $\overline{C_2}$ and $\tau_2$ respectively.
For notational convenience we sometimes apply $\theta_i$ to $Q_i$, even though $\theta_i$ contains the substitutions $[\overline{a \mapsto \tau_a}]$ or $[\overline{b \mapsto \tau_b}]$ that do not have to be applied to $Q_i$.
However, since these variables are not free in $Q_i$, this does not matter anyway and permits us to use a single $\theta_i$ instead of two different ones.
We also need evidence, $\overline{\tev[C_1]}$ and $\overline{\tev[C_2]}$, for the closed constraints of the derivations, see \cref{eq:coh-tevC1} and \cref{eq:coh-tevC2}.
Next, in \cref{eq:coh-eta1} and \cref{eq:coh-eta2} we define $\eta_1$ and $\eta_2$ to be the substitutions that map the dictionary variables mentioned in $Q_1$ and $Q_2$ to the evidence we have for them, see \cref{eq:coh-tevQ1} and \cref{eq:coh-tevQ2}.
We will use the substitutions to make sure the terms no longer refer to the assumptions $Q_1$ and $Q_2$.
As discussed in \cref{sec:principal-types}, we require \cref{eq:coh-principal}.
See \cref{sec:equivalent-type-instantiations} for \cref{eq:coh-taueq}, and \cref{sec:canonical-evidence} for \cref{eq:coh-evcanon}.
For convenience we sometimes use $Q$ to mean the constraints in $Q$ without their evidence.
When not relevant, we omit the evidence in entailments.

Finally, we prove that the fully saturated terms are equivalent.
The substitutions $\theta_1$ and $\theta_2$ should also be applied to $t_1$ and $\overline{\upsilon_a}$, and to $t_2$ and $\overline{\upsilon_b}$, respectively.
However, as the equivalence relation erases the types, we omit these substitutions for simplicity.

\begin{proof}

  Throughout this proof, we will use $\upsilon_x$ to signify the System F type
  obtained by the elaboration $\elab{\tau}{\tau_x}{\upsilon_x}$ for some $x$
  without explicitly mentioning the elaboration, at least when it is clear
  from the context.

  We perform simultaneous induction on the typing derivations \cref{eq:hty1} and
  \cref{eq:hty2}. The table below displays the cases we prove. We eliminate
  symmetrical cases (empty cells), impossible cases (marked in grey). We also
  group together common cases.

  \begin{center}
  \newcommand{\rot}[1]{\rotatebox{90}{#1}}
  \newcommand{\impos}{\cellcolor{gray}}
  \begin{tabular}{ c | c c c c c c c c}
    & \rot{$\introforall$} & \rot{$\elimforall$} & \rot{$\introct$} & \rot{$\elimct$} & \rot{\textsc{DictApp}} & \rot{\textsc{Var}} & \rot{\textsc{Abs}} & \rot{\textsc{App}} \\
    \hline
    $\introforall$   & \multicolumn{8}{c|}{\cref{foralli}} \\
    \cline{2-9}
    $\elimforall$    &  & \multicolumn{7}{|c|}{\cref{foralle}} \\
    \cline{3-9}
    $\introct$       &  &  & \multicolumn{6}{|c|}{\cref{cti}} \\
    \cline{4-9}
    $\elimct$        &  &  &  & \multicolumn{5}{|c|}{\cref{cte}} \\
    \cline{5-9}
    \textsc{DictApp} &  &  &  &  & \multicolumn{1}{|c}{\cref{dictapp-dictapp}} & \multicolumn{3}{|c|}{\impos} \\
    \cline{6-7}
    \textsc{Var}     &  &  &  &  &  & \multicolumn{1}{|c}{\cref{var-var}} & \multicolumn{2}{|c|}{\impos} \\
    \cline{7-8}
    \textsc{Abs}     &  &  &  &  &  &  & \multicolumn{1}{|c}{\cref{abs-abs}} & \multicolumn{1}{|c|}{\impos} \\
    \cline{8-9}
    \textsc{App}     &  &  &  &  &  &  &  & \multicolumn{1}{|c|}{\cref{app-app}} \\
    \cline{9-9}
  \end{tabular}
  \end{center}

  \newcommand{\hspacev}{\phantom{\evidenceCanonicity{}(}}


  \begin{enumerate}[itemsep=2em]

  \item[\pcase{foralli}.] The $\introforall$ rule was last used in the first
    or second derivation. By symmetry, we assume the second. We have that:
      \begin{align}
        &\decelab{Q_1}{\Gamma}{e}{\forall\,\overline{a}.\,\overline{C_1} \Rightarrow \tau_1}{e} \notag \\
        &\overline{a_f} = \ftv{Q_1, \Gamma} \notag \\
        &\Gamma_\upsilon^1 \vdash \overline{\tev[Q_1]} \ty [\overline{a_f \mapsto \tau_{a_f}}]Q_1 \notag \\
        &\theta_1 = [\overline{a \mapsto \tau_a}, \overline{a_f \mapsto \tau_{a_f}}] \notag \\
        &\Gamma_\upsilon^1 \vdash \overline{\tev[C_1] \ty \theta_1{}(C_1)} \notag \\
        &\eta_1 = [\overline{d_1 \mapsto \tev[Q_1]}] \text{ for each } (d_1 \ty C_1) \in Q_1 \notag \\
        & \notag \\
        &\decelab{Q_2}{\Gamma}{e}{\forall{}c\overline{b}.\,\overline{C_2} \Rightarrow \tau_2}{\Lambda{}c.\,t_2'} \notag \\
        &\decelab{Q_2}{c, \Gamma}{e}{\forall{}\overline{b}.\,\overline{C_2} \Rightarrow \tau_2}{t_2'} \notag \\
        &c \notin \ftv{Q_2, \Gamma} \label{eq:foralli-cftv} \\
        &\overline{b_f} = \ftv{Q_2, \Gamma} \label{eq:foralli-ftv} \\
        &\Gamma_\upsilon^2 \vdash \overline{\tev[Q_2]} \ty [\overline{b_f \mapsto \tau_{b_f}}]Q_2 \label{eq:foralli-tevQ2} \\
        &\theta_2 = [c \mapsto \tau_c, \overline{b \mapsto \tau_b}, \overline{b_f \mapsto \tau_{b_f}}] \label{eq:foralli-theta2} \\
        &\Gamma_\upsilon^2 \vdash \overline{\tev[C_2] \ty \theta_2{}(C_2)} \label{eq:foralli-tevC2} \\
        &\eta_2 = [\overline{d_2 \mapsto \tev[Q_2]}] \text{ for each } (d_2 \ty C_2) \in Q_2 \label{eq:foralli-eta2} \\
        &\text{the principal type } \sigma \text{ is context-unambiguous where } \dec{\epsilon}{\Gamma}{e}{\sigma} \label{eq:foralli-ambi} \\
        &\taueq{\sigma}{\theta_1{}(\tau_1)}{\theta_2{}(\tau_2)} \label{eq:foralli-tau-eq} \\
        &\evidenceCanonicity{}(\Gamma_\upsilon^1, \Gamma_\upsilon^2, \theta_1, \theta_2, (\overline{\tev[Q_1]} \wedge \overline{\tev[C_1]}), (\overline{\tev[Q_2]} \wedge \overline{\tev[C_2]}), Q_1 \wedge \overline{C_1}, Q_2 \wedge \overline{C_2}) \label{eq:foralli-ev-canon}
      \end{align}
      And must prove that:
      \begin{align*}
        &\tred{(\eta_1{}(t_1)~\overline{\upsilon_a}~\overline{\tev[C_1]})}
          {(\eta_2{}(\Lambda{}c.\,t_2')~\upsilon_c~\overline{\upsilon_b}~\overline{\tev[C_2]})}
      \end{align*}
      We can rewrite this to:
      \begin{align}
        &\tred{(\eta_1{}(t_1)~\overline{\upsilon_a}~\overline{\tev[C_1]})}
          {((\Lambda{}c.\,\eta_2{}(t_2'))~\upsilon_c~\overline{\upsilon_b}~\overline{\tev[C_2]})} \label{eq:foralli-goal}
      \end{align}
      We have the following induction hypothesis:
      \begin{align}
        &\forall~\overline{b_f'}~\overline{\tau_{b_f}'}~\overline{\tev[Q_2]'}~\overline{\tev[C_2]'}~\overline{\tau_b'}~\theta_2'~\eta_2'~\sigma', \notag \\
        &\quad\overline{b_f'} = \ftv{Q_2, (c, \Gamma)} \rightarrow \notag \\
        &\quad{}(c, \Gamma_\upsilon^2) \vdash \overline{\tev[Q_2]'} \ty [\overline{b_f' \mapsto \tau_{b_f}'}]Q_2 \rightarrow \notag \\
        &\quad\theta_2' = [\overline{b \mapsto \tau_b'}, \overline{b_f' \mapsto \tau_{b_f}'}] \rightarrow \notag \\
        &\quad{}(c, \Gamma_\upsilon^2) \vdash \overline{\tev[C_2]' \ty \theta_2'{}(C_2)} \rightarrow \notag \\
        &\quad\eta_2' = [\overline{d_2 \mapsto \tev[Q_2]'}] \text{ for each } (d_2 \ty C_2) \in Q_2 \rightarrow \notag \\
        &\quad\text{the principal type } \sigma' \text{ is context-unambiguous where } \dec{\epsilon}{\Gamma}{e}{\sigma'} \rightarrow \notag \\
        &\quad\taueq{\sigma'}{\theta_1{}(\tau_1)}{\theta_2'{}(\tau_2)} \rightarrow \notag \\
        &\quad\evidenceCanonicity{}((c{,} \Gamma_\upsilon^1), (c{,} \Gamma_\upsilon^2), \theta_1, \theta_2', (\overline{\tev[Q_1]} \wedge \overline{\tev[C_1]}), (\overline{\tev[Q_2]'} \wedge \overline{\tev[C_2]'}), Q_1 \wedge \overline{C_1}, \notag \\
        &\quad\hspacev{}Q_2 \wedge \overline{C_2}) \rightarrow \notag \\
        &\quad\tred{(\eta_1{}(t_1)~\overline{\upsilon_a}~\overline{\tev[C_1]})}
          {(\eta_2'{}(t_2')~\overline{\upsilon_b'}~\overline{\tev[C_2]'})} \label{eq:foralli-ih}
      \end{align}
      If we instantiate \cref{eq:foralli-ih} with the following variables:
      $\overline{b_f'} = (c, \overline{b_f})$,
      $\overline{\tau_{b_f}'} = (\tau_c, \overline{\tau_{b_f}})$,
      $\overline{\tev[Q_2]'} = \overline{\tev[Q_2]}$,
      $\overline{\tev[C_2]'} = \overline{\tev[C_2]}$,
      $\overline{\tau_b'} = \overline{\tau_b}$, $\sigma' = \sigma'$, we must
      prove the following:
      \begin{align}
        &c, \overline{b_f} = \ftv{Q_2, (c, \Gamma)} \label{eq:foralli-ih-ftv} \\
        &(c, \Gamma_\upsilon^2) \vdash \overline{\tev[Q_2]} \ty [c \mapsto \tau_c, \overline{b_f \mapsto \tau_{b_f}}]Q_2 \label{eq:foralli-ih-tevQ2} \\
        &(c, \Gamma_\upsilon^2) \vdash \overline{\tev[C_2] \ty \theta_2'{}(C_2)} \label{eq:foralli-ih-tevC2} \\
        &\text{the principal type } \sigma \text{ is context-unambiguous where } \dec{\epsilon}{\Gamma}{e}{\sigma} \label{eq:foralli-ih-ambi} \\
        &\taueq{\sigma}{\theta_1{}(\tau_1)}{\theta_2'{}(\tau_2)} \label{eq:foralli-ih-tau-eq} \\
        &\evidenceCanonicity{}((c{,} \Gamma_\upsilon^1), (c{,} \Gamma_\upsilon^2), \theta_1, \theta_2', (\overline{\tev[Q_1]} \wedge \overline{\tev[C_1]}), (\overline{\tev[Q_2]} \wedge \overline{\tev[C_2]}), Q_1 \wedge \overline{C_1}, \notag \\
        &\hspacev{}Q_2 \wedge \overline{C_2}) \label{eq:foralli-ih-ev-canon}
      \end{align}
      We also have that:
      \begin{align}
        &\theta_2' = [\overline{b \mapsto \tau_b}, c \mapsto \tau_c, \overline{b_f \mapsto \tau_{b_f}}] \label{eq:foralli-ih-theta2} \\
        &\eta_2' = [\overline{d_2 \mapsto \tev[Q_2]}] \text{ for each } (d_2 \ty C_2) \in Q_2 \label{eq:foralli-ih-eta2}
      \end{align}
      From \cref{eq:foralli-theta2} and \cref{eq:foralli-ih-theta2}
      follows that $\theta_2 = \theta_2'$. We replace each $\theta_2'$ by
      $\theta_2$. From \cref{eq:foralli-eta2} and
      \cref{eq:foralli-ih-eta2} follows that $\eta_2 = \eta_2'$. We
      replace each $\eta_2'$ by $\eta_2$.

      \cref{eq:foralli-ih-ftv} follows from \cref{eq:foralli-ftv} and
      \cref{eq:foralli-cftv}, \cref{eq:foralli-ih-tevQ2} from
      \cref{eq:foralli-tevQ2} and \cref{eq:foralli-cftv},
      \cref{eq:foralli-ih-tevC2} from \cref{eq:foralli-tevC2},
      \cref{eq:foralli-ih-ambi} from \cref{eq:foralli-ambi},
      \cref{eq:foralli-ih-tau-eq} from \cref{eq:foralli-tau-eq},
      \cref{eq:foralli-ih-ev-canon} from \cref{eq:foralli-ev-canon}.

      From the induction hypothesis we obtain that:
      \begin{align}
        &\tred{(\eta_1{}(t_1)~\overline{\upsilon_a}~\overline{\tev[C_1]})}
          {(\eta_2{}(t_2')~\overline{\upsilon_b}~\overline{\tev[C_2]})} \label{eq:foralli-ih-conclusion}
      \end{align}

      If we now apply \textsc{EraseTyAbs} and \textsc{EraseTyApp} inside the
      type and evidence applications, we get our goal
      \cref{eq:foralli-goal}:
      \begin{align}
        &\tred{(\eta_1{}(t_1)~\overline{\upsilon_a}~\overline{\tev[C_1]})}
          {((\Lambda{}c.\,\eta_2{}(t_2'))~\upsilon_c~\overline{\upsilon_b}~\overline{\tev[C_2]})} \notag
      \end{align}



  \item[\pcase{foralle}.] The $\elimforall$ rule was last used in the first
    or second derivation. By symmetry, we assume the second. We have that:
    \begin{align*}
      &\decelab{Q_1}{\Gamma}{e}{\forall\,\overline{a}.\,\overline{C_1} \Rightarrow \tau_1}{t_1} \\
      &\overline{a_f} = \ftv{Q_1, \Gamma} \\
      &\Gamma_\upsilon^1 \vdash \overline{\tev[Q_1]} \ty [\overline{a_f \mapsto \tau_{a_f}}]Q_1 \\
      &\theta_1 = [\overline{a \mapsto \tau_a}, \overline{a_f \mapsto \tau_{a_f}}] \\
      &\Gamma_\upsilon^1 \vdash \overline{\tev[C_1] \ty \theta_1{}(C_1)} \\
      &\eta_1 = [\overline{d_1 \mapsto \tev[Q_1]}] \text{ for each } (d_1 \ty C_1) \in Q_1
    \end{align*}
    From the second derivation we have that:
    \begin{align}
      &\decelab{Q_2}{\Gamma}{e}{[c \mapsto \tau_c](\forall\,\overline{b}.\,\overline{C_2} \Rightarrow \tau_2)}{t_2'~\upsilon_c} \notag
    \end{align}
    Which we can rewrite to:
    \begin{align}
      &\decelab{Q_2}{\Gamma}{e}{\forall\,\overline{b}.\,[c \mapsto \tau_c]\overline{C_2} \Rightarrow [c \mapsto \tau_c]\tau_2}{t_2'~\upsilon_c} \notag
    \end{align}
    From the second derivation we also have that:
    \begin{align}
      &\decelab{Q_2}{\Gamma}{e}{\forall{}c\overline{b}.\,\overline{C_2} \Rightarrow \tau_2}{t_2'} \notag \\
      &\elab{\tau}{\tau_c}{\upsilon_c} \notag \\
      &\overline{b_f} = \ftv{Q_2, \Gamma} \label{eq:foralle-ftv} \\
      &\Gamma_\upsilon^2 \vdash \overline{\tev[Q_2]} \ty [\overline{b_f \mapsto \tau_{b_f}}]Q_2 \label{eq:foralle-tevQ2} \\
      &\theta_2 = [\overline{b \mapsto \tau_b}, \overline{b_f \mapsto \tau_{b_f}}] \label{eq:foralle-theta2} \\
      &\Gamma_\upsilon^2 \vdash \overline{\tev[C_2] \ty \theta_2{}([c \mapsto \tau_c]C_2)} \label{eq:foralle-tevC2} \\
      &\eta_2 = [\overline{d_2 \mapsto \tev[Q_2]}] \text{ for each } (d_2 \ty C_2) \in Q_2 \label{eq:foralle-eta2} \\
      &\text{the principal type } \sigma \text{ is context-unambiguous where } \dec{\epsilon}{\Gamma}{e}{\sigma} \label{eq:foralle-ambi} \\
      &\taueq{\sigma}{\theta_1{}(\tau_1)}{\theta_2{}([c \mapsto \tau_c]\tau_2)} \label{eq:foralle-tau-eq} \\
      &\evidenceCanonicity{}(\Gamma_\upsilon^1, \Gamma_\upsilon^2, \theta_1, \theta_2, (\overline{\tev[Q_1]} \wedge \overline{\tev[C_1]}), (\overline{\tev[Q_2]} \wedge \overline{\tev[C_2]}), Q_1 \wedge \overline{C_1}, \notag \\
      &\hspacev{}Q_2 \wedge [c \mapsto \tau_c]\overline{C_2}) \label{eq:foralle-ev-canon}
    \end{align}

    And must prove that:
    \begin{align*}
      &\tred{(\eta_1{}(t_1)~\overline{\upsilon_a}~\overline{\tev[C_1]})}
        {(\eta_2{}(t_2'~\upsilon_c)~\overline{\upsilon_b}~\overline{\tev[C_2]})}
    \end{align*}
    We can rewrite this to:
    \begin{align}
      &\tred{(\eta_1{}(t_1)~\overline{\upsilon_a}~\overline{\tev[C_1]})}
        {(\eta_2{}(t_2')~\upsilon_c~\overline{\upsilon_b}~\overline{\tev[C_2]})} \label{eq:foralle-goal}
    \end{align}

    We have the following induction hypothesis. For notational convenience we
    generalise over a separate variable $\tau_c'$, which would otherwise be
    part of $\overline{\tau_b'}$.
    \begin{align}
      &\forall~\overline{b_f'}~\overline{\tau_{b_f}'}~\overline{\tev[Q_2]'}~\overline{\tev[C_2]'}~\overline{\tau_b'}~\theta_2'~\eta_2'~\tau_c'~\sigma', \notag \\
      &\quad\overline{b_f'} = \ftv{Q_2, \Gamma} \rightarrow \notag \\
      &\quad\Gamma_\upsilon^2 \vdash \overline{\tev[Q_2]'} \ty [\overline{b_f' \mapsto \tau_{b_f}'}]Q_2 \rightarrow \notag \\
      &\quad\theta_2' = [c \mapsto \tau_c', \overline{b \mapsto \tau_b'}, \overline{b_f' \mapsto \tau_{b_f}'}] \rightarrow \notag \\
      &\quad\Gamma_\upsilon^2 \vdash \overline{\tev[C_2]' \ty \theta_2'{}(C_2)} \rightarrow \notag \\
      &\quad\eta_2' = [\overline{d_2 \mapsto \tev[Q_2]'}] \text{ for each } (d_2 \ty C_2) \in Q_2 \rightarrow \notag \\
      &\quad\text{the principal type } \sigma' \text{ is context-unambiguous where } \dec{\epsilon}{\Gamma}{e}{\sigma'} \rightarrow \notag \\
      &\quad\taueq{\sigma'}{\theta_1{}(\tau_1)}{\theta_2'{}(\tau_2)} \rightarrow \notag \\
      &\quad\evidenceCanonicity{}(\Gamma_\upsilon^1, \Gamma_\upsilon^2, \theta_1, \theta_2', (\overline{\tev[Q_1]} \wedge \overline{\tev[C_1]}), (\overline{\tev[Q_2]'} \wedge \overline{\tev[C_2]'}), Q_1 \wedge \overline{C_1}, Q_2 \wedge \overline{C_2}) \rightarrow \notag \\
      &\quad\tred{(\eta_1{}(t_1)~\overline{\upsilon_a}~\overline{\tev[C_1]})}
        {(\eta_2'{}(t_2')~\upsilon_c'~\overline{\upsilon_b'}~\overline{\tev[C_2]'})} \label{eq:foralle-ih}
    \end{align}

    If we instantiate \cref{eq:foralle-ih} with the following variables:
    $\overline{b_f'} = \overline{b_f}$,
    $\overline{\tau_{b_f}'} = \overline{\tau_{b_f}}$,
    $\overline{\tev[Q_2]'} = \overline{\tev[Q_2]}$,
    $\overline{\tev[C_2]'} = \overline{\tev[C_2]}$,
    $\overline{\tau_b'} = \overline{\tau_b}$, $\tau_c' = \tau_c$,
    $\sigma' = \sigma$, we must prove the following:
    \begin{align}
      &\overline{b_f} = \ftv{Q_2, \Gamma} \label{eq:foralle-ih-ftv} \\
      &\Gamma_\upsilon^2 \vdash \overline{\tev[Q_2]} \ty [\overline{b_f \mapsto \tau_{b_f}}]Q_2 \label{eq:foralle-ih-tevQ2} \\
      &\Gamma_\upsilon^2 \vdash \overline{\tev[C_2] \ty \theta_2'{}(C_2)} \label{eq:foralle-ih-tevC2} \\
      &\text{the principal type } \sigma \text{ is context-unambiguous where } \dec{\epsilon}{\Gamma}{e}{\sigma} \label{eq:foralle-ih-ambi} \\
      &\taueq{\sigma}{\theta_1{}(\tau_1)}{\theta_2'{}(\tau_2)} \label{eq:foralle-ih-tau-eq} \\
      &\evidenceCanonicity{}(\Gamma_\upsilon^1, \Gamma_\upsilon^2, \theta_1, \theta_2', (\overline{\tev[Q_1]} \wedge \overline{\tev[C_1]}), (\overline{\tev[Q_2]} \wedge \overline{\tev[C_2]}), Q_1 \wedge \overline{C_1}, Q_2 \wedge \overline{C_2}) \label{eq:foralle-ih-ev-canon}
    \end{align}
    We also have that:
    \begin{align}
      &\theta_2' = [c \mapsto \tau_c, \overline{b \mapsto \tau_b}, \overline{b_f \mapsto \tau_{b_f}}]  \label{eq:foralle-ih-theta2} \\
      &\eta_2' = [\overline{d_2 \mapsto \tev[Q_2]}] \text{ for each } (d_2 \ty C_2) \in Q_2 \label{eq:foralle-ih-eta2}
    \end{align}
    From \cref{eq:foralle-eta2} and \cref{eq:foralle-ih-eta2}
    follows that $\eta_2 = \eta_2'$. We replace each $\eta_2'$ by $\eta_2$.

    \cref{eq:foralle-ih-ftv} follows from \cref{eq:foralle-ftv},
    \cref{eq:foralle-ih-tevQ2} from \cref{eq:foralle-tevQ2},
    \cref{eq:foralle-ih-tevC2} from \cref{eq:foralle-tevC2} and
    \cref{eq:foralle-ih-theta2}, \cref{eq:foralle-ih-tau-eq} from
    \cref{eq:foralle-tau-eq} and \cref{eq:foralle-ih-theta2},
    \cref{eq:foralle-ih-ambi} from \cref{eq:foralle-ambi},
    \cref{eq:foralle-ih-ev-canon} from \cref{eq:foralle-ev-canon}.

    From the induction hypothesis we obtain that:
    \begin{align}
      &\tred{(\eta_1{}(t_1)~\overline{\upsilon_a}~\overline{\tev[C_1]})}
        {(\eta_2{}(t_2')~\upsilon_c~\overline{\upsilon_b}~\overline{\tev[C_2]})} \label{eq:foralle-ih-conclusion}
    \end{align}
    Which corresponds to our goal \cref{eq:foralle-goal}, proving this
    case.



  \item[\pcase{cti}.] The $\introct$ rule was last used in the first or second
    derivation. By symmetry, we assume the second. We have that:
    \begin{align}
      &\decelab{Q_1}{\Gamma}{e}{\forall\,\overline{a}.\,\overline{C_1} \Rightarrow \tau_1}{t_1} \notag \\
      &\overline{a_f} = \ftv{Q_1, \Gamma} \notag \\
      &\Gamma_\upsilon^1 \vdash \overline{\tev[Q_1]} \ty [\overline{a_f \mapsto \tau_{a_f}}]Q_1 \notag \\
      &\theta_1 = [\overline{a \mapsto \tau_a}, \overline{a_f \mapsto \tau_{a_f}}] \notag \\
      &\Gamma_\upsilon^1 \vdash \overline{\tev[C_1] \ty \theta_1{}(C_1)} \notag \\
      &\eta_1 = [\overline{d_1 \mapsto \tev[Q_1]}] \text{ for each } (d_1 \ty C_1) \in Q_1 \\
      &\notag \\
      &\decelab{Q_2}{\Gamma}{e}{C \Rightarrow \overline{C_2} \Rightarrow \tau_2}{\lambda{}(d\ty{}\upsilon_C).\,t_2'} \label{eq:cti-hty2} \\
      &\decelab{d \ty C \wedge Q_2}{\Gamma}{e}{\overline{C_2} \Rightarrow \tau_2}{t_2'} \notag \\
      &\elab{C}{C}{\upsilon_C} \notag \\
      &d \notin \fdv{Q_1} \label{eq:cti-d-fresh} \\
      &\overline{b_f} = \ftv{Q_2, \Gamma} \label{eq:cti-ftv} \\
      &\Gamma_\upsilon^2 \vdash \overline{\tev[Q_2]} \ty [\overline{b_f \mapsto \tau_{b_f}}]Q_2 \label{eq:cti-tevQ2} \\
      &\theta_2 = [\overline{b \mapsto \tau_b}, \overline{b_f \mapsto \tau_{b_f}}] \label{eq:cti-theta2} \\
      &\Gamma_\upsilon^2 \vdash \tev[C] \ty \theta_2{}(C) \wedge \overline{\tev[C_2] \ty \theta_2{}(C_2)} \label{eq:cti-tevC2} \\
      &\eta_2 = [\overline{d_2 \mapsto \tev[Q_2]}] \text{ for each } (d_2 \ty C_2) \in Q_2 \label{eq:cti-eta2} \\
      &\text{the principal type } \sigma \text{ is context-unambiguous where } \dec{\epsilon}{\Gamma}{e}{\sigma} \label{eq:cti-ambi} \\
      &\taueq{\sigma}{\theta_1{}(\tau_1)}{\theta_2{}(\tau_2)} \label{eq:cti-tau-eq} \\
      &\evidenceCanonicity{}(\Gamma_\upsilon^1, \Gamma_\upsilon^2, \theta_1, \theta_2, (\overline{\tev[Q_1]} \wedge \overline{\tev[C_1]}), (\overline{\tev[Q_2]} \wedge \tev[C] \wedge \overline{\tev[C_2]}), Q_1 \wedge \overline{C_1}, \notag \\
      &\hspacev{}Q_2 \wedge C \wedge \overline{C_2}) \label{eq:cti-ev-canon}
    \end{align}
    And must prove that:
    \begin{align*}
      &\tred{(\eta_1{}(t_1)~\overline{\upsilon_a}~\overline{\tev[C_1]})}
        {(\eta_2{}(\lambda{}(d\ty{}\upsilon_C).\,t_2')~\overline{\upsilon_b}~\tev[C]~\overline{\tev[C_2]})}
    \end{align*}

    Since we know from \cref{eq:cti-hty2} that:
    \begin{align}
      &\overline{b} \text{ is empty} \label{eq:cti-b-empty}
    \end{align}
    Consequently, it must also be that:
    \begin{align}
      &\overline{\tau_b} \text{ is empty} \label{eq:cti-tau-b-empty}
    \end{align}
    Using \cref{eq:cti-b-empty} and \cref{eq:cti-tau-b-empty} we can
    rewrite \cref{eq:cti-theta2}:
    \begin{align}
      &\theta_2 = [\overline{b_f \mapsto \tau_{b_f}}] \label{eq:cti-theta2'}
    \end{align}

    We must prove that:
    \begin{align}
      &\tred{(\eta_1{}(t_1)~\overline{\upsilon_a}~\overline{\tev[C_1]})}
        {(\eta_2{}(\lambda{}(d\ty{}\upsilon_C).\,t_2')~\tev[C]~\overline{\tev[C_2]})} \label{eq:cti-goal}
    \end{align}

    We have the following induction hypothesis, which has been simplified
    using \cref{eq:cti-b-empty} and \cref{eq:cti-tau-b-empty}.
    Consequently, we omit the $\overline{\tau_b'}$ variable as it can only be empty.
    For notational convenience we generalise over a separate variable
    $\tev[Q_2]'$, which would otherwise be part of $\overline{\tev[Q_2]'}$.
    \begin{align}
      &\forall~\overline{b_f'}~\overline{\tau_{b_f}'}~\overline{\tev[Q_2]'}~\overline{\tev[C_2]'}~\theta_2'~\eta_2'~\tev[Q_2]'~\sigma', \notag \\
      &\quad\overline{b_f'} = \ftv{d \ty C \wedge Q_2, \Gamma} \rightarrow \notag \\
      &\quad(d \ty \upsilon_C \wedge \Gamma_\upsilon^2) \vdash \tev[Q_2]' \ty [\overline{b_f' \mapsto \tau_{b_f}'}]C \wedge \overline{\tev[Q_2]'} \ty [\overline{b_f' \mapsto \tau_{b_f}'}]Q_2 \rightarrow \notag \\
      &\quad\theta_2' = [\overline{b_f' \mapsto \tau_{b_f}'}] \rightarrow \notag \\
      &\quad(d \ty \upsilon_C \wedge \Gamma_\upsilon^2) \vdash \overline{\tev[C_2]' \ty \theta_2'{}(C_2)} \rightarrow \notag \\
      &\quad\eta_2' = [\overline{d_2 \mapsto \tev[Q_2]'}] \text{ for each } (d_2 \ty C_2) \in (d \ty C \wedge Q_2) \rightarrow \notag \\
      &\quad\text{the principal type } \sigma' \text{ is context-unambiguous where } \dec{\epsilon}{\Gamma}{e}{\sigma'} \rightarrow \notag \\
      &\quad\taueq{\sigma'}{\theta_1{}(\tau_1)}{\theta_2'{}(\tau_2)} \rightarrow \notag \\
      &\quad\evidenceCanonicity{}(\Gamma_\upsilon^1, (d \ty \upsilon_C \wedge \Gamma_\upsilon^2), \theta_1, \theta_2', (\overline{\tev[Q_1]} \wedge \overline{\tev[C_1]}), (\tev[Q_2]' \wedge \overline{\tev[Q_2]'} \wedge \overline{\tev[C_2]'}), \notag \\
      &\quad\hspacev{}Q_1 \wedge \overline{C_1}, C \wedge Q_2 \wedge \overline{C_2}) \rightarrow \notag \\
      &\quad\tred{(\eta_1{}(t_1)~\overline{\upsilon_a}~\overline{\tev[C_1]})}{(\eta_2'{}(t_2')~\overline{\tev[C_2]'})} \label{eq:cti-ih}
    \end{align}
    If we instantiate \cref{eq:cti-ih} with the following variables:
    $\overline{b_f'} = \overline{b_f}$,
    $\overline{\tau_{b_f}'} = \overline{\tau_{b_f}}$,
    $\overline{\tev[Q_2]'} = \overline{\tev[Q_2]}$,
    $\overline{\tev[C_2]'} = \overline{\tev[C_2]}$, $\tev[Q_2]' = \tev[C]$,
    $\sigma' = \sigma$, we must prove the following:
    \begin{align}
      &\overline{b_f} = \ftv{d \ty C \wedge Q_2, \Gamma} \label{eq:cti-ih-ftv} \\
      &(d \ty \upsilon_C \wedge \Gamma_\upsilon^2) \vdash \tev[C] \ty [\overline{b_f \mapsto \tau_{b_f}}]C \wedge \overline{\tev[Q_2]} \ty [\overline{b_f \mapsto \tau_{b_f}}]Q_2 \label{eq:cti-ih-tevQ2} \\
      &(d \ty \upsilon_C \wedge \Gamma_\upsilon^2) \vdash \overline{\tev[C_2] \ty \theta_2'{}(C_2)} \label{eq:cti-ih-tevC2} \\
      &\text{the principal type } \sigma \text{ is context-unambiguous where } \dec{\epsilon}{\Gamma}{e}{\sigma} \label{eq:cti-ih-ambi} \\
      &\taueq{\sigma}{\theta_1{}(\tau_1)}{\theta_2'{}(\tau_2)} \label{eq:cti-ih-tau-eq} \\
      &\evidenceCanonicity{}(\Gamma_\upsilon^1, (d \ty \upsilon_C \wedge \Gamma_\upsilon^2), \theta_1, \theta_2', (\overline{\tev[Q_1]} \wedge \overline{\tev[C_1]}), (\tev[C] \wedge \overline{\tev[Q_2]} \wedge \overline{\tev[C_2]}), \notag \\
      &\hspacev{}Q_1 \wedge \overline{C_1}, C \wedge Q_2 \wedge \overline{C_2}) \label{eq:cti-ih-ev-canon}
    \end{align}
    We also have that:
    \begin{align}
      &\theta_2' = [\overline{b_f \mapsto \tau_{b_f}}] \label{eq:cti-ih-theta2} \\
      &\eta_2' = [\overline{d_2 \mapsto \tev[Q_2]}] \text{ for each } (d_2 \ty C_2) \in (d \ty C \wedge Q_2) \label{eq:cti-ih-eta2}
    \end{align}
    From \cref{eq:cti-theta2'} and \cref{eq:cti-ih-theta2} follows
    that $\theta_2 = \theta_2'$. We replace each $\theta_2'$ by $\theta_2$.

    \cref{eq:cti-ih-ftv} follows from \cref{eq:cti-ftv} and the fact that $C$
    only contains type variables that are in $\Gamma$ \cref{eq:cti-hty2}.
    \cref{eq:cti-ih-tevQ2} follows from
    \cref{lemma:systemf-context-weakening}, \cref{eq:cti-tevQ2}, and
    \cref{eq:cti-tevC2} combined with \cref{eq:cti-theta2'},
    \cref{eq:cti-ih-tevC2} from \cref{lemma:systemf-context-weakening} and
    \cref{eq:cti-tevC2}, \cref{eq:cti-ih-ambi} from \cref{eq:cti-ambi},
    \cref{eq:cti-ih-tau-eq} from \cref{eq:cti-tau-eq}.
    \cref{eq:cti-ih-ev-canon} follows from
    \cref{lemma:canonicity-context-weakening} and \cref{eq:cti-ev-canon}, as
    the only difference with \cref{eq:cti-ev-canon} is the extended context,
    besides the reordering of the constraints and evidence.

    From the induction hypothesis we obtain that:
    \begin{align}
      &\tred{(\eta_1{}(t_1)~\overline{\upsilon_a}~\overline{\tev[C_1]})}
        {(\eta_2'{}(t_2')~\overline{\tev[C_2]})} \label{eq:cti-ih-conclusion}
    \end{align}
    Because of \cref{eq:cti-d-fresh} we know that $d \notin \fdv{\eta_2}$
    \cref{eq:cti-eta2}. This means we can rewrite $\eta_2'$
    \cref{eq:cti-ih-eta2} to:
    \begin{align*}
      &\eta_2' = [d \mapsto \tev[C], \eta_2]
    \end{align*}
    Using this we can rewrite \cref{eq:cti-ih-conclusion} to:
    \begin{align*}
      &\tred{(\eta_1{}(t_1)~\overline{\upsilon_a}~\overline{\tev[C_1]})}
        {([d \mapsto \tev[C], \eta_2](t_2')~\overline{\tev[C_2]})}
    \end{align*}
    If we now apply \betared{} inside the evidence applications of $\overline{\tev[C_2]}$, we get:
    \begin{align*}
      &\tred{(\eta_1{}(t_1)~\overline{\upsilon_a}~\overline{\tev[C_1]})}
        {((\lambda{}(d \ty \upsilon_C).\,\eta_2(t_2'))~\tev[C]~\overline{\tev[C_2]})}
    \end{align*}
    Because $d \notin \dom{\eta_2}$, we can rewrite this to our goal
    \cref{eq:cti-goal}:
    \begin{align*}
      &\tred{(\eta_1{}(t_1)~\overline{\upsilon_a}~\overline{\tev[C_1]})}
        {(\eta_2{}(\lambda{}(d\ty{}\upsilon_C).\,t_2')~\tev[C]~\overline{\tev[C_2]})}
    \end{align*}



  \item[\pcase{cte}.] The $\elimct$ rule was last used in the first or second
    derivation. By symmetry, we assume the second. We have that:
    \begin{align}
      &\decelab{Q_1}{\Gamma}{e}{\forall\,\overline{a}.\,\overline{C_1} \Rightarrow \tau_1}{t_1} \notag \\
      &\overline{a_f} = \ftv{Q_1, \Gamma} \notag \\
      &\Gamma_\upsilon^1 \vdash \overline{\tev[Q_1]} \ty [\overline{a_f \mapsto \tau_{a_f}}]Q_1 \notag \\
      &\theta_1 = [\overline{a \mapsto \tau_a}, \overline{a_f \mapsto \tau_{a_f}}] \notag \\
      &\Gamma_\upsilon^1 \vdash \overline{\tev[C_1] \ty \theta_1{}(C_1)} \notag \\
      &\eta_1 = [\overline{d_1 \mapsto \tev[Q_1]}] \text{ for each } (d_1 \ty C_1) \in Q_1 \\
      &\notag \\
      &\decelab{Q_2}{\Gamma}{e}{\overline{C_2} \Rightarrow \tau_2}{t_2'~\tev} \label{eq:cte-hty2} \\
      &\decelab{Q_2}{\Gamma}{e}{C \Rightarrow \overline{C_2} \Rightarrow \tau_2}{t_2'} \notag \\
      &\entails{\QQ \wedge Q_2}{\tev \ty C} \label{eq:cte-tev} \\
      &\overline{b_f} = \ftv{Q_2, \Gamma} \label{eq:cte-ftv} \\
      &\Gamma_\upsilon^2 \vdash \overline{\tev[Q_2]} \ty [\overline{b_f \mapsto \tau_{b_f}}]Q_2 \label{eq:cte-tevQ2} \\
      &\theta_2 = [\overline{b \mapsto \tau_b}, \overline{b_f \mapsto \tau_{b_f}}] \label{eq:cte-theta2} \\
      &\Gamma_\upsilon^2 \vdash \overline{\tev[C_2] \ty \theta_2{}(C_2)} \label{eq:cte-tevC2} \\
      &\eta_2 = [\overline{d_2 \mapsto \tev[Q_2]}] \text{ for each } (d_2 \ty C_2) \in Q_2 \label{eq:cte-eta2} \\
      &\text{the principal type } \sigma \text{ is context-unambiguous where } \dec{\epsilon}{\Gamma}{e}{\sigma} \label{eq:cte-ambi} \\
      &\taueq{\sigma}{\theta_1{}(\tau_1)}{\theta_2{}(\tau_2)} \label{eq:cte-tau-eq} \\
      &\evidenceCanonicity{}(\Gamma_\upsilon^1, \Gamma_\upsilon^2, \theta_1, \theta_2, (\overline{\tev[Q_1]} \wedge \overline{\tev[C_1]}), (\overline{\tev[Q_2]} \wedge \overline{\tev[C_2]}), Q_1 \wedge \overline{C_1}, Q_2 \wedge \overline{C_2}) \label{eq:cte-ev-canon}
    \end{align}
    Since we know from \cref{eq:cte-hty2} that:
    \begin{align}
      &\overline{b} \text{ is empty} \label{eq:cte-b-empty}
    \end{align}
    Consequently, it must also be that:
    \begin{align}
      &\overline{\tau_b} \text{ is empty} \label{eq:cte-tau-b-empty}
    \end{align}
    Using \cref{eq:cte-b-empty} and \cref{eq:cte-tau-b-empty} we can rewrite
    \cref{eq:cte-theta2} to:
    \begin{align}
      &\theta_2 = [\overline{b_f \mapsto \tau_{b_f}}] \label{eq:cte-theta2'}
    \end{align}

    And must prove that:
    \begin{align}
      &\tred{(\eta_1{}(t_1)~\overline{\upsilon_a}~\overline{\tev[C_1]})}
        {(\eta_2{}(t_2'~\tev)~\overline{\tev[C_2]})} \label{eq:cte-goal}
    \end{align}

    We have the following induction hypothesis, which has been simplified
    using \cref{eq:cte-b-empty} and \cref{eq:cte-tau-b-empty}. Consequently,
    we omit the $\overline{\tau_b'}$ variable as it can only be empty. For notational
    convenience we generalise over a separate variable $\tev[C_2]'$, which
    would otherwise be part of $\overline{\tev[C_2]'}$.
    \begin{align}
      &\forall~\overline{b_f'}~\overline{\tau_{b_f}'}~\overline{\tev[Q_2]'}~\overline{\tev[C_2]'}~\theta_2'~\eta_2'~\tev[C_2]'~\sigma', \notag \\
      &\quad\overline{b_f'} = \ftv{Q_2, \Gamma} \rightarrow \notag \\
      &\quad\Gamma_\upsilon^2 \vdash \overline{\tev[Q_2]'} \ty [\overline{b_f' \mapsto \tau_{b_f}'}]Q_2 \rightarrow \notag \\
      &\quad\theta_2' = [\overline{b_f' \mapsto \tau_{b_f}'}] \rightarrow \notag \\
      &\quad\Gamma_\upsilon^2 \vdash \tev[C_2]' \ty \theta_2'{}(C) \wedge \overline{\tev[C_2]' \ty \theta_2'{}(C_2)} \rightarrow \notag \\
      &\quad\eta_2' = [\overline{d_2 \mapsto \tev[Q_2]'}] \text{ for each } (d_2 \ty C_2) \in Q_2 \rightarrow \notag \\
      &\quad\text{the principal type } \sigma' \text{ is context-unambiguous where } \dec{\epsilon}{\Gamma}{e}{\sigma'} \rightarrow \notag \\
      &\quad\taueq{\sigma'}{\theta_1{}(\tau_1)}{\theta_2'{}(\tau_2)} \rightarrow \notag \\
      &\quad\evidenceCanonicity{}(\Gamma_\upsilon^1, \Gamma_\upsilon^2, \theta_1, \theta_2', (\overline{\tev[Q_1]} \wedge \overline{\tev[C_1]}), (\overline{\tev[Q_2]'} \wedge \tev[C_2]' \wedge \overline{\tev[C_2]'}), Q_1 \wedge \overline{C_1}, \notag \\
      &\quad\hspacev{}Q_2 \wedge C \wedge \overline{C_2}) \rightarrow \notag \\
      &\quad\tred{(\eta_1{}(t_1)~\overline{\upsilon_a}~\overline{\tev[C_1]})}{(\eta_2'{}(t_2')~\tev[C_2]'~\overline{\tev[C_2]'})} \label{eq:cte-ih}
    \end{align}
    If we instantiate \cref{eq:cte-ih} with the following variables:
    $\overline{b_f'} = \overline{b_f}$,
    $\overline{\tau_{b_f}'} = \overline{\tau_{b_f}}$,
    $\overline{\tev[Q_2]'} = \overline{\tev[Q_2]}$,
    $\overline{\tev[C_2]'} = \overline{\tev[C_2]}$,
    $\tev[C_2]' = \theta_2'{}(\eta_2'{}(\tev))$, $\sigma' = \sigma$, we must
    prove the following:
    \begin{align}
      &\overline{b_f} = \ftv{Q_2, \Gamma} \label{eq:cte-ih-ftv} \\
      &\Gamma_\upsilon^2 \vdash \overline{\tev[Q_2]} \ty [\overline{b_f \mapsto \tau_{b_f}}]Q_2 \label{eq:cte-ih-tevQ2} \\
      &\Gamma_\upsilon^2 \vdash \theta_2'{}(\eta_2'{}(\tev)) \ty \theta_2'{}(C) \wedge \overline{\tev[C_2] \ty \theta_2'{}(C_2)} \label{eq:cte-ih-tevC2} \\
      &\text{the principal type } \sigma \text{ is context-unambiguous where } \dec{\epsilon}{\Gamma}{e}{\sigma} \label{eq:cte-ih-ambi} \\
      &\taueq{\sigma}{\theta_1{}(\tau_1)}{\theta_2'{}(\tau_2)} \label{eq:cte-ih-tau-eq} \\
      &\evidenceCanonicity{}(\Gamma_\upsilon^1, \Gamma_\upsilon^2, \theta_1, \theta_2', (\overline{\tev[Q_1]} \wedge \overline{\tev[C_1]}), (\overline{\tev[Q_2]} \wedge \theta_2'{}(\eta_2'{}(\tev)) \wedge \overline{\tev[C_2]}), \notag \\
      &\hspacev{}Q_1 \wedge \overline{C_1}, Q_2 \wedge C \wedge \overline{C_2}) \label{eq:cte-ih-ev-canon}
    \end{align}
    We also have that:
    \begin{align}
      &\theta_2' = [\overline{b_f \mapsto \tau_{b_f}}] \label{eq:cte-ih-theta2} \\
      &\eta_2' = [\overline{d_2 \mapsto \tev[Q_2]}] \text{ for each } (d_2 \ty C_2) \in Q_2 \label{eq:cte-ih-eta2}
    \end{align}
    From \cref{eq:cte-theta2'} and \cref{eq:cte-ih-theta2} follows that
    $\theta_2 = \theta_2'$. We replace each $\theta_2'$ by $\theta_2$. From
    \cref{eq:cte-eta2} and \cref{eq:cte-ih-eta2} follows that
    $\eta_2 = \eta_2'$. We replace each $\eta_2'$ by $\eta_2$.

    \cref{eq:cte-ih-ftv} follows from \cref{eq:cte-ftv},
    \cref{eq:cte-ih-tevQ2} from \cref{eq:cte-tevQ2}, \cref{eq:cte-ih-ambi}
    from \cref{eq:cte-ambi}, and \cref{eq:cte-ih-tau-eq} from
    \cref{eq:cte-tau-eq}.

    We now prove \cref{eq:cte-ih-ev-canon} given that we have \cref{eq:cte-ev-canon}.
    The difference between the two is that now we have the additional constraint $C$ with evidence $\theta_2{}(\eta_2{}(\tev))$ in the $\overline{C_2}$ and $\overline{\tev[2]}$.
    This constraint $C$ is derived from $\QQ \wedge Q_2$, see \cref{eq:cte-tev}.

    By \cref{lem:canonical-ev-form} with \cref{eq:cte-ev-canon}, we get that
    there exists a $\tev[2]'$ and $C_2'$ such that (using the existentially quantified objects in the evidenceCanonicity definition)
    \begin{itemize}[leftmargin=2em]
    \item $\tred{[\overline{d_j \mapsto
          \tev[2,j]^p}][d\mapsto \tev[2]'']\tev[2]'}{\eta_2{}(\tev)}$
    \item $\tev[2]'$ and $C_2'$ are uniquely determined by $C_2$ and the $\overline{C_{j}^p}$
    \item $\entails{\QQ \wedge \overline{C_{2,0}}}{\tev[2]'' \ty C_2'}$
    \end{itemize}
    This is precisely what we need for proving evidenceCanonicity for the constraints extended with $\eta_2{}(\tev) \ty C$.

    To prove \cref{eq:cte-ih-tevC2}, we use \cref{eq:cte-tevC2} to prove
    $\Gamma_\upsilon^2 \vdash \overline{\tev[C_2] \ty \theta_2{}(C_2)}$. That
    leaves us with the following left to prove:
    \begin{align}
      &\Gamma_\upsilon^2 \vdash \theta_2{}(\eta_2{}(\tev)) \ty \theta_2{}(C) \label{cref:cte-subgoal}
    \end{align}
    If we apply \Rthree{} (applying a type substitution to both sides of an
    entailment) to \cref{eq:cte-tev} with $\theta_2$, we get:
    \begin{align}
      &\entails{\theta_2{}(\QQ) \wedge \theta_2{}(Q_2)}{\theta_2{}(\tev) \ty \theta_2{}(C)} \label{cref:cte-subgoal-g1}
    \end{align}
    We can use an alternative notation of \cref{eq:cte-tevQ2} (also see
    \cref{eq:cte-eta2}):
    \begin{align*}
      &\sftc{\Gamma_\upsilon^2}{\eta_2}{[\overline{b_f \mapsto \tau_{b_f}}]Q_2}
    \end{align*}
    As $\theta_2 = [\overline{b_f \mapsto \tau_{b_f}}]$, we can rewrite this to:
    \begin{align}
      &\sftc{\Gamma_\upsilon^2}{\eta_2}{\theta_2{}(Q_2)} \label{cref:cte-subgoal-g2}
    \end{align}
    If we then apply \cref{lemma:subst-typing} using
    \cref{cref:cte-subgoal-g1} ($\theta_2{}(\QQ)$ is elaborated in $\Gamma_\upsilon^2$) and \cref{cref:cte-subgoal-g2}, we get:
    \begin{align*}
      &\Gamma_\upsilon^2 \vdash \eta_2{}(\theta_2{}(\tev)) \ty \theta_2{}(C)
    \end{align*}
    Which we can rewrite to our subgoal \cref{cref:cte-subgoal}, thereby proving it:
    \begin{align*}
      &\Gamma_\upsilon^2 \vdash \theta_2{}(\eta_2{}(\tev)) \ty \theta_2{}(C)
    \end{align*}

    From the induction hypothesis we obtain that:
    \begin{align}
      &\tred{(\eta_1{}(t_1)~\overline{\upsilon_a}~\overline{\tev[C_1]})}
        {(\eta_2{}(t_2')~\theta_2{}(\eta_2{}(\tev))~\overline{\tev[C_2]})} \label{eq:cte-ih-conclusion}
    \end{align}

    To prove our goal \cref{eq:cte-goal}, we apply \textsc{Trans} using
    \cref{eq:cte-ih-conclusion} so that we must only prove the following:
    \begin{align*}
      \tred{(\eta_2{}(t_2')~\theta_2{}(\eta_2{}(\tev))~\overline{\tev[C_2]})}
      {(\eta_2{}(t_2'~\tev)~\overline{\tev[C_2]})}
    \end{align*}
    By using the \textsc{App} and \textsc{Reflexivity} rules, we can remove
    the $\overline{\tev[C_2]}$ applications on both sides:
    \begin{align*}
      \tred{(\eta_2{}(t_2')~\theta_2{}(\eta_2{}(\tev)))}{(\eta_2{}(t_2'~\tev))}
    \end{align*}
    Because of \cref{lemma:erase-theta}, we can ignore the type substitution
    $\theta_2$ as the equivalence relation of terms is after type erasure:
    \begin{align*}
      \tred{(\eta_2{}(t_2')~\eta_2{}(\tev))}{(\eta_2{}(t_2'~\tev))}
    \end{align*}
    We can rewrite this to:
    \begin{align*}
      \tred{(\eta_2{}(t_2'~\tev))}{(\eta_2{}(t_2'~\tev))}
    \end{align*}
    We can use the \textsc{Reflexivity} to prove this, thereby finishing this case.



  \item[\pcase{dictapp-dictapp}.] The \textsc{DictApp} rule was last used in both derivations.
    We have that:
    \begin{align}
      &Q_1\,;\,\Gamma\vdash{}e_1~\dictapp{e_2}{\TC~a}\ty{}\forall{}\overline{a_1}\overline{a_2}.\,[a \mapsto \tau_2']((\overline{C_{11}}, \overline{C_{12}}) \Rightarrow \tau_1') \notag \\
      &\quad\new{\rightsquigarrow{}\Lambda{}\overline{a_1}\overline{a_2}.\,\lambda{}(\overline{d_{11} \ty \upsilon_{C_{11}}}).\,t_1'~\overline{a_1}~\upsilon_2'~\overline{a_2}~\overline{d_{11}}~t_2'} \label{eq:dictapp-hty1} \\
      &Q_2\,;\,\Gamma\vdash{}e_1~\dictapp{e_2}{\TC~a}\ty{}\forall{}\overline{b_1}\overline{b_2}.\,[a \mapsto \tau_2'']((\overline{C_{21}}, \overline{C_{22}}) \Rightarrow \tau_1'') \notag \\
      &\quad\new{\rightsquigarrow{}\Lambda{}\overline{b_1}\overline{b_2}.\,\lambda{}(\overline{d_{21} \ty \upsilon_{C_{21}}}).\,t_1''~\overline{b_1}~\upsilon_2''~\overline{b_2}~\overline{d_{21}}~t_2''} \label{eq:dictapp-hty2}
    \end{align}
    We can rewrite \cref{eq:dictapp-hty1} and \cref{eq:dictapp-hty2} to:
    \begin{align}
      &Q_1\,;\,\Gamma\vdash{}e_1~\dictapp{e_2}{\TC~a}\ty{}\forall{}\overline{a_1}\overline{a_2}.\,(\overline{[a \mapsto \tau_2']C_{11}}, \overline{[a \mapsto \tau_2']C_{12}}) \Rightarrow [a \mapsto \tau_2']\tau_1' \notag \\
      &\quad\new{\rightsquigarrow{}\Lambda{}\overline{a_1}\overline{a_2}.\,\lambda{}(\overline{d_{11} \ty \upsilon_{C_{11}}}).\,t_1'~\overline{a_1}~\upsilon_2'~\overline{a_2}~\overline{d_{11}}~t_2'} \label{eq:dictapp-hty1'} \\
      &Q_2\,;\,\Gamma\vdash{}e_1~\dictapp{e_2}{\TC~a}\ty{}\forall{}\overline{b_1}\overline{b_2}.\,(\overline{[a \mapsto \tau_2'']C_{21}}, \overline{[a \mapsto \tau_2'']C_{22}}) \Rightarrow [a \mapsto \tau_2'']\tau_1'' \notag \\
      &\quad\new{\rightsquigarrow{}\Lambda{}\overline{b_1}\overline{b_2}.\,\lambda{}(\overline{d_{21} \ty \upsilon_{C_{21}}}).\,t_1''~\overline{b_1}~\upsilon_2''~\overline{b_2}~\overline{d_{21}}~t_2''} \label{eq:dictapp-hty2'}
    \end{align}
    From induction we also have that:
    \begin{align}
      &\decelab{Q_1}{\Gamma}{e_1}{\forall\overline{a_1}a\overline{a_2}.\,(\overline{C_{11}}, \TC~a, \overline{C_{12}}) \Rightarrow \tau_1'}{t_1'} \notag \\
      &\text{The type of } e_1 \text{ is specified to be } \forall\overline{a_1}a\overline{a_2}.\,(\overline{C_{11}}, \TC~a, \overline{C_{12}}) \Rightarrow \tau_1' \notag \\
      &\text{The principal type of } e_1 \text{ is unambiguous} \label{eq:dictapp-hty-unambi11} \\
      &\decelab{Q_1}{\Gamma}{e_2}{\dict{\TC}~\tau_2'}{t_2'} \label{eq:dictapp-hty12} \\
      &\elab{\tau}{\tau_2'}{\upsilon_2'} \notag \\
      &\elab{C}{[a \mapsto \tau_2']\overline{C_{11}}}{\overline{\upsilon_{C_{11}}}} \notag \\
      &\forall C \ldotp \entails{\QQ \wedge \TC~a}{\tev \ty C} \Rightarrow ((\nentails{\QQ \wedge Q_1 \wedge \overline{C_{11}} \wedge \overline{C_{12}}}{C}) \vee (\entails{\QQ}{\tev \ty C})) \label{eq:dictapp-nentails1} \\
      &\overline{a_f} = \ftv{Q_1, \Gamma} \label{eq:dictapp-ftv1} \\
      &\Gamma_\upsilon^1 \vdash \overline{\tev[Q_1]} \ty [\overline{a_f \mapsto \tau_{a_f}}]Q_1 \label{eq:dictapp-tevQ1} \\
      &\theta_1 = [\overline{a_1 \mapsto \tau_{a_1}}, \overline{a_2 \mapsto \tau_{a_2}}, \overline{a_f \mapsto \tau_{a_f}}] \label{eq:dictapp-theta1} \\
      &\Gamma_\upsilon^1 \vdash \overline{\tev[C_{11}] \ty \theta_1{}([a \mapsto \tau_2']C_{11})} \wedge \overline{\tev[C_{12}] \ty \theta_1{}([a \mapsto \tau_2']C_{12})} \label{eq:dictapp-tevC1} \\
      &\eta_1 = [\overline{d_1 \mapsto \tev[Q_1]}] \text{ for each } (d_1 \ty C_1) \in Q_1 \label{eq:dictapp-eta1} \\
      &\decelab{Q_2}{\Gamma}{e_1}{\forall\overline{b_1}a\overline{b_2}.\,(\overline{C_{21}}, \TC~a, \overline{C_{22}}) \Rightarrow \tau_1''}{t_1''} \notag \\
      &\text{The type of } e_2 \text{ is specified to be } \forall\overline{b_1}a\overline{b_2}.\,(\overline{C_{21}}, \TC~a, \overline{C_{22}}) \Rightarrow \tau_1'' \notag \\
      &\text{The principal type of } e_1 \text{ is unambiguous} \label{eq:dictapp-hty-unambi21} \\
      &\decelab{Q_2}{\Gamma}{e_2}{\dict{\TC}~\tau_2''}{t_2''} \label{eq:dictapp-hty22} \\
      &\elab{\tau}{\tau_2''}{\upsilon_2''} \notag \\
      &\elab{C}{[a \mapsto \tau_2'']\overline{C_{21}}}{\overline{\upsilon_{C_{21}}}} \notag \\
      &\forall C \ldotp \entails{\QQ \wedge \TC~a}{\tev \ty C} \Rightarrow ((\nentails{\QQ \wedge Q_2 \wedge \overline{C_{21}} \wedge \overline{C_{22}}}{C}) \vee (\entails{\QQ}{\tev \ty C})) \label{eq:dictapp-nentails2} \\
      &\overline{b_f} = \ftv{Q_2, \Gamma} \label{eq:dictapp-ftv2} \\
      &\Gamma_\upsilon^2 \vdash \overline{\tev[Q_2]} \ty [\overline{b_f \mapsto \tau_{b_f}}]Q_2 \label{eq:dictapp-tevQ2} \\
      &\theta_2 = [\overline{b_1 \mapsto \tau_{b_1}}, \overline{b_2 \mapsto \tau_{b_2}}, \overline{b_f \mapsto \tau_{b_f}}] \label{eq:dictapp-theta2} \\
      &\Gamma_\upsilon^2 \vdash \overline{\tev[C_{21}] \ty \theta_2{}([a \mapsto \tau_2'']C_{21})} \wedge \overline{\tev[C_{22}] \ty \theta_2{}([a \mapsto \tau_2'']C_{22})} \label{eq:dictapp-tevC2} \\
      &\eta_2 = [\overline{d_2 \mapsto \tev[Q_2]}] \text{ for each } (d_2 \ty C_2) \in Q_2 \label{eq:dictapp-eta2} \\
      &\text{the principal type } \sigma \text{ is context-unambiguous where } \dec{\epsilon}{\Gamma}{e_1~\dictapp{e_2}{\TC~a}}{\sigma} \label{eq:dictapp-ambi} \\
      &\taueq{\sigma}{\theta_1{}([a \mapsto \tau_2']\tau_1')}{\theta_2{}([a \mapsto \tau_2'']\tau_1'')}  \label{eq:dictapp-tau-eq} \\
      &\evidenceCanonicity{}(\Gamma_\upsilon^1, \Gamma_\upsilon^2, \theta_1, \theta_2, (\overline{\tev[Q_1]} \wedge \overline{\tev[C_{11}]} \wedge \overline{\tev[C_{12}]}), (\overline{\tev[Q_2]} \wedge \overline{\tev[C_{21}]} \wedge \overline{\tev[C_{22}]}), \notag \\
      &\hspacev{}(Q_1 \wedge \overline{[a \mapsto \tau_2']C_{11}} \wedge \overline{[a \mapsto \tau_2']C_{12}}), \notag \\
      &\hspacev{}(Q_2 \wedge \overline{[a \mapsto \tau_2'']C_{21}} \wedge \overline{[a \mapsto \tau_2'']C_{22}}) \label{eq:dictapp-ev-canon}
    \end{align}

    Given \cref{eq:dictapp-tau-eq} and the fact that $a$ occurs in the
    constraints, it must be that:
    \begin{align}
      &\theta_1{}(\tau_2') = \theta_2{}(\tau_2'') \label{eq:dict-tau-2-eq}
    \end{align}
    We must prove that:
    \begin{align}
      &(\eta_1{}(\Lambda{}\overline{a_1}\overline{a_2}.\,\lambda{}(\overline{d_{11} \ty \upsilon_{C_{11}}}).\,t_1'~\overline{a_1}~\upsilon_2'~\overline{a_2}~\overline{d_{11}}~t_2')~\overline{\upsilon_{a_1}}~\overline{\upsilon_{a_2}}~\overline{\tev[C_{11}]}~\overline{\tev[C_{12}]}) \approx \notag \\
      &(\eta_2{}(\Lambda{}\overline{b_1}\overline{b_2}.\,\lambda{}(\overline{d_{21} \ty \upsilon_{C_{21}}}).\,t_1''~\overline{b_1}~\upsilon_2''~\overline{b_2}~\overline{d_{21}}~t_2'')~\overline{\upsilon_{b_1}}~\overline{\upsilon_{b_2}}~\overline{\tev[C_{21}]}~\overline{\tev[C_{22}]}) \label{eq:dictapp-goal}
    \end{align}
    Since $d_{11} \notin \dom{\eta_1}$ \cref{eq:dictapp-eta1} and
    $d_{21} \notin \dom{\eta_2}$ \cref{eq:dictapp-eta1}, we can rewrite this to:
    \begin{align}
      &((\Lambda{}\overline{a_1}\overline{a_2}.\,\lambda{}(\overline{d_{11} \ty \upsilon_{C_{11}}}).\,\eta_1{}(t_1')~\overline{a_1}~\upsilon_2'~\overline{a_2}~\overline{d_{11}}~\eta_1{}(t_2'))~\overline{\upsilon_{a_1}}~\overline{\upsilon_{a_2}}~\overline{\tev[C_{11}]}~\overline{\tev[C_{12}]}) \approx \notag \\
      &((\Lambda{}\overline{b_1}\overline{b_2}.\,\lambda{}(\overline{d_{21} \ty \upsilon_{C_{21}}}).\,\eta_2{}(t_1'')~\overline{b_1}~\upsilon_2''~\overline{b_2}~\overline{d_{21}}~\eta_2{}(t_2''))~\overline{\upsilon_{b_1}}~\overline{\upsilon_{b_2}}~\overline{\tev[C_{21}]}~\overline{\tev[C_{22}]}) \label{eq:dictapp-goal'}
    \end{align}
    We can rewrite this using the \textsc{EraseTyApp} and \textsc{EraseTyAbs} rules to:
    \begin{align}
      &((\lambda{}(\overline{d_{11} \ty \upsilon_{C_{11}}}).\,\eta_1{}(t_1')~\overline{d_{11}}~\eta_1{}(t_2'))~\overline{\tev[C_{11}]}~\overline{\tev[C_{12}]}) \approx \notag \\
      &((\lambda{}(\overline{d_{21} \ty \upsilon_{C_{21}}}).\,\eta_2{}(t_1'')~\overline{d_{21}}~\eta_2{}(t_2''))~\overline{\tev[C_{21}]}~\overline{\tev[C_{22}]}) \label{eq:dictapp-goal''}
    \end{align}
    Using the \betared{} rule we can further rewrite our goal to:
    \begin{align}
      \tred{(\eta_1{}(t_1')~\overline{\tev[C_{11}]}~\eta_1{}(t_2')~\overline{\tev[C_{12}]})}
      {(\eta_2{}(t_1'')~\overline{\tev[C_{21}]}~\eta_2{}(t_2'')~\overline{\tev[C_{22}]})} \label{eq:dictapp-goal'''}
    \end{align}
    We have the following induction hypothesis:
    \begin{align}
      &\forall~\overline{a_f'}~\overline{\tau_{a_f}'}~\overline{\tev[Q_1]'}~\theta_1'~\eta_1'~\sigma', \notag \\
      &\quad\overline{a_f'} = \ftv{Q_1, \Gamma} \rightarrow \notag \\
      &\quad\Gamma_\upsilon^1 \vdash \overline{\tev[Q_1]'} \ty [\overline{a_f' \mapsto \tau_{a_f}'}]Q_1 \rightarrow \notag \\
      &\quad\theta_1' = [\overline{a_f' \mapsto \tau_{a_f}'}] \rightarrow \notag \\
      &\quad\eta_1' = [\overline{d_1 \mapsto \tev[Q_1]'}] \text{ for each } (d_1 \ty C_1) \in Q_1 \rightarrow \notag \\
      &\quad\forall~Q_2~\overline{b}~\overline{C_2}~\tau_2~t_2, \notag \\
      &\quad\quad\decelab{Q_2}{\Gamma}{e_2}{\forall\,\overline{b}.\,\overline{C_2} \Rightarrow \tau_2}{t_2} \rightarrow \notag \\
      &\quad\quad\forall~\overline{b_f'}~\overline{\tau_{b_f}'}~\overline{\tev[Q_2]'}~\overline{\tev[C_2]'}~\overline{\tau_b'}~\theta_2'~\eta_2', \notag \\
      &\quad\quad\quad\overline{b_f'} = \ftv{Q_2, \Gamma} \rightarrow \notag \\
      &\quad\quad\quad\Gamma_\upsilon^2 \vdash \overline{\tev[Q_2]'} \ty [\overline{b_f' \mapsto \tau_{b_f}'}]Q_2 \rightarrow \notag \\
      &\quad\quad\quad\theta_2' = [\overline{b \mapsto \tau_b'}, \overline{b_f' \mapsto \tau_{b_f}'}] \rightarrow \notag \\
      &\quad\quad\quad\Gamma_\upsilon^2 \vdash \overline{\tev[C_2]' \ty \theta_2'{}(C_2)} \rightarrow \notag \\
      &\quad\quad\quad\eta_2' = [\overline{d_2 \mapsto \tev[Q_2]'}] \text{ for each } (d_2 \ty C_2) \in Q_2 \rightarrow \notag \\
      &\quad\quad\quad\text{the principal type } \sigma' \text{ is context-unambiguous where } \dec{\epsilon}{\Gamma}{e_2}{\sigma'} \rightarrow \notag \\
      &\quad\quad\quad\taueq{\sigma'}{\theta_1'{}(\dict{\TC}~\tau_2')}{\theta_2{}(\tau_2)} \rightarrow \notag \\
      &\quad\quad\quad\evidenceCanonicity{}(\Gamma_\upsilon^1, \Gamma_\upsilon^2, \theta_1', \theta_2', \overline{\tev[Q_1]'}, (\overline{\tev[Q_2]'} \wedge \overline{\tev[C_2]'}), Q_1, Q_2 \wedge \overline{C_2}) \rightarrow \notag \\
      &\quad\quad\quad\tred{(\eta_1{}'(t_2'))}
        {(\eta_2'{}(t_2)~\overline{\upsilon_b'}~\overline{\tev[C_2]'})} \label{eq:dictapp-ih2}
    \end{align}

    If we instantiate \cref{eq:dictapp-ih2} with the following variables:
    $\overline{a_f'} = \overline{a_f}$,
    $\overline{\tau_{a_f}'} = \overline{\tau_{a_f}}$,
    $\overline{\tev[Q_1]'} = \overline{\tev[Q_1]'}$, $\theta_1'$ with its
    right-hand side, $\eta_1'$ with $\eta_1$, $Q_2 = Q_2$, $\overline{b}$
    empty, $\overline{C_2}$ empty, $\tau_2 = \dict{\TC}~\tau_2''$,
    $t_2 = t_2''$, $\overline{b_f'} = \overline{b_f}$,
    $\overline{\tau_{b_f}'} = \overline{\tau_{b_f}}$,
    $\overline{\tev[Q_2]'} = \overline{\tev[Q_2]}$, $\overline{\tev[C_2]'}$
    empty, $\overline{\tau_b'}$ empty, $\theta_2'$ with its right-hand side,
    $\eta_2'$ with $\eta_2$, and $\sigma'$ with the principal type $\sigma'$
    inferred by the algorithm (see \cref{thm:principal-types}), we
    must prove the following:
    \begin{align}
      &\overline{a_f} = \ftv{Q_1, \Gamma} \label{eq:dictapp-ih2-ftv1} \\
      &\Gamma_\upsilon^1 \vdash \overline{\tev[Q_1]} \ty [\overline{a_f \mapsto \tau_{a_f}}]Q_1 \label{eq:dictapp-ih2-tevQ1} \\
      &\decelab{Q_2}{\Gamma}{e_2}{\dict{\TC}~\tau_2''}{t_2''} \label{eq:dictapp-ih2-hty2} \\
      &\overline{b_f} = \ftv{Q_2, \Gamma} \label{eq:dictapp-ih2-ftv2} \\
      &\Gamma_\upsilon^2 \vdash \overline{\tev[Q_2]} \ty [\overline{b_f \mapsto \tau_{b_f}}]Q_2 \label{eq:dictapp-ih2-tevQ2} \\
      &\text{the principal type } \sigma' \text{ is context-unambiguous where } \dec{\epsilon}{\Gamma}{e_2}{\sigma'} \label{eq:dictapp-ih2-ambi} \\
      &\taueq{\sigma'}{[\overline{a_f \mapsto \tau_{a_f}}](\dict{\TC}~\tau_2')}{[\overline{b_f \mapsto \tau_{b_f}}](\dict{\TC}~\tau_2'')} \label{eq:dictapp-ih2-tau-eq} \\
      &\evidenceCanonicity{}(\Gamma_\upsilon^1, \Gamma_\upsilon^2, [\overline{a_f \mapsto \tau_{a_f}}], [\overline{b_f \mapsto \tau_{b_f}}], \overline{\tev[Q_1]}, \overline{\tev[Q_2]}, Q_1, Q_2) \label{eq:dictapp-ih2-ev-canon}
    \end{align}
    From the induction hypothesis we obtain that:
    \begin{align}
      &\tred{(\eta_1{}(t_2'))}{(\eta_2{}(t_2''))} \label{eq:dictapp-ih2-conclusion}
    \end{align}
    \cref{eq:dictapp-ih2-ftv1} follows from \cref{eq:dictapp-ftv1},
    \cref{eq:dictapp-ih2-tevQ1} follows from \cref{eq:dictapp-tevQ1},
    \cref{eq:dictapp-ih2-hty2} from \cref{eq:dictapp-hty22},
    \cref{eq:dictapp-ih2-ftv2} from \cref{eq:dictapp-ftv2},
    \cref{eq:dictapp-ih2-tevQ2} from \cref{eq:dictapp-tevQ2},
    \cref{eq:dictapp-ih2-ambi} from \cref{lemma:ambi-dictapp2} and
    \cref{eq:dictapp-ambi}, and \cref{eq:dictapp-ih2-ev-canon} from
    \cref{eq:dictapp-ev-canon}, as the constraints and the subsequent subgroups are a subset of those in \cref{eq:dictapp-ev-canon}.

    For \cref{eq:dictapp-ih2-tau-eq} we must prove that:
    \begin{align*}
      &\taueq{\sigma'}{[\overline{a_f \mapsto \tau_{a_f}}](\dict{\TC}~\tau_2')}{[\overline{b_f \mapsto \tau_{b_f}}](\dict{\TC}~\tau_2'')}
    \end{align*}
    We can rewrite this to the following using \cref{eq:dictapp-theta1} and
    \cref{eq:dictapp-theta2}:
    \begin{align*}
      &\taueq{\sigma'}{\theta_1(\dict{\TC}~\tau_2')}{\theta_2(\dict{\TC}~\tau_2'')}
    \end{align*}
    We can simplify this to:
    \begin{align*}
      &\taueq{\sigma'}{\theta_1(\tau_2')}{\theta_2(\tau_2'')}
    \end{align*}
    Which follows from \cref{eq:dict-tau-2-eq}.

    We also have the following induction hypothesis:
    \begin{align}
      &\forall~\overline{a_f'}~\overline{\tau_{a_f}'}~\overline{\tev[Q_1]'}~\overline{\tau'_{a_1}}~\tau'_a~\overline{\tau'_{a_2}}~\theta_1'~\overline{\tev[C_{11}]'}~\tev[\TC_{1}]'~\overline{\tev[C_{12}]'}~\eta_1'~\sigma', \notag \\
      &\quad\overline{a_f'} = \ftv{Q_1, \Gamma} \rightarrow \notag \\
      &\quad\Gamma_\upsilon^1 \vdash \overline{\tev[Q_1]'} \ty [\overline{a_f' \mapsto \tau_{a_f}'}]Q_1 \rightarrow \notag \\
      &\quad\theta_1' = [\overline{a_1 \mapsto \tau'_{a_1}}, a \mapsto \tau'_a, \overline{a_2 \mapsto \tau'_{a_2}}, \overline{a_f' \mapsto \tau_{a_f}'}] \rightarrow \notag \\
      &\quad\Gamma_\upsilon^1 \vdash \overline{\tev[C_{11}]' \ty \theta_1'{}(C_{11})} \wedge \tev[\TC_{1}]' \ty \theta_1'{}(\TC~a) \wedge \overline{\tev[C_{12}]' \ty \theta_1'{}(C_{12})} \rightarrow \notag \\
      &\quad\eta_1' = [\overline{d_1 \mapsto \tev[Q_1]'}] \text{ for each } (d_1 \ty C_1) \in Q_1 \rightarrow \notag \\
      &\quad\forall~Q_2~\overline{b}~\overline{C_2}~\tau_2~t_2, \notag \\
      &\quad\quad\decelab{Q_2}{\Gamma}{e_1}{\forall\,\overline{b}.\,\overline{C_2} \Rightarrow \tau_2}{t_2} \rightarrow \notag \\
      &\quad\quad\forall~\overline{b_f'}~\overline{\tau_{b_f}'}~\overline{\tev[Q_2]'}~\overline{\tev[C_2]'}~\overline{\tau_b'}~\theta_2'~\eta_2', \notag \\
      &\quad\quad\quad\overline{b_f'} = \ftv{Q_2, \Gamma} \rightarrow \notag \\
      &\quad\quad\quad\Gamma_\upsilon^2 \vdash \overline{\tev[Q_2]'} \ty [\overline{b_f' \mapsto \tau_{b_f}'}]Q_2 \rightarrow \notag \\
      &\quad\quad\quad\theta_2' = [\overline{b \mapsto \tau_b'}, \overline{b_f' \mapsto \tau_{b_f}'}] \rightarrow \notag \\
      &\quad\quad\quad\Gamma_\upsilon^2 \vdash \overline{\tev[C_2]' \ty \theta_2'{}(C_2)} \rightarrow \notag \\
      &\quad\quad\quad\eta_2' = [\overline{d_2 \mapsto \tev[Q_2]'}] \text{ for each } (d_2 \ty C_2) \in Q_2 \rightarrow \notag \\
      &\quad\quad\quad\text{the principal type } \sigma' \text{ is context-unambiguous where } \dec{\epsilon}{\Gamma}{e_1}{\sigma'} \rightarrow \notag  \\
      &\quad\quad\quad\taueq{\sigma'}{\theta_1'{}(\tau_1')}{\theta_2{}(\tau_2)} \rightarrow \notag \\
      &\quad\quad\quad\evidenceCanonicity{}(\Gamma_\upsilon^1, \Gamma_\upsilon^2, \theta_1', \theta_2', (\overline{\tev[Q_1]'} \wedge \overline{\tev[C_{11}]'} \wedge \tev[\TC_{1}]' \wedge \overline{\tev[C_{12}]'}), (\overline{\tev[Q_2]'} \wedge \overline{\tev[C_2]'}), \notag \\
      &\quad\quad\quad\hspacev{}Q_1, Q_2, \overline{C_2}) \rightarrow \notag \\ %
      &\quad\quad\quad\tred{(\eta_1{}'(t_2')~\overline{\upsilon_{a_1}}~\upsilon_{a}'~\overline{\upsilon_{a_2}}~\overline{\tev[C_{11}]'}~\tev[\TC_{1}]'~\overline{\tev[C_{12}]'})}
        {(\eta_2'{}(t_2)~\overline{\upsilon_b'}~\overline{\tev[C_2]'})} \label{eq:dictapp-ih1}
    \end{align}

    If we instantiate \cref{eq:dictapp-ih1} with the following variables:
    $\overline{a_f'} = \overline{a_f}$,
    $\overline{\tau_{a_f}'} = \overline{\tau_{a_f}}$,
    $\overline{\tev[Q_1]'} = \overline{\tev[Q_1]}$,
    $\overline{\tau_{a_1}'} = \overline{\tau_{a_1}}$, $\tau_{a}' = \tau_2'$,
    $\overline{\tau_{a_2}'} = \overline{\tau_{a_2}}$,
    $\theta_1' = [a \mapsto \tau_2', \theta_1]$,
    $\overline{\tev[C_{11}]'} = \overline{\tev[C_{11}]}$,
    $\tev[\TC_1]' = \eta_1{}(t_2')$,
    $\overline{\tev[C_{12}]'} = \overline{\tev[C_{12}]}$, $\eta_1'$ with
    $\eta_1$, $Q_2 = Q_2$,
    $\overline{b} = (\overline{b_1}, a, \overline{b_2})$,
    $\overline{C_2} = (\overline{C_{21}}, \TC~a, \overline{C_{22}})$,
    $\tau_2 = \tau_1''$, $t_2 = t_1''$, $\overline{b_f'} = \overline{b_f}$,
    $\overline{\tau_{b_f}'} = \overline{\tau_{b_f}}$,
    $\overline{\tev[Q_2]'} = \overline{\tev[Q_2]}$,
    $\overline{\tev[C_2]'} = (\overline{\tev[C_{21}]} \wedge \eta_2{}(t_2'')
    \wedge \overline{\tev[C_{22}]})$,
    $\overline{\tau_b'} = (\overline{\tau_{b_1}}, \tau_2'',
    \overline{\tau_{b_2}})$, $\theta_2' = [a \mapsto \tau_2'', \theta_2]$,
    $\eta_2'$ with $\eta_2$, and $\sigma'$ with the principal type $\sigma'$
    inferred by the algorithm (see \cref{thm:principal-types}), we
    must prove the following:
    \begin{align}
      &\overline{a_f} = \ftv{Q_1, \Gamma} \label{eq:dictapp-ih1-ftv1} \\
      &\Gamma_\upsilon^1 \vdash \overline{\tev[Q_1]} \ty [\overline{a_f \mapsto \tau_{a_f}}]Q_1 \label{eq:dictapp-ih1-tevQ1} \\
      &[a \mapsto \tau_2', \theta_1] = [\overline{a_1 \mapsto \tau_{a_1}}, a \mapsto \tau_2', \overline{a_2 \mapsto \tau_{a_2}}, \overline{a_f \mapsto \tau_{a_f}}] \label{eq:dictapp-ih1-theta1} \\
      &\Gamma_\upsilon^1 \vdash \overline{\tev[C_{11}] \ty [a \mapsto \tau_2', \theta_1]{}(C_{11})} \wedge \eta_1{}(t_2') \ty [a \mapsto \tau_2', \theta_1](\TC~a) \wedge \notag \\
      &\quad\quad\overline{\tev[C_{12}] \ty [a \mapsto \tau_2', \theta_1](C_{12})} \label{eq:dictapp-ih1-tevC1} \\
      &\decelab{Q_2}{\Gamma}{e_1}{\forall\,\overline{b_1}a\overline{b_2}.\,(\overline{C_{21}}, \TC~a, \overline{C_{22}}) \Rightarrow \tau_1''}{t_1''} \label{eq:dictapp-ih1-hty2} \\
      &\overline{b_f} = \ftv{Q_2, \Gamma} \label{eq:dictapp-ih1-ftv2} \\
      &\Gamma_\upsilon^2 \vdash \overline{\tev[Q_2]} \ty [\overline{b_f \mapsto \tau_{b_f}}]Q_2 \label{eq:dictapp-ih1-tevQ2} \\
      &[a \mapsto \tau_2'', \theta_2] = [\overline{b_1 \mapsto \tau_{b_1}}, a \mapsto \tau_2'', \overline{b_2 \mapsto \tau_{b_2}}, \overline{b_f \mapsto \tau_{b_f}}] \label{eq:dictapp-ih1-theta2} \\
      &\Gamma_\upsilon^2 \vdash \overline{\tev[C_{21}] \ty [a \mapsto \tau_2'', \theta_2]{}(C_{21})} \wedge \eta_2{}(t_2'') \ty [a \mapsto \tau_2'', \theta_2](\TC~a) \wedge \notag \\
      &\quad\quad\overline{\tev[C_{22}] \ty [a \mapsto \tau_2'', \theta_2](C_{22})} \label{eq:dictapp-ih1-tevC2} \\
      &\text{the principal type } \sigma' \text{ is context-unambiguous where } \dec{\epsilon}{\Gamma}{e_1}{\sigma'} \label{eq:dictapp-ih1-ambi} \\
      &\taueq{\sigma'}{[a \mapsto \tau_2', \theta_1](\tau_1')}{[a \mapsto \tau_2'', \theta_2](\tau_1'')} \label{eq:dictapp-ih1-tau-eq} \\
      &\evidenceCanonicity{}(\Gamma_\upsilon^1, \Gamma_\upsilon^2, [a \mapsto \tau_2', \theta_1], [a \mapsto \tau_2''{,} \theta_2], \notag \\
      &\hspacev{}(\overline{\tev[Q_1]} \wedge \overline{\tev[C_{11}]} \wedge \eta_1{}(t_2') \wedge \overline{\tev[C_{12}]}), \notag \\
      &\hspacev{}(\overline{\tev[Q_2]} \wedge \overline{\tev[C_{21}]} \wedge \eta_2{}(t_2'') \wedge \overline{\tev[C_{22}]}), \notag \\
      &\hspacev{}(Q_1 \wedge \overline{C_{11}} \wedge \TC~a \wedge \overline{C_{12}}), \notag \\
      &\hspacev{}(Q_2 \wedge \overline{C_{21}} \wedge \TC~a \wedge \overline{C_{22}}) \label{eq:dictapp-ih1-ev-canon}
    \end{align}
    From the induction hypothesis we obtain that:
    \begin{align}
      &\tred{(\eta_1{}(t_1')~\overline{\upsilon_{a_1}}~\upsilon_2'~\overline{\upsilon_{a_2}}~\overline{\tev[C_{11}]}~\eta_1{}(t_2')~\overline{\tev[C_{12}]})}{(\eta_2{}(t_1'')~\overline{\upsilon_{b_1}}~\upsilon_2''~\overline{\upsilon_{b_2}}~\overline{\tev[C_{21}]}~\eta_2{}(t_2'')~\overline{\tev[C_{22}]})} \label{eq:dictapp-ih1-conclusion}
    \end{align}
    \cref{eq:dictapp-ih1-ftv1} follows from \cref{eq:dictapp-ftv1},
    \cref{eq:dictapp-ih1-tevQ1} from \cref{eq:dictapp-tevQ1},
    \cref{eq:dictapp-ih1-theta1} from \cref{eq:dictapp-theta1},
    \cref{eq:dictapp-ih1-hty2} from \cref{eq:dictapp-hty22},
    \cref{eq:dictapp-ih1-ftv2} from \cref{eq:dictapp-ftv2},
    \cref{eq:dictapp-ih1-tevQ2} from \cref{eq:dictapp-tevQ2},
    \cref{eq:dictapp-ih1-theta2} from \cref{eq:dictapp-theta2},
    \cref{eq:dictapp-ih1-ambi} from \cref{eq:dictapp-hty-unambi11},
    \cref{eq:dictapp-ih1-tau-eq} from \cref{eq:dictapp-tau-eq}.

    \cref{eq:dictapp-ih1-tevC1} follows from \cref{eq:dictapp-tevC1} and
    \cref{eq:dictapp-hty12}, the presence of $\eta_1$ makes sure $t_2'$ no
    longer refers to dictionary variables from $Q_1$ so that it is well-typed
    in $\Gamma_{\upsilon}^1$. %
    Similarly, \cref{eq:dictapp-ih1-tevC2} follows from \cref{eq:dictapp-tevC2} and \cref{eq:dictapp-hty22}.

    To prove \cref{eq:dictapp-ih1-ev-canon}, observe that compared to
    \cref{eq:dictapp-ev-canon}, $[a \mapsto \tau_2']$ and
    $[a \mapsto \tau_2'']$ have simply moved places. Furthermore, the $\TC~a$
    constraint has been added with as evidence $\eta_1{}(t_2')$ and
    $\eta_2{}(t_2'')$.
    To account for this change, we introduce a new group of constraints ($i = k + 1$) with $\TC~a$ as the principal and only constraint.
    The required condition about the principal constraint then follows from \cref{lem:coherence-asmpt}.
    The required condition about all members of the group follows trivially.

    By using \textsc{EraseTyApp}, we can rewrite
    \cref{eq:dictapp-ih1-conclusion} to our goal \cref{eq:dictapp-goal'''}:
    \begin{align*}
      &\tred{(\eta_1{}(t_1')~\overline{\tev[C_{11}]}~\eta_1{}(t_2')~\overline{\tev[C_{12}]})}
        {(\eta_2{}(t_1'')~\overline{\tev[C_{21}]}~\eta_2{}(t_2'')~\overline{\tev[C_{22}]})}
    \end{align*}



  \item[\pcase{var-var}.] The \textsc{Var} rule was last used in both derivations.
    We have that:
    \begin{align}
      &\decelab{Q_1}{\Gamma}{x}{\forall\,\overline{a}.\,\overline{C_1} \Rightarrow \tau_1}{x} \notag \\
      &(x\ty{}\forall\,\overline{a}.\,\overline{C_1} \Rightarrow \tau_1) \in \Gamma \label{eq:var-in-context} \\
      &\overline{a_f} = \ftv{Q_1, \Gamma} \notag \\
      &\Gamma_\upsilon^1 \vdash \overline{\tev[Q_1]} \ty [\overline{a_f \mapsto \tau_{a_f}}]Q_1 \notag \\
      &\theta_1 = [\overline{a \mapsto \tau_a}, \overline{a_f \mapsto \tau_{a_f}}] \notag \\
      &\Gamma_\upsilon^1 \vdash \overline{\tev[C_1] \ty \theta_1{}(C_1)} \notag \\
      &\eta_1 = [\overline{d_1 \mapsto \tev[Q_1]}] \text{ for each } (d_1 \ty C_1) \in Q_1 \notag \\
      &\decelab{Q_2}{\Gamma}{x}{\forall\,\overline{b}.\,\overline{C_2} \Rightarrow \tau_2}{x} \notag \\
      &(x\ty{}\forall\,\overline{b}.\,\overline{C_2} \Rightarrow \tau_2) \in \Gamma \label{eq:var-var-in-context} \\
      &\overline{b_f} = \ftv{Q_2, \Gamma} \notag \\
      &\Gamma_\upsilon^2 \vdash \overline{\tev[Q_2]} \ty [\overline{b_f \mapsto \tau_{b_f}}]Q_2 \notag \\
      &\theta_2 = [\overline{b \mapsto \tau_b}, \overline{b_f \mapsto \tau_{b_f}}] \notag \\
      &\Gamma_\upsilon^2 \vdash \overline{\tev[C_2] \ty \theta_2{}(C_2)} \notag \\
      &\eta_2 = [\overline{d_2 \mapsto \tev[Q_2]}] \text{ for each } (d_2 \ty C_2) \in Q_2 \notag \\
      &\text{the principal type } \sigma \text{ is context-unambiguous where } \dec{\epsilon}{\Gamma}{x}{\sigma} \label{eq:var-var-ambi} \\
      &\taueq{\sigma}{\theta_1{}(\tau_1)}{\theta_2{}(\tau_2)} \label{eq:var-var-tau-eq} \\
      &\evidenceCanonicity{}(\Gamma_\upsilon^1, \Gamma_\upsilon^2, \theta_1, \theta_2, (\overline{\tev[Q_1]} \wedge \overline{\tev[C_1]}), (\overline{\tev[Q_2]} \wedge \overline{\tev[C_2]}), Q_1 \wedge \overline{C_1}, Q_2 \wedge \overline{C_2}) \label{eq:var-var-ev-canon}
    \end{align}
    And must prove that:
    \begin{align}
      &\tred{(\eta_1{}(x)~\overline{\upsilon_a}~\overline{\tev[C_1]})}{(\eta_2{}(x)~\overline{\upsilon_b}~\overline{\tev[C_2]})} \notag
    \end{align}
    Since $x \notin \dom{\eta_1}$ and $x \notin \dom{\eta_2}$, we know that:
    \begin{align}
      &\eta_1{}(x) = x \label{eq:var-domain-eta1} \\
      &\eta_2{}(x) = x \label{eq:var-var-domain-eta2}
    \end{align}
    Thus we can rewrite our goal using \cref{eq:var-domain-eta1} and \cref{eq:var-var-domain-eta2} to:
    \begin{align}
      &\tred{(x~\overline{\upsilon_a}~\overline{\tev[C_1]})}
        {(x~\overline{\upsilon_b}~\overline{\tev[C_2]})} \label{eq:var-var-goal}
      \end{align}

      As the context $\Gamma$ does not contain duplicate elements (we follow
      the Barendregt convention), we can deduce from \cref{eq:var-in-context}
      and \cref{eq:var-var-in-context} that
      $\forall\,\overline{a}.\,\overline{C_1} \Rightarrow \tau_1 =
      \forall\,\overline{b}.\,\overline{C_2} \Rightarrow \tau_2$ and thus
      that:
      \begin{align}
        &\overline{a} = \overline{b} \label{eq:var-var-a-b-eq} \\
        &\overline{C_1} = \overline{C_2} \label{eq:var-var-C1-C2-eq} \\
        &\tau_1 = \tau_2 \label{eq:var-var-tau1-tau2-eq}
      \end{align}
      We distinguish two cases based on the size of $\overline{C_1}$ and
      $\overline{C_2}$ \cref{eq:var-var-C1-C2-eq}:
      \begin{enumerate}
      \item Either $\overline{C_1} = \overline{C_2} = \epsilon$.
        Consequently:
        \begin{align*}
          &\overline{\tev[C_1]} \text{ is empty} \\
          &\overline{\tev[C_2]} \text{ is empty}
        \end{align*}
        Using this, we can rewrite our goal to:
        \begin{align}
          &\tred{(x~\overline{\upsilon_a})}{(x~\overline{\upsilon_b})} \label{eq:var-var-nil-goal}
        \end{align}
        To prove \cref{eq:var-var-nil-goal}, we start out by applying
        \textsc{Var} with $x$:
        \begin{align*}
          &\tred{x}{x}
        \end{align*}
        After which we can repeatedly apply \textsc{EraseTyApp} to add the
        type applications of $\overline{\upsilon_a}$. By using
        \textsc{Symmetry} we can do the same for $\overline{\upsilon_b}$ to
        obtain our goal:
        \begin{align*}
          &\tred{(x~\overline{\upsilon_a})}{(x~\overline{\upsilon_b})}
        \end{align*}

      \item Or $\overline{C_1}$ and $\overline{C_2}$ are not empty.

        From \cref{lemma:var-type-is-principal}, \cref{eq:var-var-a-b-eq}, and
        \cref{eq:var-var-C1-C2-eq} we know that
        $\overline{C_1} = \overline{C_2}$ are the constraints from a principal
        type of $x$. This, combined with \cref{eq:var-var-tau-eq} and
        \cref{eq:var-var-tau1-tau2-eq}, gives us that:
        \begin{align}
          &\theta_1{}(\overline{C_1}) = \theta_2{}(\overline{C_2}) \label{eq:var-var-cons-C1-C2-subst-eq}
        \end{align}
        If we combine this with \cref{eq:var-var-C1-C2-eq} and
        \cref{eq:var-var-ev-canon}, and Lemma~\ref{lem:canonical-ev-equiv} we get that for each $\tev[C_1]$  and  $\tev[C_2]$,
        \begin{equation}
          \tred{\tev[C_1]}{\tev[C_2]}
          \label{eq:var-var-cons-tred-tev1-tev2}
        \end{equation}

        To prove \cref{eq:var-var-goal}, we start out by applying
        \textsc{Var} with $x$:
        \begin{align*}
          &\tred{x}{x}
        \end{align*}
        After which we can repeatedly apply \textsc{EraseTyApp} to add the
        type applications of $\overline{\upsilon_a}$. By using
        \textsc{Symmetry} we can do the same for $\overline{\upsilon_b}$:
        \begin{align*}
          &\tred{(x~\overline{\upsilon_a})}{(x~\overline{\upsilon_b})}
        \end{align*}
        Finally, we repeatedly apply \textsc{App} using
        \cref{eq:var-var-cons-tred-tev1-tev2} to add the evidence applications of
        $\overline{\tev[C_1]}$. By using \textsc{Symmetry} we can do the same for
        $\overline{\tev[C_2]}$ to obtain our goal \cref{eq:var-var-goal}:
        \begin{align*}
          &\tred{(x~\overline{\upsilon_a}~\overline{\tev[C_1]})}{(x~\overline{\upsilon_b}~\overline{\tev[C_2]})}
        \end{align*}
      \end{enumerate}



  \item[\pcase{abs-abs}.] The \textsc{Abs} rule was last used in both derivations.
    We have that:
    \begin{align}
      &\decelab{Q_1}{\Gamma}{\lambda{}x.\,e'}{\tau_1' \rightarrow \tau_2'}{\lambda{}(x\ty{}\upsilon_1).\,t_1'} \notag \\
      &\decelab{Q_1}{(x\ty{}\tau_1'), \Gamma}{e'}{\tau_2'}{t_1'} \label{eq:abs-hty1} \\
      &\elab{\tau}{\tau_1'}{\upsilon_1} \notag \\
      &\overline{a_f} = \ftv{Q_1, \Gamma} \label{eq:abs-ftv1} \\
      &\Gamma_\upsilon^1 \vdash \overline{\tev[Q_1]} \ty [\overline{a_f \mapsto \tau_{a_f}}]Q_1 \label{eq:abs-tevQ1} \\
      &\theta_1 = [\overline{a \mapsto \tau_a}, \overline{a_f \mapsto \tau_{a_f}}] \label{eq:abs-theta1} \\
      &\Gamma_\upsilon^1 \vdash \overline{\tev[C_1] \ty \theta_1{}(C_1)} \label{eq:abs-tevC1} \\
      &\eta_1 = [\overline{d_1 \mapsto \tev[Q_1]}] \text{ for each } (d_1 \ty C_1) \in Q_1 \label{eq:abs-eta1} \\
      &\notag \\
      &\decelab{Q_2}{\Gamma}{\lambda{}x.\,e'}{\tau_1'' \rightarrow \tau_2''}{\lambda{}(x\ty{}\upsilon_2).\,t_2'} \notag \\
      &\decelab{Q_2}{(x\ty{}\tau_1''), \Gamma}{e'}{\tau_2''}{t_2'} \label{eq:abs-hty2} \\
      &\elab{\tau}{\tau_1''}{\upsilon_2} \notag \\
      &\overline{b_f} = \ftv{Q_2, \Gamma} \label{eq:abs-ftv2} \\
      &\Gamma_\upsilon^2 \vdash \overline{\tev[Q_2]} \ty [\overline{b_f \mapsto \tau_{b_f}}]Q_2 \label{eq:abs-tevQ2} \\
      &\theta_2 = [\overline{b \mapsto \tau_b}, \overline{b_f \mapsto \tau_{b_f}}] \label{eq:abs-theta2} \\
      &\Gamma_\upsilon^2 \vdash \overline{\tev[C_2] \ty \theta_2{}(C_2)} \label{eq:abs-tevC2} \\
      &\eta_2 = [\overline{d_2 \mapsto \tev[Q_2]}] \text{ for each } (d_2 \ty C_2) \in Q_2 \label{eq:abs-eta2} \\
      &\text{the principal type } \sigma \text{ is context-unambiguous where } \dec{\epsilon}{\Gamma}{\lambda{}x.\,e'}{\sigma} \label{eq:abs-ambi} \\
      &\taueq{\sigma}{\theta_1{}(\tau_1' \rightarrow \tau_2')}{\theta_2{}(\tau_1'' \rightarrow \tau_2'')} \label{eq:abs-tau-eq} \\
      &\evidenceCanonicity{}(\Gamma_\upsilon^1, \Gamma_\upsilon^2, \theta_1, \theta_2, (\overline{\tev[Q_1]} \wedge \overline{\tev[C_1]}), (\overline{\tev[Q_2]} \wedge \overline{\tev[C_2]}), Q_1 \wedge \overline{C_1}, Q_2 \wedge \overline{C_2}) \label{eq:abs-ev-canon}
    \end{align}

    As
    $\forall\,\overline{a}.\,\overline{C_1} \Rightarrow \tau_1 = \tau_1'
    \rightarrow \tau_2'$ and
    $\forall\,\overline{b}.\,\overline{C_2} \Rightarrow \tau_2 = \tau_1''
    \rightarrow \tau_2''$ because of the \textsc{Abs} rule, we have that:
    \begin{align}
      &\overline{a} \text{ is empty} \label{eq:abs-empty-a} \\
      &\overline{b} \text{ is empty} \label{eq:abs-empty-b} \\
      &\overline{C_1} \text{ is empty} \label{eq:abs-empty-C1} \\
      &\overline{C_2} \text{ is empty} \label{eq:abs-empty-C2} \\
      &\tau_1 = \tau_1' \rightarrow \tau_2' \label{eq:abs-tau-1-type} \\
      &\tau_2 = \tau_1'' \rightarrow \tau_2'' \label{eq:abs-tau-2-type}
    \end{align}
    From \cref{eq:abs-empty-a} and \cref{eq:abs-empty-b} follows that:
    \begin{align}
      &\overline{\tau_a} \text{ is empty} \label{eq:abs-empty-tau-a} \\
      &\overline{\tau_b} \text{ is empty} \label{eq:abs-empty-tau-b}
    \end{align}
    This means we can rewrite \cref{eq:abs-theta1} and \cref{eq:abs-theta2}
    to:
    \begin{align}
      &\theta_1 = [\overline{a_f \mapsto \tau_{a_f}}] \label{eq:abs-theta1'} \\
      &\theta_2 = [\overline{b_f \mapsto \tau_{b_f}}] \label{eq:abs-theta2'}
    \end{align}
    And from \cref{eq:abs-empty-C1} and \cref{eq:abs-empty-C2} follows that:
    \begin{align}
      &\overline{\tev[C_1]} \text{ is empty} \label{eq:abs-empty-tevs1} \\
      &\overline{\tev[C_2]} \text{ is empty} \label{eq:abs-empty-tevs2}
    \end{align}
    This means we can omit \cref{eq:abs-tevC1} and \cref{eq:abs-tevC2}.

    We can also rewrite \cref{eq:abs-ev-canon} to:
    \begin{align}
      &\evidenceCanonicity{}(\Gamma_\upsilon^1, \Gamma_\upsilon^2, \theta_1, \theta_2, \overline{\tev[Q_1]}, \overline{\tev[Q_2]}, Q_1, Q_2) \label{eq:abs-ev-canon'}
    \end{align}

    Because of the Barendregt convention, \cref{eq:abs-hty1}, and
    \cref{eq:abs-hty2}, we know:
    \begin{align}
      &\tau_1' = \tau_1'' \label{eq:abs-tau1}
    \end{align}

    We must now prove that:
    \begin{align*}
      &\tred{\eta_1{}(\lambda{}(x\ty{}\upsilon_1).\,t_1')}
        {\eta_2{}(\lambda{}(x\ty{}\upsilon_2).\,t_2')}
    \end{align*}
    We can rewrite this to:
    \begin{align}
      &\tred{(\lambda{}(x\ty{}\upsilon_1).\,\eta_1{}(t_1'))}
        {(\lambda{}(x\ty{}\upsilon_2).\,\eta_2{}(t_2'))} \label{eq:abs-goal}
    \end{align}

    Our induction hypothesis, is as follows:

    \begin{align}
      &\forall~\overline{a_f'}~\overline{\tau_{a_f}'}~\overline{\tev[Q_1]'}~\theta_1'~\eta_1'~\sigma', \notag \\
      &\quad\overline{a_f'} = \ftv{Q_1, (x\ty{}\tau_1'), \Gamma} \rightarrow \notag \\
      &\quad\Gamma_\upsilon^1 \vdash \overline{\tev[Q_1]'} \ty [\overline{a_f' \mapsto \tau_{a_f}'}]Q_1 \rightarrow \notag \\
      &\quad\theta_1' = [\overline{a_f' \mapsto \tau_{a_f}'}] \rightarrow \notag \\
      &\quad\eta_1' = [\overline{d_1 \mapsto \tev[Q_1]'}] \text{ for each } (d_1 \ty C_1) \in Q_1 \rightarrow \notag \\
      &\quad\forall~Q_2~\overline{b}~\overline{C_2}~\tau_2~t_2, \notag \\
      &\quad\quad\decelab{Q_2}{(x\ty{}\tau_1'), \Gamma}{e'}{\forall\,\overline{b}.\,\overline{C_2} \Rightarrow \tau_2}{t_2} \rightarrow \notag \\
      &\quad\quad\forall~\overline{b_f'}~\overline{\tau_{b_f}'}~\overline{\tev[Q_2]'}~\overline{\tev[C_2]'}~\overline{\tau_b'}~\theta_2'~\eta_2', \notag \\
      &\quad\quad\quad\overline{b_f'} = \ftv{Q_2, (x\ty{}\tau_1'), \Gamma} \rightarrow \notag \\
      &\quad\quad\quad\Gamma_\upsilon^2 \vdash \overline{\tev[Q_2]'} \ty [\overline{b_f' \mapsto \tau_{b_f}'}]Q_2 \rightarrow \notag \\
      &\quad\quad\quad\theta_2' = [\overline{b \mapsto \tau_b'}, \overline{b_f' \mapsto \tau_{b_f}'}] \rightarrow \notag \\
      &\quad\quad\quad\Gamma_\upsilon^2 \vdash \overline{\tev[C_2]' \ty \theta_2'{}(C_2)} \rightarrow \notag \\
      &\quad\quad\quad\eta_2' = [\overline{d_2 \mapsto \tev[Q_2]'}] \text{ for each } (d_2 \ty C_2) \in Q_2 \rightarrow \notag \\
      &\quad\quad\quad\text{the principal type } \sigma' \text{ is context-unambiguous where } \dec{\epsilon}{(x \ty \tau_1'), \Gamma}{e'}{\sigma} \rightarrow \notag \\
      &\quad\quad\quad\taueq{\sigma}{\theta_1'{}(\tau_2')}{\theta_2{}(\tau_2)} \rightarrow \notag \\
      &\quad\quad\quad\evidenceCanonicity{}(\Gamma_\upsilon^1, \Gamma_\upsilon^2, \theta_1', \theta_2', \overline{\tev[Q_1]'}, (\overline{\tev[Q_2]'} \wedge \overline{\tev[C_2]'}), Q_1, Q_2 \wedge \overline{C_2}) \rightarrow \notag \\
      &\quad\quad\quad\tred{(\eta_1{}'(t_1'))}
        {(\eta_2'{}(t_2)~\overline{\upsilon_b'}~\overline{\tev[C_2]'})} \label{eq:abs-ih}
    \end{align}

    If we instantiate \cref{eq:abs-ih} with the following variables:
    $\overline{a_f'} = \overline{a_f}$,
    $\overline{\tau_{a_f}'} = \overline{\tau_{a_f}}$,
    $\overline{\tev[Q_1]'} = \overline{\tev[Q_1]'}$, $\theta_1' = \theta_1$,
    $\eta_1' = \eta_1$, $Q_2 = Q_2$, $\overline{b}$ empty, $\overline{C_2}$
    empty, $\tau_2 = \tau_2''$, $t_2 = t_2'$,
    $\overline{b_f'} = \overline{b_f}$,
    $\overline{\tau_{b_f}'} = \overline{\tau_{b_f}}$,
    $\overline{\tev[Q_2]'} = \overline{\tev[Q_2]}$, $\overline{\tev[C_2]}$
    empty, $\overline{\tau_b'}$ empty, $\theta_2' = \theta_2$,
    $\eta_2 = \eta_2$, and $\sigma'$ with the principal type $\sigma'$
    inferred by the algorithm (see \cref{thm:principal-types}), we
    must prove the following:

    \begin{align}
      &\overline{a_f} = \ftv{Q_1, (x\ty{}\tau_1'), \Gamma} \label{eq:abs-ih-ftv1} \\
      &\Gamma_\upsilon^1 \vdash \overline{\tev[Q_1]} \ty [\overline{a_f \mapsto \tau_{a_f}}]Q_1 \label{eq:abs-ih-tevQ1} \\
      &\theta_1 = [\overline{a_f \mapsto \tau_{a_f}}] \label{eq:abs-ih-theta1} \\
      &\eta_1 = [\overline{d_1 \mapsto \tev[Q_1]}] \text{ for each } (d_1 \ty C_1) \in Q_1 \label{eq:abs-ih-eta1} \\
      &\decelab{Q_2}{(x\ty{}\tau_1'), \Gamma}{e'}{\tau_2''}{t_2'} \label{eq:abs-ih-hty2} \\
      &\overline{b_f} = \ftv{Q_2, (x\ty{}\tau_1'), \Gamma} \label{eq:abs-ih-ftv2} \\
      &\Gamma_\upsilon^2 \vdash \overline{\tev[Q_2]} \ty [\overline{b_f \mapsto \tau_{b_f}}]Q_2 \label{eq:abs-ih-tevQ2} \\
      &\theta_2 = [\overline{b_f \mapsto \tau_{b_f}}] \label{eq:abs-ih-theta2} \\
      &\eta_2 = [\overline{d_2 \mapsto \tev[Q_2]}] \text{ for each } (d_2 \ty C_2) \in Q_2 \label{eq:abs-ih-eta2} \\
      &\text{the principal type } \sigma' \text{ is context-unambiguous where } \dec{\epsilon}{(x \ty \tau_1'), \Gamma}{e'}{\sigma'} \label{eq:abs-ih-ambi} \\
      &\taueq{\sigma'}{\theta_1{}(\tau_2')}{\theta_2{}(\tau_2'')} \label{eq:abs-ih-tau-eq} \\
      &\evidenceCanonicity{}(\Gamma_\upsilon^1, \Gamma_\upsilon^2, \theta_1, \theta_2, \overline{\tev[Q_1]}, \overline{\tev[Q_2]}, Q_1, Q_2) \label{eq:abs-ih-ev-canon}
    \end{align}
    To obtain:
    \begin{align}
      &\tred{(\eta_1{}(t_1'))}{(\eta_2{}(t_2'))} \label{eq:abs-ih-conclusion}
    \end{align}
    \cref{eq:abs-ih-ftv1} follows from \cref{eq:abs-ftv1} and the fact that
    any free type variables in $\tau'$ are also free type variables in
    $\Gamma$, \cref{eq:abs-ih-tevQ1} follows from \cref{eq:abs-tevQ1},
    \cref{eq:abs-ih-theta1} from \cref{eq:abs-theta1'}, \cref{eq:abs-ih-eta1}
    from \cref{eq:abs-eta1}, \cref{eq:abs-ih-hty2} from \cref{eq:abs-hty2} and
    \cref{eq:abs-tau1}, \cref{eq:abs-ih-ftv2} from \cref{eq:abs-ftv2} and
    \cref{eq:abs-tau1}, \cref{eq:abs-ih-tevQ2} from \cref{eq:abs-tevQ2},
    \cref{eq:abs-ih-theta2} from \cref{eq:abs-theta2'}, \cref{eq:abs-ih-eta2}
    from \cref{eq:abs-eta2}, \cref{eq:abs-ih-ambi} from \cref{eq:abs-ambi}
    using \cref{lemma:abs-ih-ambi}. \cref{eq:abs-ih-tau-eq} from
    \cref{eq:abs-tau-eq}, and \cref{eq:abs-ih-ev-canon} from
    \cref{eq:abs-ev-canon'}.

    If we apply \textsc{Abs} to \cref{eq:abs-ih-conclusion}, we get our goal \cref{eq:abs-goal}:
    \begin{align*}
      &\tred{(\lambda{}(x\ty{}\upsilon_1).\,\eta_1{}(t_1'))}
        {(\lambda{}(x\ty{}\upsilon_2).\,\eta_2{}(t_2'))}
    \end{align*}



  \item[\pcase{app-app}.] The \textsc{App} rule was last used in both derivations.
    We have that:
    \begin{align}
      &\decelab{Q_1}{\Gamma}{e_1'~e_2'}{\tau_2'}{t_1'~t_2'} \notag \\
      &\decelab{Q_1}{\Gamma}{e_1'}{\tau_1' \rightarrow \tau_2'}{t_1'} \label{eq:app-hty11} \\
      &\decelab{Q_1}{\Gamma}{e_2'}{\tau_1'}{t_2'} \label{eq:app-hty12} \\
      &\overline{a_f} = \ftv{Q_1, \Gamma} \label{eq:app-ftv1} \\
      &\Gamma_\upsilon^1 \vdash \overline{\tev[Q_1]} \ty [\overline{a_f \mapsto \tau_{a_f}}]Q_1 \label{eq:app-tevQ1} \\
      &\theta_1 = [\overline{a \mapsto \tau_a}, \overline{a_f \mapsto \tau_{a_f}}] \label{eq:app-theta1} \\
      &\Gamma_\upsilon^1 \vdash \overline{\tev[C_1] \ty \theta_1{}(C_1)} \label{eq:app-tevC1} \\
      &\eta_1 = [\overline{d_1 \mapsto \tev[Q_1]}] \text{ for each } (d_1 \ty C_1) \in Q_1 \label{eq:app-eta1} \\
      &\notag \\
      &\decelab{Q_2}{\Gamma}{e_1'~e_2'}{\tau_2''}{t_1''~t_2''} \notag \\
      &\decelab{Q_2}{\Gamma}{e_1'}{\tau_1'' \rightarrow \tau_2''}{t_1''} \label{eq:app-hty21} \\
      &\decelab{Q_2}{\Gamma}{e_2'}{\tau_1''}{t_2''} \label{eq:app-hty22} \\
      &\overline{b_f} = \ftv{Q_2, \Gamma} \label{eq:app-ftv2} \\
      &\Gamma_\upsilon^2 \vdash \overline{\tev[Q_2]} \ty [\overline{b_f \mapsto \tau_{b_f}}]Q_2 \label{eq:app-tevQ2} \\
      &\theta_2 = [\overline{b \mapsto \tau_b}, \overline{b_f \mapsto \tau_{b_f}}] \label{eq:app-theta2} \\
      &\Gamma_\upsilon^2 \vdash \overline{\tev[C_2] \ty \theta_2{}(C_2)} \label{eq:app-tevC2} \\
      &\eta_2 = [\overline{d_2 \mapsto \tev[Q_2]}] \text{ for each } (d_2 \ty C_2) \in Q_2 \label{eq:app-eta2} \\
      &\text{the principal type } \sigma \text{ is context-unambiguous where } \dec{\epsilon}{\Gamma}{e_1'~e_2'}{\sigma} \label{eq:app-ambi} \\
      &\taueq{\sigma}{\theta_1{}(\tau_2')}{\theta_2{}(\tau_2'')} \label{eq:app-tau-eq} \\
      &\evidenceCanonicity{}(\Gamma_\upsilon^1, \Gamma_\upsilon^2, \theta_1, \theta_2, (\overline{\tev[Q_1]} \wedge \overline{\tev[C_1]}), (\overline{\tev[Q_2]} \wedge \overline{\tev[C_2]}), Q_1 \wedge \overline{C_1}, Q_2 \wedge \overline{C_2}) \label{eq:app-ev-canon}
    \end{align}
    As $\forall\,\overline{a}.\,\overline{C_1} \Rightarrow \tau_1 = \tau_2'$
    and $\forall\,\overline{b}.\,\overline{C_2} \Rightarrow \tau_2 = \tau_2''$
    because of the \textsc{App} rule, we have that:
    \begin{align}
      &\overline{a} \text{ is empty} \label{eq:app-empty-a} \\
      &\overline{b} \text{ is empty} \label{eq:app-empty-b} \\
      &\overline{C_1} \text{ is empty} \label{eq:app-empty-C1} \\
      &\overline{C_2} \text{ is empty} \label{eq:app-empty-C2} \\
      &\tau_1 = \tau_2' \label{eq:app-tau-1-type} \\
      &\tau_2 = \tau_2'' \label{eq:app-tau-2-type}
    \end{align}
    From \cref{eq:app-empty-a} and \cref{eq:app-empty-b} follows that:
    \begin{align}
      &\overline{\tau_a} \text{ is empty} \label{eq:app-empty-tau-a} \\
      &\overline{\tau_b} \text{ is empty} \label{eq:app-empty-tau-b}
    \end{align}
    This means we can rewrite \cref{eq:app-theta1} and \cref{eq:app-theta2} to:
    \begin{align}
      &\theta_1 = [\overline{a_f \mapsto \tau_{a_f}}] \label{eq:app-theta1'} \\
      &\theta_2 = [\overline{b_f \mapsto \tau_{b_f}}] \label{eq:app-theta2'}
    \end{align}
    From \cref{eq:app-empty-C1} and \cref{eq:app-empty-C2} it that:
    \begin{align}
      &\overline{\tev[C_1]} \text{ is empty} \label{eq:app-empty-tevC1} \\
      &\overline{\tev[C_2]} \text{ is empty} \label{eq:app-empty-tevC2}
    \end{align}
    This means we can omit \cref{eq:app-tevC1} and \cref{eq:app-tevC2}.

    We can also rewrite \cref{eq:app-ev-canon} to:
    \begin{align}
      &\evidenceCanonicity{}(\Gamma_\upsilon^1, \Gamma_\upsilon^2, \theta_1, \theta_2, \overline{\tev[Q_1]}, \overline{\tev[Q_2]}, Q_1, Q_2) \label{eq:app-ev-canon'}
    \end{align}
    From \cref{lemma:ambi-app1}, \cref{lemma:ambi-app2}, and
    \cref{eq:app-ambi}, we have that:
    \begin{align}
      &\text{the principal type } \sigma_1 \text{ is context-unambiguous where } \dec{\epsilon}{\Gamma}{e_1'}{\sigma_1} \label{eq:app-ambi-1} \\
      &\text{the principal type } \sigma_2 \text{ is context-unambiguous where } \dec{\epsilon}{\Gamma}{e_2'}{\sigma_2} \label{eq:app-ambi-2}
    \end{align}
    As we will be needing them, we first prove that:
    \begin{align}
      &\taueq{\sigma_1}{\theta_1{}(\tau_1' \rightarrow \tau_2')}{\theta_2{}(\tau_1'' \rightarrow \tau_2'')} \label{eq:app-tau-12-eq} \\
      &\taueq{\sigma_2}{\theta_1{}(\tau_1')}{\theta_2{}(\tau_1'')} \label{eq:app-tau-1-eq}
    \end{align}
    These two statements follow from applying \cref{lemma:tau-eq-app12} and
    \cref{lemma:tau-eq-app1} with \cref{eq:app-hty11}, \cref{eq:app-hty21},
    \cref{eq:app-hty12}, \cref{eq:app-hty22}, \cref{eq:app-tau-eq}, and
    \cref{eq:app-ambi-1}.

    We must now prove that:
    \begin{align}
      &\tred{\eta_1{}(t_1'~t_2')}{\eta_2{}(t_1''~t_2'')} \label{eq:app-goal}
    \end{align}
    Which we can rewrite to:
    \begin{align}
      &\tred{(\eta_1{}(t_1')~\eta_1{}(t_2'))}{(\eta_2{}(t_1'')~\eta_2{}(t_2''))} \label{eq:app-goal'}
    \end{align}

    Our induction hypotheses, simplified using \cref{eq:app-empty-a},
    \cref{eq:app-empty-C1}, \cref{eq:app-tau-1-type},
    \cref{eq:app-empty-tau-a}, and \cref{eq:app-empty-tevC1}, are as follows:

    \begin{align}
      &\forall~\overline{a_f'}~\overline{\tau_{a_f}'}~\overline{\tev[Q_1]'}~\theta_1'~\eta_1'~\sigma', \notag \\
      &\quad\overline{a_f'} = \ftv{Q_1, \Gamma} \rightarrow \notag \\
      &\quad\Gamma_\upsilon^1 \vdash \overline{\tev[Q_1]'} \ty [\overline{a_f' \mapsto \tau_{a_f}'}]Q_1 \rightarrow \notag \\
      &\quad\theta_1' = [\overline{a_f' \mapsto \tau_{a_f}'}] \rightarrow \notag \\
      &\quad\eta_1' = [\overline{d_1 \mapsto \tev[Q_1]'}] \text{ for each } (d_1 \ty C_1) \in Q_1 \rightarrow \notag \\
      &\quad\forall~Q_2~\overline{b}~\overline{C_2}~\tau_2~t_2, \notag \\
      &\quad\quad\decelab{Q_2}{\Gamma}{e_1'}{\forall\,\overline{b}.\,\overline{C_2} \Rightarrow \tau_2}{t_2} \rightarrow \notag \\
      &\quad\quad\forall~\overline{b_f'}~\overline{\tau_{b_f}'}~\overline{\tev[Q_2]'}~\overline{\tev[C_2]'}~\overline{\tau_b'}~\theta_2'~\eta_2', \notag \\
      &\quad\quad\quad\overline{b_f'} = \ftv{Q_2, \Gamma} \rightarrow \notag \\
      &\quad\quad\quad\Gamma_\upsilon^2 \vdash \overline{\tev[Q_2]'} \ty [\overline{b_f' \mapsto \tau_{b_f}'}]Q_2 \rightarrow \notag \\
      &\quad\quad\quad\theta_2' = [\overline{b \mapsto \tau_b'}, \overline{b_f' \mapsto \tau_{b_f}'}] \rightarrow \notag \\
      &\quad\quad\quad\Gamma_\upsilon^2 \vdash \overline{\tev[C_2]' \ty \theta_2'{}(C_2)} \rightarrow \notag \\
      &\quad\quad\quad\eta_2' = [\overline{d_2 \mapsto \tev[Q_2]'}] \text{ for each } (d_2 \ty C_2) \in Q_2 \rightarrow \notag \\
      &\quad\quad\quad\text{the principal type } \sigma' \text{ is context-unambiguous where } \dec{\epsilon}{\Gamma}{e_1'}{\sigma'} \rightarrow \notag \\
      &\quad\quad\quad\taueq{\sigma'}{\theta_1'{}(\tau_1' \rightarrow \tau_2')}{\theta_2{}(\tau_2)} \rightarrow \notag \\
      &\quad\quad\quad\evidenceCanonicity{}(\Gamma_\upsilon^1, \Gamma_\upsilon^2, \theta_1', \theta_2', \overline{\tev[Q_1]'}, (\overline{\tev[Q_2]'} \wedge \overline{\tev[C_2]'}), Q_1, Q_2 \wedge \overline{C_2}) \rightarrow \notag \\
      &\quad\quad\quad\tred{(\eta_1{}'(t_1'))}
        {(\eta_2'{}(t_2)~\overline{\upsilon_b'}~\overline{\tev[C_2]'})} \label{eq:app-ih1}
    \end{align}

    \begin{align}
      &\forall~\overline{a_f'}~\overline{\tau_{a_f}'}~\overline{\tev[Q_1]'}~\theta_1'~\eta_1'~\sigma', \notag \\
      &\quad\overline{a_f'} = \ftv{Q_1, \Gamma} \rightarrow \notag \\
      &\quad\Gamma_\upsilon^1 \vdash \overline{\tev[Q_1]'} \ty [\overline{a_f' \mapsto \tau_{a_f}'}]Q_1 \rightarrow \notag \\
      &\quad\theta_1' = [\overline{a_f' \mapsto \tau_{a_f}'}] \rightarrow \notag \\
      &\quad\eta_1' = [\overline{d_1 \mapsto \tev[Q_1]'}] \text{ for each } (d_1 \ty C_1) \in Q_1 \rightarrow \notag \\
      &\quad\forall~Q_2~\overline{b}~\overline{C_2}~\tau_2~t_2, \notag \\
      &\quad\quad\decelab{Q_2}{\Gamma}{e_2'}{\forall\,\overline{b}.\,\overline{C_2} \Rightarrow \tau_2}{t_2} \rightarrow \notag \\
      &\quad\quad\forall~\overline{b_f'}~\overline{\tau_{b_f}'}~\overline{\tev[Q_2]'}~\overline{\tev[C_2]'}~\overline{\tau_b'}~\theta_2'~\eta_2', \notag \\
      &\quad\quad\quad\overline{b_f'} = \ftv{Q_2, \Gamma} \rightarrow \notag \\
      &\quad\quad\quad\Gamma_\upsilon^2 \vdash \overline{\tev[Q_2]'} \ty [\overline{b_f' \mapsto \tau_{b_f}'}]Q_2 \rightarrow \notag \\
      &\quad\quad\quad\theta_2' = [\overline{b \mapsto \tau_b'}, \overline{b_f' \mapsto \tau_{b_f}'}] \rightarrow \notag \\
      &\quad\quad\quad\Gamma_\upsilon^2 \vdash \overline{\tev[C_2]' \ty \theta_2'{}(C_2)} \rightarrow \notag \\
      &\quad\quad\quad\eta_2' = [\overline{d_2 \mapsto \tev[Q_2]'}] \text{ for each } (d_2 \ty C_2) \in Q_2 \rightarrow \notag \\
      &\quad\quad\quad\text{the principal type } \sigma' \text{ is context-unambiguous where } \dec{\epsilon}{\Gamma}{e_2'}{\sigma'} \rightarrow \notag \\
      &\quad\quad\quad\taueq{\sigma'}{\theta_1'{}(\tau_1')}{\theta_2{}(\tau_2)} \rightarrow \notag \\
      &\quad\quad\quad\evidenceCanonicity{}(\Gamma_\upsilon^1, \Gamma_\upsilon^2, \theta_1', \theta_2', \overline{\tev[Q_1]'}, (\overline{\tev[Q_2]'} \wedge \overline{\tev[C_2]'}), Q_1, Q_2 \wedge \overline{C_2}) \rightarrow \notag \\
      &\quad\quad\quad\tred{(\eta_1{}'(t_2'))}
        {(\eta_2'{}(t_2)~\overline{\upsilon_b'}~\overline{\tev[C_2]'})} \label{eq:app-ih2}
    \end{align}

    If we instantiate our first induction hypothesis \cref{eq:app-ih1} with
    the following variables: $\overline{a_f'} = \overline{a_f}$,
    $\overline{\tau_{a_f}'} = \overline{\tau_{a_f}}$,
    $\overline{\tev[Q_1]'} = \overline{\tev[Q_1]'}$, $\theta_1' = \theta_1$,
    $\eta_1' = \eta_1$, $Q_2 = Q_2$, $\overline{b}$ empty, $\overline{C_2}$
    empty, $\tau_2 = (\tau_1'' \rightarrow \tau_2'')$, $t_2 = t_1''$,
    $\overline{b_f'} = \overline{b_f}$,
    $\overline{\tau_{b_f}'} = \overline{\tau_{b_f}}$,
    $\overline{\tev[Q_2]'} = \overline{\tev[Q_2]}$, $\overline{\tev[C_2]}$
    empty, $\overline{\tau_b'}$ empty, $\theta_2' = \theta_2$,
    $\eta_2 = \eta_2$, and $\sigma' = \sigma_1$, we must prove the following:

    \begin{align}
      &\overline{a_f} = \ftv{Q_1, \Gamma} \label{eq:app-ih1-ftv1} \\
      &\Gamma_\upsilon^1 \vdash \overline{\tev[Q_1]} \ty [\overline{a_f \mapsto \tau_{a_f}}]Q_1 \label{eq:app-ih1-tevQ1} \\
      &\theta_1 = [\overline{a_f \mapsto \tau_{a_f}}] \label{eq:app-ih1-theta1} \\
      &\eta_1 = [\overline{d_1 \mapsto \tev[Q_1]}] \text{ for each } (d_1 \ty C_1) \in Q_1 \label{eq:app-ih1-eta1} \\
      &\decelab{Q_2}{\Gamma}{e_1'}{\tau_1'' \rightarrow \tau_2''}{t_1''} \label{eq:app-ih1-hty2} \\
      &\overline{b_f} = \ftv{Q_2, \Gamma} \label{eq:app-ih1-ftv2} \\
      &\Gamma_\upsilon^2 \vdash \overline{\tev[Q_2]} \ty [\overline{b_f \mapsto \tau_{b_f}}]Q_2 \label{eq:app-ih1-tevQ2} \\
      &\theta_2 = [\overline{b_f \mapsto \tau_{b_f}}] \label{eq:app-ih1-theta2} \\
      &\eta_2 = [\overline{d_2 \mapsto \tev[Q_2]}] \text{ for each } (d_2 \ty C_2) \in Q_2 \label{eq:app-ih1-eta2} \\
      &\text{the principal type } \sigma_1 \text{ is context-unambiguous where } \dec{\epsilon}{\Gamma}{e_1'}{\sigma_1} \label{eq:app-ih1-ambi} \\
      &\taueq{\sigma_1}{\theta_1{}(\tau_1' \rightarrow \tau_2')}{\theta_2{}(\tau_1'' \rightarrow \tau_2'')} \label{eq:app-ih1-tau-eq} \\
      &\evidenceCanonicity{}(\Gamma_\upsilon^1, \Gamma_\upsilon^2, \theta_1, \theta_2, \overline{\tev[Q_1]}, \overline{\tev[Q_2]}, Q_1, Q_2) \label{eq:app-ih1-ev-canon}
    \end{align}
    To obtain:
    \begin{align}
      &\tred{(\eta_1{}(t_1'))}{(\eta_2{}(t_1''))} \label{eq:app-ih1-conclusion}
    \end{align}
    \cref{eq:app-ih1-ftv1} follows from \cref{eq:app-ftv1},
    \cref{eq:app-ih1-tevQ1} follows from \cref{eq:app-tevQ1},
    \cref{eq:app-ih1-theta1} from \cref{eq:app-theta1'},
    \cref{eq:app-ih1-eta1} from \cref{eq:app-eta1}, \cref{eq:app-ih1-hty2}
    from \cref{eq:app-hty21}, \cref{eq:app-ih1-ftv2} from \cref{eq:app-ftv2},
    \cref{eq:app-ih1-tevQ2} from \cref{eq:app-tevQ2}, \cref{eq:app-ih1-theta2}
    from \cref{eq:app-theta2'}, \cref{eq:app-ih1-eta2} from
    \cref{eq:app-eta2}, \cref{eq:app-ih1-ambi} from \cref{eq:app-ambi-1},
    \cref{eq:app-ih1-tau-eq} from \cref{eq:app-tau-12-eq}, and
    \cref{eq:app-ih1-ev-canon} from \cref{eq:app-ev-canon'}.

    If we instantiate \cref{eq:app-ih2} with the following variables:
    $\overline{a_f'} = \overline{a_f}$,
    $\overline{\tau_{a_f}'} = \overline{\tau_{a_f}}$,
    $\overline{\tev[Q_1]'} = \overline{\tev[Q_1]'}$, $\theta_1' = \theta_1$,
    $\eta_1' = \eta_1$, $Q_2 = Q_2$, $\overline{b}$ empty, $\overline{C_2}$
    empty, $\tau_2 = \tau_1''$, $t_2 = t_2''$,
    $\overline{b_f'} = \overline{b_f}$,
    $\overline{\tau_{b_f}'} = \overline{\tau_{b_f}}$,
    $\overline{\tev[Q_2]'} = \overline{\tev[Q_2]}$, $\overline{\tev[C_2]}$
    empty, $\overline{\tau_b'}$ empty, $\theta_2' = \theta_2$,
    $\eta_2 = \eta_2$, and $\sigma' = \sigma_2$, we must prove the following:

    \begin{align}
      &\overline{a_f} = \ftv{Q_1, \Gamma} \label{eq:app-ih2-ftv1} \\
      &\Gamma_\upsilon^1 \vdash \overline{\tev[Q_1]} \ty [\overline{a_f \mapsto \tau_{a_f}}]Q_1 \label{eq:app-ih2-tevQ1} \\
      &\theta_1 = [\overline{a_f \mapsto \tau_{a_f}}] \label{eq:app-ih2-theta1} \\
      &\eta_1 = [\overline{d_1 \mapsto \tev[Q_1]}] \text{ for each } (d_1 \ty C_1) \in Q_1 \label{eq:app-ih2-eta1} \\
      &\decelab{Q_2}{\Gamma}{e_2'}{\tau_1''}{t_2''} \label{eq:app-ih2-hty2} \\
      &\overline{b_f} = \ftv{Q_2, \Gamma} \label{eq:app-ih2-ftv2} \\
      &\Gamma_\upsilon^2 \vdash \overline{\tev[Q_2]} \ty [\overline{b_f \mapsto \tau_{b_f}}]Q_2 \label{eq:app-ih2-tevQ2} \\
      &\theta_2 = [\overline{b_f \mapsto \tau_{b_f}}] \label{eq:app-ih2-theta2} \\
      &\eta_2 = [\overline{d_2 \mapsto \tev[Q_2]}] \text{ for each } (d_2 \ty C_2) \in Q_2 \label{eq:app-ih2-eta2} \\
      &\text{the principal type } \sigma_2 \text{ is context-unambiguous where } \dec{\epsilon}{\Gamma}{e_2}{\sigma_2} \label{eq:app-ih2-ambi} \\
      &\taueq{\sigma_2}{\theta_1{}(\tau_1')}{\theta_2{}(\tau_1'')} \label{eq:app-ih2-tau-eq} \\
      &\evidenceCanonicity{}(\Gamma_\upsilon^1, \Gamma_\upsilon^2, \theta_1, \theta_2, \overline{\tev[Q_1]}, \overline{\tev[Q_2]}, Q_1, Q_2) \label{eq:app-ih2-ev-canon}
    \end{align}
    To obtain:
    \begin{align}
      &\tred{(\eta_1{}(t_2'))}{(\eta_2{}(t_2''))} \label{eq:app-ih2-conclusion}
    \end{align}
    \cref{eq:app-ih2-ftv1} follows from \cref{eq:app-ftv1},
    \cref{eq:app-ih2-tevQ1} follows from \cref{eq:app-tevQ1},
    \cref{eq:app-ih2-theta1} from \cref{eq:app-theta1'},
    \cref{eq:app-ih2-eta1} from \cref{eq:app-eta1}, \cref{eq:app-ih2-hty2}
    from \cref{eq:app-hty22}, \cref{eq:app-ih2-ftv2} from \cref{eq:app-ftv2},
    \cref{eq:app-ih2-tevQ2} from \cref{eq:app-tevQ2}, \cref{eq:app-ih2-theta2}
    from \cref{eq:app-theta2'}, \cref{eq:app-ih2-eta2} from
    \cref{eq:app-eta2}, \cref{eq:app-ih2-ambi} from \cref{eq:app-ambi-2},
    \cref{eq:app-ih2-tau-eq} from \cref{eq:app-tau-1-eq}, and
    \cref{eq:app-ih2-ev-canon} from \cref{eq:app-ev-canon'}.

    If we now apply the \textsc{App} rule with
    \cref{eq:app-ih1-conclusion} and \cref{eq:app-ih2-conclusion}, we get our goal \cref{eq:app-goal'}
    \begin{align*}
      &\tred{(\eta_1{}(t_1')~\eta_1{}(t_2'))}{(\eta_2{}(t_1'')~\eta_2{}(t_2''))}
    \end{align*}



    \end{enumerate}

\end{proof}

\section*{Acknowledgements}
This research is partially funded by the Research Fund KU Leuven and the Agency for Innovation by Science and Technology in Flanders (IWT).
Dominique Devriese holds a postdoctoral fellowship of the Research Foundation - Flanders (FWO).

\bibliographystyle{ACM-Reference-Format}\bibliography{../paper/refs} %

\end{document}